\documentclass[a4paper,UKenglish,thm-restate]{lipics-v2021}
\usepackage[utf8]{inputenc}
\usepackage{amsmath}
\usepackage{amsfonts}
\usepackage{amsthm}
\usepackage{graphicx}
\usepackage{wrapfig}
\usepackage{makecell}
\usepackage{xspace}
\usepackage{caption}
\usepackage{imakeidx}

\usepackage[dvipsnames]{xcolor}

\usepackage{todonotes}
\usepackage[shortlabels]{enumitem}

\hideLIPIcs
\nolinenumbers

\newcommand{\mypar}[1]{\medskip\noindent{\sffamily\bfseries\boldmath#1}}

\graphicspath{{./Figures/}}

\newcommand{\frechet}{Fr\'{e}chet\xspace}

\DeclareMathOperator{\opt}{{\rm opt}}

\bibliographystyle{plainurl}%

\title{Barking dogs: A \frechet distance variant for detour detection}
\titlerunning{Barking dogs: A \frechet distance variant for detour detection}

\author{Ivor van der Hoog}{Technical University of Denmark, Denmark}{idjva@dtu.dk}{https://orcid.org/0009-0006-2624-0231}{This project has received funding from the European Union's Horizon 2020 research and innovation programme under the Marie Sk\l{}odowska-Curie grant agreement No 899987.}
\author{Fabian Klute}{Universitat Politècnica de Catalunya, Spain}{fabian.klute@upc.edu}{https://orcid.org/0000-0002-7791-3604}{F.K. is supported by a ``María Zambrano grant for attracting international talent''.}
\author{Irene Parada}{Universitat Politècnica de Catalunya, Spain}{irene.parada@upc.edu}{https://orcid.org/0000-0003-3147-0083}{I.P. is a Serra Húnter fellow.}
\author{Patrick Schnider}{Department of Computer Science, ETH Z\"{u}rich, Switzerland}{patrick.schnider@inf.ethz.ch}{https://orcid.org/0000-0002-2172-9285}{}

\Copyright{Ivor van der Hoog, Fabian Klute, Irene Parada, and Patrick Schnider} 
\authorrunning{I. van der Hoog, F. Klute, I. Parada, and P. Schnider} 

\ccsdesc[500]{ Theory of computation ~ Design and analysis of algorithms}

\keywords{
Outlier detection,
similarity measures,
Dynamic Time Warping,
Fr\'echet distance,
algorithms,
computational geometry
}

\category{}

\relatedversion{}

\supplement{}

\acknowledgements{}

\funding{}

\newtheorem{invariant}{Invariant}

\begin{document}
\maketitle

\begin{abstract}
Imagine you are a dog behind a fence $Q$ and a hiker is passing by at constant speed along the hiking path $P$.
In order to fulfil your duties as a watchdog, you desire to bark as long as possible at the human.
However, your barks can only be heard in a fixed radius $\rho$ and, as a dog, you have bounded speed $s$. 
Can you optimize your route along the fence $Q$ in order to maximize the barking time with radius $\rho$, 
assuming you can run backwards and forward at speed at most $s$?

We define the barking distance from a polyline $P$ on $n$ vertices to a polyline $Q$ on $m$ vertices as the time
that the hiker stays in your barking radius if you run optimally along $Q$.
This asymmetric similarity measure between two curves can be used to detect outliers in $Q$ compared to $P$ that
other established measures like the Fr\'echet distance and Dynamic Time Warping fail to capture at times.
We consider this measure in three different settings. 
In the discrete setting, the traversals of $P$ and $Q$ are both discrete. 
For this case we show that the barking distance from $P$ to $Q$ can be computed in $O(nm\log s)$ time.
In the semi-discrete setting, the traversal of $Q$ is continuous while the one of $P$ is again discrete.
Here, we show how to compute the barking distance in time $O(nm\log (nm))$.
Finally, in the continuous setting in which both traversals are continuous, we show that the problem can be solved in polynomial time.
For all the settings we show that, assuming SETH, no truly subquadratic algorithm can exist.
\end{abstract}

\section{Introduction}

A \emph{curve} is any sequence of points in $\mathbb{R}^d$ where consecutive points are connected by their line segment. 
Curves may be used to model a variety of real-world input such as trajectories~\cite{takeuchi2021frechet}, handwriting~\cite{sriraghavendra2007frechet, zheng2008algorithm} and even strings~\cite{bringmann2015quadratic}.
Curves in $\mathbb{R}^1$ may be seen as \emph{time series} which model data such as music samples~\cite{serra2008chroma}, the financial market~\cite{taylor2008modelling} and seismologic data~\cite{yilmaz2001seismic}. 
A common way to analyse data that can be modeled as curves is to deploy a curve similarity measure, which for any pair of curves series $(P, Q)$ reports a real number (where the number is lower the more `similar' $P$ and $Q$ are).
Such similarity measures are a building block for common analysis techniques such as clustering~\cite{driemel2016clustering, wang2018time}, classification~\cite{adistambha2008motion,jeong2011weighted, kate2016using} or simplification~\cite{agarwal2005near,cheng2023curve,van2018global}.
The two most popular similarity measures for curve analysis are the Fr\'{e}chet distance and the Dynamic Time Warping (DTW) distance. 
The discrete Fr\'{e}chet distance for two curves $P = (p_1,\ldots,p_n)$ and $Q = (q_1,\ldots,q_m)$ is illustrated as follows. Imagine a dog walking along $Q$ and its owner walking along $P$. Both owner and dog start at the beginning of their curves, and in each step the owner may stay in place or jump to the next point along $P$ and the dog may stay in place or jump to the next vertex along $Q$, until both of them have reached the end of their curves. 
Intuitively, the \frechet distance is the minimal length of the leash between the dog and its owner. 
The DTW distance is defined analogously but sums over all leash lengths instead.

\mypar{Continuous distance measures.} The downside of discrete distance measures is that they are highly dependent on the sample rate of the curve. 
Both 
distance measures 
can be made continuous by defining a traversal as continuous monotone functions $f : [0, 1] \to P$ and $g : [0, 1] \to Q$ which start and end at the respective start and end of the curve. 
Given a traversal $(f, g)$, the cost of a traversal under the Fr\'{e}chet  distance is then the maximum over all $t \in [0, 1]$ of the distance between the dog and its owner at time $t$.
The DTW distance may also be made continuous, by defining the cost of a traversal as the integral over $[0, 1]$ of the distance between the dog and its owner. 
However, such a direct translation from discrete to continuous traversals invites degenerate behavior. 
To avoid such degeneracies, Buchin~\cite{buchin2007computability} proposed several variants of continuous DTW distances (originally called \emph{average Fr\'{e}chet} distance) that each penalise the speed of the dog and its owner.

\mypar{Computing Fr\'{e}chet and dynamic time warping distance.} The time complexity of computing the Fr\'{e}chet and DTW distance of two static curves is well-understood: 
Given two curves $P$ and $Q$ of length $n$ and $m$, we can compute their DTW distance and discrete  Fr\'{e}chet distance in time $O(nm)$ by dynamic programming approaches. This running time is almost the best known, up to mild improvements~\cite{DBLP:journals/siamcomp/AgarwalAKS14,GoldS18}. 
For continuous \frechet distance the best known algorithms run in slightly superquadratic time~\cite{buchin2014four}.
Bringmann~\cite{bringmann2014walking} showed that, conditioned on SETH, one cannot compute even a $(1 + \varepsilon)$-approximation of the Fr\'{e}chet distance between curves in $\mathbb{R}^2$ under the $L_1, L_2, L_\infty$ metric faster than $\Omega( (nm)^{1 - \delta})$ time for any $\delta > 0$. 
This lower bound was extended by Bringmann and Mulzer~\cite{bringmann2016approximability} to intersecting curves in $\mathbb{R}^1$. 
Buchin, Ophelders and Speckmann~\cite{buchin2019seth} show this lower bound for pairwise disjoint planar curves in $\mathbb{R}^2$ and intersecting curves in $\mathbb{R}^1$.
For the DTW distance similar SETH- or OVH-conditioned lower bounds exist~\cite{abboud2015tight,bringmann2015quadratic}  (even for curves in~$\mathbb{R}^1$).
Eor the DTW distanc, even if we relax the goal to a constant-factor approximation no algorithm running in truly subquadratic time is known (see \cite{kuszmaul2019dynamic} for polynomial-factor approximation algorithms and approximation algorithms for restricted input models), though for this approximate setting conditional lower bounds are still amiss.
Recently, a subquadratic constant-approximation for the continuous \frechet distance was presented by van der Horst et al.~\cite{DBLP:conf/soda/HorstKOS23}.

\mypar{Downsides of existing similarity measures.} The existing curve similarity measures each have their corresponding drawback:
The Fr\'{e}chet distance is not robust versus outliers. 
The discrete DTW distance is heavily dependent on the sampling rate. %
The continuous DTW variants are
robust to outliers, but they are difficult to compute~\cite{buchinCont}. 

\mypar{A measure for outliers.} We present a new curve similarity measure, specifically designed for computing similarities between curves under outliers, by applying a natural thresholding technique. Our algorithmic input receives in addition to $P$ and $Q$ some threshold $\rho$. If there exists a traversal of $P$ and $Q$ where at all times the dog and its owner are within distance $\rho$ then their distance is $0$. Otherwise, the distance is the minimal time spent where the dog and its owner are more than $\rho$ apart. 
Formally, we define a threshold function $\theta_\rho(p, q)$ which is $1$ whenever $d(p, q) > \rho$ and zero otherwise. Given a traversal $(f, g)$, the cost of the traversal is then the integral over $[0, 1]$ of $\theta_\rho(f(t), g(t))$. 
To avoid the same degenerate behavior as under the continuous DTW distance, we restrict the speeds of the dog and its owner. In particular, we create an asymmetric measure where we set the speed of the human to be constant and upper bound the speed of the dog by some input variable $s$.

\section{Preliminaries}

We consider $\mathbb{R}^d$, for constant $d$, where for two points $p, q \in \mathbb{R}^d$ we denote by $d(p, q)$ their Euclidean distance. 
We define a \emph{curve} as any ordered set of points (vertices) in $\mathbb{R}^d$.
When sequential vertices are connected by their line segment, this forms a polyline. 
For any curve $P$, $S$ is a subcurve of $P$ whenever its vertices are a subsequence of those of $P$. 
As algorithmic input we receive two curves $P = (p_1, p_2, \ldots, p_n)$ and $Q = (q_1, q_2, \ldots q_m)$.
By $|P|$ and $|Q|$ we denote their Euclidean length (the sum of the length of all the edges in the polyline). 
By $P[t]$ ($Q[t])$ we denote a continuous constant-speed traversal of $P$ ($Q$). That is, $P[t]$ is a continuous, monotone function for $t \in [1, n]$ to the points on the polyline defined by $P$ where $P[1] = p_1$, $P[n] = p_n$ and for all $t, t' \in [1, n]$ the distance between $P[t]$ and $P[t']$ (along $P$) is equal to $\frac{|P|}{|t - t'|}$.

\mypar{Discrete walks and reparametrizations.} Given two curves $P$ and $Q$, we define discrete walks and reparametrizations. 
First, consider the $n \times m$ integer lattice embedded in $\mathbb{R}^2$. 
We can construct a graph $G_{nm}$ over this lattice where the vertices are all lattice points and two lattice points $l_1, l_2$ share an edge whenever $d(l_1, l_2) \leq \sqrt{2}$. 

\begin{definition}
    \label{def:walks}
    For curves $P$ and $Q$, a \emph{discrete reparametrization} $F$ is any walk in $G_{nm}$ from $(1, 1)$ to $(n, m)$.
    $F$ is a curve in $\mathbb{R}^2$ and it is $x$-monotone whenever its embedding is. 
    The \emph{speed} $\sigma(F)$ is the size $|S|$ for the largest horizontal or vertical subcurve $S \subseteq F$. 
\end{definition}

\begin{definition}
    For a curve  $P$, a reparametrization of $P$ is a differentiable (monotone) function $\alpha : [0, 1] \to [1, |P|]$  where $\alpha(0) = 1$ and $\alpha(1) = |P|$. 
    For any two curves $(P,Q)$ with reparametrizations $(\alpha, \beta)$ we define the \emph{speed} as: $\sigma(\alpha, \beta) := \max\limits_{t \in [0, 1] }  \max \{ \alpha'(t), \beta'(t) \}$. 
\end{definition}

\begin{definition}
    \label{def:onesided}
    For a curve  $P$, a one-sided reparametrization (OSR) from $P$ to $Q$ is a differentiable (monotone) function $\gamma : [1, |P|] \to [1, |Q|]$ where $\gamma(1) = 1$ and $\gamma(|P|) = |Q|$. 
    For any two curves $(P,Q)$ with reparametrization $\gamma$ we define the \emph{speed} as $\sigma(\gamma) := \max\limits_{t \in [1, |P|] } | \gamma'(t) |$. 
\end{definition}

\mypar{Introducing speed restrictions to overcome degeneracies.} As discussed above any sensible definition of continuous DTW distance needs to somehow restrict the \emph{speed} of the reparamertization.
There have been several attempts at this. %
Buchin~\cite[Chapter 6]{buchin2007computability} proposed three ways of penalising using near-infinite speed by incorporating a multiplicative penalty for the speed in the definition of continuous DTW. 
Buchin, Nusser and Wong~\cite{buchinCont} present the current best algorithm for computing the first variant in $\mathbb{R}^1$, with a running time of $O(n^5)$. 
More closely related to our upcoming definition is the third definition in~\cite{buchin2007computability} which is one-sided and therefore asymmetric.
Alternatively, the speed of the reparametrization may simply be capped by some variable $s$.

\mypar{Outlier detection: speed bounds and thresholds.} The \frechet distance and DTW are both distance measures that are often used to compare sampled (trajectory) data.
However, in practice sampled trajectory data often contains measurement errors, leading to outliers or detours in the data, which we would like to detect. For this consider two polylines $P$ and $Q$, where we think of $P$ as the intended trajectory and $Q$ as a potentially faulty sample. We want to determine whether $Q$ is an accurate sample for $P$ or detect outliers if there are any.
Unfortunately, as can be seen in Figure~\ref{fig:outlier}, both the \frechet distance as well as (continuous) DTW are not adequate measures for this task.

\begin{figure}
    \centering
    \includegraphics[width=\textwidth]{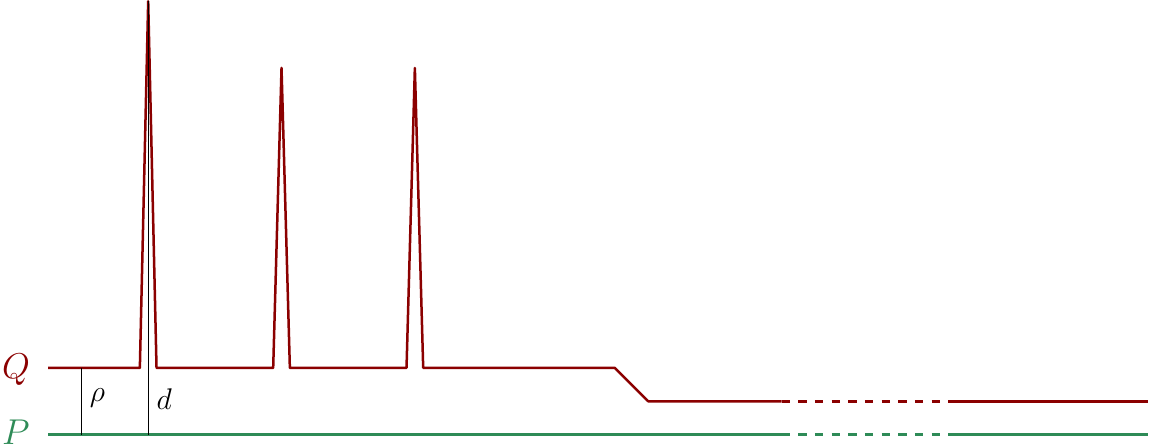}
    \caption{An intended trajectory $P$ and a faulty sample $Q$ of it. The \frechet distance between $P$ and $Q$ is $d$ and captures the first detour, but fails to capture the others. Continuous DTW, even with a speed bound, cannot distinguish $Q$ from a copy of $P$ translated by $\rho$ if the right part is sufficiently long. Barking distance with barking radius $\rho$ however captures all three detours.}
    \label{fig:outlier}
\end{figure}

\mypar{Defining Barking Distance.} The barking distance stems from the following illustration, which is again dog-based:\footnote{This illustration is inspired by a dog that some of the authors met while on a hike.} assume you are hiking with constant speed along a curve $P$. A dog is running at bounded speed on a curve $Q$, constantly barking at you. However, the dogs barks can only be heard within radius $\rho \in \mathbb{R}$. 
The dog tries to optimize its route in order to maximize the time you hear it. This maximum time is the barking distance of $P$ to $Q$. 
Formally, for $\rho \in \mathbb{R}$ we define the \emph{threshold} function as follows:
\[   
\theta_\rho(p,q) = 
     \begin{cases}
       1 &\quad\text{if } d(p, q) > \rho \\
       0 &\quad \text{ otherwise}.
     \end{cases}
\]

Just as for \frechet and DTW distance, we now define our distance measure in the continuous and discrete variants.
Observe that our definition includes a speed constraint $s$, to avoid the aforementioned degeneracy where one avoids detecting outliers by traversing through them at high speed. 
We choose a one-sided definition.
Like this, our definition stays close to the asymmetrical behavior of our hiker and dog.
Additionally, for detecting outliers, an asymmetric measure may also be more desirable since the input is inherently asymmetric.

\begin{definition}
    \label{def:discrete}
    For curves $P$ and $Q$, denote by $\mathbb{F}$ the set of all pairs of discrete $x$-monotone reparametrizations of $(P, Q)$. 
    For any $\rho, s \in \mathbb{R}$, the discrete barking distance is defined as:
\[
\mathbb{D}^s_B(P, Q) =  \min_{
\substack{ F \in \mathbb{F}  \\  \sigma(F) \leq s}} \sum_{(i, j) \in F} \theta_\rho(p_i, q_j).
\]
\end{definition}

\begin{definition}
    For curves $P$ and $Q$, denote by $\mathbb{G}$ the set of all one-sided reparametrizations from $P$ to $Q$. 
 For any $\rho, s \in \mathbb{R}$, the continuous barking distance is defined as:
\[
\mathcal{D}_B^s(P, Q) =  \min\limits_{ \substack{ \gamma \in \mathbb{G}  \\  \sigma(\gamma) \leq s}} \int_0^{|P|} \theta_\rho(P[t], Q[\gamma(t)]) \, dt. 
\]
\end{definition}

We also define a version of the barking distance that is in between the discrete and continuous barking distance.
In this \emph{semi}-discrete barking distance definition, the person has a discrete constant traversal of $P$ (spending $|(p_i, p_{i+1})|$ time at vertex $p_i$, before continuing to vertex $p_{i+1}$) 
whilst the traversal of the dog is continuous.  Formally:

\begin{definition}
    For curves $P$ and $Q$, denote by $\mathbb{G}$ the set of all one-sided reparametrizations from $P$ to $Q$. 
    Consider the  step function $\alpha : [0, |P|]$ where $\alpha[t]  := |P[1, i]| \textnormal{ whenever } t \in [|P[1, i]|, |P[1, i+1]| ]$.
 For any $\rho, s \in \mathbb{R}$, the semi-discrete barking distance is defined as:
\[
\mathcal{D}_B^s(P, Q) =  \min\limits_{ \substack{ \gamma  \in \mathbb{G}  \\  \sigma(\gamma) \leq s}} \int_0^{|P|} \theta_\rho(P[\alpha(t)], Q[\gamma(t)]) \, dt. 
\]
\end{definition}

\mypar{Contribution.} Our main contribution are efficient algorithms to compute the barking distance in several settings. In Section \ref{sec:discrete} we show that the discrete barking distance can be computed in time $O(nm\log s)$. In Section \ref{sec:semidiscrete} we show that the semi-discrete barking distance can be computed in time $O(nm\log (nm))$. In Section \ref{sec:continuous} we show that the continuous barking distance can be computed in time $O(n^4m^3\log (nm))$. While slower than the best known algorithm for DTW for time series \cite{buchinCont}, our algorithm works for curves in $\mathbb{R}^d$ for constant $d$. In Section \ref{sec:lower_bounds} we give conditional lower bounds showing that no truly subquadratic algorithm may exist to compute any variant, assuming the Strong Exponential Time Hypothesis (SETH). Hence, our algorithms for the discrete and semi-discrete barking distance are near-optimal. 
A feature of our reduction is that the inputs $(P, Q)$ are not only pairwise disjoint but also simple. 

\section{The discrete setting}\label{sec:discrete}
Let $G_{n,m} = (V,E)$ be a graph defined on top of
an $n\times m$ lattice in $\mathbb R^2$ where $v_{i,j} \in V$ is 
identified with the lattice point at coordinate $(i,j)$.
We find $(v_{i,j},v_{i',j'}) \in E$ 
with distinct $v_{i,j},v_{i',j'} \in V$
whenever $v_{i,j}$ and $v_{i',j'}$ are identified with points of the lattice
at distance $\leq \sqrt 2$.
We say that $v_{i',j'}$ is the 
\emph{southern}, \emph{south-western}, \emph{western}, \emph{north-western}, or \emph{northern} neighbor
of $v_{i,j}$ if $v_{i',j'}$ lies in the corresponding cardinal direction in the lattice.

For $v_{i,j} \in V$ we set $w(v_{i,j}) = \theta(p_i,q_j)$,
with $p_i$ the $i$-th corner of $P$ and $q_j$ the $j$-th corner of $Q$.
Similarly, we set $w(\pi) = \sum_{a=0}^k w(v_{i_a,j_a})$ for a walk $\pi = (v_{i_1,j_1},\ldots,v_{i_k,j_k})$ in $G_{n,m}$.
We say that $\pi$ is \emph{monotone} if $j_a \leq j_{a'}$ whenever $a \leq a'$ and 
we define the \emph{length} of $\pi$ as $|\pi|$, i.e., the number of vertices in the walk.
A sub-walk of $\pi$ is said to be 
\emph{horizontal} if all its vertices correspond to lattice points with the same $y$-coordinate and
\emph{vertical} if all its vertices correspond to lattice points with the same $x$-coordinate.
Moreover, we say that $\pi$ has \emph{speed} $s$ if the longest horizontal or vertical sub-walk of $\pi$ 
has length at most $s$.
Let $\Pi(s,\rho)$ be the set of all monotone walks in $G_{n,m}$ starting at $v_{1,1}$ and ending at $v_{n,m}$ 
with speed $s$ and weight function depending on the threshold $\rho$.
The next observation now follows with Definitions~\ref{def:walks} and~\ref{def:discrete}.

\begin{observation}
    \label{obs:shortpaths}
    Given two polygonal curves $P$ and $Q$, a threshold $\rho$, and a speed bound $s$,
    let $G_{n,m}$ be defined as above,
    then $w(\pi) = \mathbb{D}^s_B(P, Q)$ for any $\pi \in \Pi(s,\rho)$ of minimum weight.
\end{observation}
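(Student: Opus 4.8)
The plan is to observe that the statement is a pure translation: once the relevant definitions are unfolded, ``minimize $w(\pi)$ over $\pi \in \Pi(s,\rho)$'' and ``compute $\mathbb{D}^s_B(P,Q)$'' are the same optimization problem. I would verify this in two steps — the feasible sets coincide, and the objective functions coincide — and then read off the conclusion.

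For the feasible sets: by Definition~\ref{def:walks}, a discrete reparametrization of $(P,Q)$ is exactly a walk in $G_{n,m}$ from $(1,1)$ to $(n,m)$, since the adjacency of $G_{n,m}$ — lattice points at distance at most $\sqrt2$ — is precisely the set of moves such a walk may make. The monotonicity constraint of Definition~\ref{def:discrete} ($x$-monotone reparametrizations) and the monotonicity constraint used here (monotone walks, $j_a \le j_{a'}$ for $a \le a'$) pick out the same walks, and the two speed notions agree as well: the largest axis-parallel sub-curve of $F$ measured by Euclidean length (Definition~\ref{def:walks}) and the longest axis-parallel sub-walk of $\pi$ measured by vertex count (used here) are related by the reindexing ``a run on $\ell$ vertices has Euclidean length $\ell-1$'', so $\sigma(F)\le s$ and ``$\pi$ has speed $s$'' cut out the same set of admissible walks. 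Hence $\Pi(s,\rho) = \{\, F \in \mathbb{F} : \sigma(F)\le s \,\}$.

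For the objective: by definition $w(v_{i,j}) = \theta_\rho(p_i,q_j)$ and $w$ is extended additively over walks, so for $\pi = (v_{i_1,j_1},\dots,v_{i_k,j_k})$ we have $w(\pi) = \sum_{a}\theta_\rho(p_{i_a},q_{j_a})$, which is exactly $\sum_{(i,j)\in F}\theta_\rho(p_i,q_j)$ for the walk $F$ corresponding to $\pi$. Hence minimizing $w$ over $\Pi(s,\rho)$ computes exactly $\mathbb{D}^s_B(P,Q)$; this feasible set is finite (monotonicity together with the speed bound caps $|\pi|$ polynomially in $n$, $m$, and $s$), so the minimum is attained, and any minimum-weight $\pi \in \Pi(s,\rho)$ satisfies $w(\pi) = \mathbb{D}^s_B(P,Q)$.

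There is no genuine obstacle, as the statement merely reformulates a definition; the only point requiring care is the bookkeeping above — aligning which coordinate the reparametrization must be monotone in, and reconciling the ``Euclidean length'' versus ``number of vertices'' conventions for speed — so that $\Pi(s,\rho)$ and the feasible set of Definition~\ref{def:discrete} are literally the same set and not merely related by a shift of the parameter $s$.
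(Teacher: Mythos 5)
Your proof is correct and takes essentially the same route as the paper, which offers no written proof beyond asserting that the observation ``follows with Definitions~\ref{def:walks} and~\ref{def:discrete}''. You unfold the definitions explicitly (feasible sets and objective coincide) and sensibly flag the two bookkeeping points — which coordinate the monotonicity constraint refers to, and the ``number of vertices'' vs.\ ``Euclidean length'' convention for speed — that the paper leaves implicit.
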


By Observation~\ref{obs:shortpaths} we can restrict our attention to monotone paths from $v_{1,1}$
to $v_{n,m}$ that have speed at most $s$ and are of minimum weight.
Our strategy is to compute for each vertex $v_{i,j} \in V$
the weight of such a path from $v_{1,1}$ and to $v_{i,j}$.
Our computation will proceed in $n$ rounds, where 
in each round we consider the $m$ vertices of column $j$.
The challenge is to compute the length of a minimum weight monotone path of speed $s$ 
in time $O(\log s)$.

Let $R_i(j_1,j_2)$ be the weight of path $(v_{i,j_1+1},v_{i,j_1+2},\ldots,v_{i,j_2})$ and
$C_j(i_1,i_2)$ be the weight of path $(v_{i_1+1,j},v_{i_1+2,j},\ldots,v_{i_2,j})$.
Observe, that these values can be computed in constant time
if we have arrays containing at position $i$ the length of a path from
the first element in the row or column to the $i$-th element of the row or column.
For each row and column and taking either side as the starting vertex.
We precompute these arrays for all rows at the beginning and 
for each column only when we process this column in the computation.

Let $F_\delta(i,j)$ with $\delta \in D = \{\uparrow,\nearrow,\rightarrow,\searrow,\downarrow\}$ be 
the minimum weight of a monotone path of speed $s$ from $v_{1,1}$ to $v_{i,j}$
where the vertex preceding $v_{i,j}$ on the path is the
southern, south-western, western, north-western, or northern neighbor of $v_{i,j}$, respectively.
We set $F_\delta(i,j) = \infty$ 
if $v_{i,j}$ cannot be reached with any monotone path of speed $s$ from $v_{1,1}$.
We then compute the minimum weight monotone path of speed $s$ from $v_{1,1}$ to $v_{i,j}$ as
$F(i,j) = \min \{F_\delta(i,j) \mid \delta \in D\}$.
To compute $F(i,j)$ from left to right along the columns
we maintain the relevant 
minima of paths $F_\delta$ ending at vertices around $v_{i,j}$ 
for each row and for the current column in separate heaps.
Moreover, instead of updating the weights of all heap-elements explicitly for each $v_{i,j}$, 
we precompute the lengths of paths starting at the beginning or end of a row or column.
From this we can in constant time compute the necessary offsets.
The runtime of $O(nm\log s)$ then follows as every of the $O(nm)$ elements gets only inserted and deleted 
from some min-heap a constant number of times and
at no point any min-heap contains more than $s$ elements.

For the following proof we rewrite $F_d(i,j)$ as a recurrence taking the speed-bound $s$ into account for $j > 1$. Recall that  $w(v_{i,j})$ contributes to the values of $C_j$ and $R_i$.
\[    
    F_d(i,j) = 
    \begin{cases}
        \min\{C_j(i-k,i) + F_\delta(i-k,j) \mid \delta \in \{\nearrow,\rightarrow,\searrow\} \land k \in [1,s]\} & \text{if } d = \uparrow\\
        F(i-1,j-1) + w(v_{i-1,j-1}) & \text{if } d = \nearrow \\
        \min\{R_i(j-k,j) + F_\delta(i,j-k) \mid \delta \in \{\uparrow, \nearrow, \searrow, \downarrow\} \land k \in [1,s]\} & \text{if } d = \rightarrow \\
        F(i+1,j+1) + w(v_{i+1,j+1}) & \text{if } d = \searrow\\
        \min\{C_j(i+k,i) + F_\delta(i+k,j) \mid \delta \in \{\nearrow,\rightarrow,\searrow\} \land k \in [1,s]\} & \text{if } d = \downarrow\\
    \end{cases}
\]

\begin{restatable}{theorem}{discretethm}
Given two polygonal curves $P$ and $Q$ with $n$ and $m$ vertices, respectively, the discrete Barking distance of $P$ to $Q$ can be computed in time $O(nm(\log s))$
where $s$ is the speed bound or time $O(nm\log(nm))$ if $s > n$ or $s > m$.
\end{restatable}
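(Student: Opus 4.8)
The plan is to reduce the statement, via Observation~\ref{obs:shortpaths}, to computing the minimum weight $F(n,m)$ of a monotone walk of speed at most $s$ in $G_{n,m}$ from $v_{1,1}$ to $v_{n,m}$ (with $F(n,m)=\infty$ if no such walk exists), and then to carry out three steps: (i) justify the recurrences for the $F_\delta$ and for $F=\min_{\delta\in D}F_\delta$; (ii) fix an evaluation order that makes the recursion acyclic; and (iii) show that each of the $O(nm)$ table entries can be produced in amortized $O(\log s)$ time.

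For (i) I would first prove a normal form: among all monotone walks of speed at most $s$ of minimum weight there is one in which every maximal horizontal and every maximal vertical sub-walk is \emph{straight}, i.e.\ monotone in the free coordinate. A maximal horizontal run is straight automatically because $\pi$ is monotone; a non-straight maximal vertical run revisits some vertex, so replacing it by the straight segment between its two endpoints keeps the walk monotone, cannot lengthen any sub-walk (so the speed bound is preserved), and cannot increase the weight since $w\ge 0$. Restricted to such walks, any walk ending at $v_{i,j}$ splits as a shorter such walk followed by its last piece, which is exactly one of: a single diagonal step (the $\nearrow$ and $\searrow$ cases; diagonal sub-walks are neither horizontal nor vertical, so the speed bound imposes nothing here), a straight vertical run of at most $s$ vertices whose first vertex was entered by a non-vertical step (the $\uparrow$ and $\downarrow$ cases, which is why those recurrences use $C_j$ together with the predecessor states $\{\nearrow,\rightarrow,\searrow\}$), or a straight horizontal run of at most $s$ vertices entered by a non-horizontal step (the $\rightarrow$ case, using $R_i$ together with $\{\uparrow,\nearrow,\searrow,\downarrow\}$). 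With the bookkeeping convention that $w(v_{i,j})$ is accounted for exactly once — inside $C_j$ or $R_i$ in the $\uparrow,\rightarrow,\downarrow$ cases and through the explicit additive term in the $\nearrow,\searrow$ cases — an induction on the position of $v_{i,j}$ (with base case $F(1,1)=w(v_{1,1})$) then shows the recurrences compute precisely these minima, so $F(n,m)=\mathbb D^s_B(P,Q)$. For (ii) I would process the columns in the monotone direction and, within a column, first compute $F_\nearrow$ and $F_\searrow$ (which read only the previous column) and $F_\rightarrow$ (which reads only strictly earlier columns), then $F_\uparrow$ in a single increasing pass and $F_\downarrow$ in a single decreasing pass (which read $F_\nearrow,F_\rightarrow,F_\searrow$ of the current column but never each other), and finally set $F=\min_\delta F_\delta$; all dependencies then point to already-computed values.

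Step (iii) is where the $\log s$ factor comes from. Consider $F_\uparrow(i,j)=\min\{C_j(i-k,i)+F_\delta(i-k,j)\mid\delta\in\{\nearrow,\rightarrow,\searrow\},\ k\in[1,s]\}$. Writing $T_j(b)=\sum_{a\le b}w(v_{a,j})$ for the prefix sums along column $j$, we have $C_j(i-k,i)=T_j(i)-T_j(i-k)$, hence $C_j(i-k,i)+F_\delta(i-k,j)=T_j(i)+\bigl(F_\delta(i-k,j)-T_j(i-k)\bigr)$, and the parenthesised term depends only on $b=i-k$, not on $i$. Therefore, as $i$ increases, the set of candidate values for the minimum is a sliding window of width $s$ over the quantities $F_\delta(b,j)-T_j(b)$ (indexed by $b$, for $\delta\in\{\nearrow,\rightarrow,\searrow\}$), into which exactly one element enters and one leaves per step; keeping these $O(s)$ values in a min-heap with deletion handles lets us output $F_\uparrow(i,j)=T_j(i)+(\text{heap minimum})$ in $O(\log s)$ per vertex. $F_\downarrow$ is symmetric with a decreasing sweep, and $F_\rightarrow$ is analogous using the row prefix sums $\Sigma_i(b)=\sum_{a\le b}w(v_{i,a})$ and one min-heap per row, each updated by a single insertion and deletion whenever the active column advances. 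Precomputing the row prefix sums once costs $O(nm)$, and each column's prefix sums $O(n)$ when that column is reached; since every entry is inserted into and deleted from a heap a constant number of times and no heap ever holds more than $O(s)$ elements, the total time is $O(nm\log s)$.

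Finally, for the last clause of the statement: a horizontal sub-walk of $G_{n,m}$ has at most $m$ vertices and a vertical one at most $n$, so if $s>n$ or $s>m$ the corresponding part of the speed constraint is vacuous, and replacing $s$ by $\min\{s,\max\{n,m\}\}$ changes neither $\mathbb D^s_B(P,Q)$ nor any recurrence; the heaps then hold $O(\max\{n,m\})$ elements and the running time is $O(nm\log(nm))$. I expect the main obstacle to be the interplay between the normal-form lemma and the chosen evaluation order: the five directional states exist precisely so that the speed bound — which couples maximal runs of collinear steps — becomes composable from cell to cell, and it is only once that structure is in place that the prefix-sum/offset reformulation collapsing the $O(s)$-per-cell minimum into a sliding-window heap query goes through cleanly.
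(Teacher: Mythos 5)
Your proposal is correct and takes essentially the same approach as the paper: process the grid column by column in the monotone direction, split the predecessor state into the five directions $\{\uparrow,\nearrow,\rightarrow,\searrow,\downarrow\}$, and realize the $O(\log s)$ per-cell bound via sliding-window min-heaps. The one place you go further than the paper is the explicit normal-form lemma justifying the recurrence (the paper simply states the recurrence and asserts "correctness follows since the algorithm computes directly the above recurrence"), which is a welcome addition. Your prefix-sum reformulation — store $F_\delta(b,j) - T_j(b)$ so heap keys never change, then add back $T_j(i)$ at query time — is the cleaner way to say what the paper expresses as "instead of updating the weights of all heap-elements explicitly \ldots we precompute the lengths of paths \ldots from this we can in constant time compute the necessary offsets" and "since for all elements already present in $H_i$ their keys change only by $w(v_{i,j})$ \ldots their order remains the same"; both are the same trick, yours is just tidier. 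One small caveat: the paper's stated formal definitions label "horizontal" as constant $y$-coordinate (constant $j$) and "vertical" as constant $x$-coordinate (constant $i$), which is the opposite of the row/column usage in the $R_i$/$C_j$ recurrence and in your argument; you implicitly adopt the recurrence's convention (monotone direction = $\rightarrow$ runs along a row, free direction = $\uparrow/\downarrow$ runs along a column), which is the sensible reading, but it is worth flagging that your "horizontal"/"vertical" do not match the paper's literal definitions, merely its algorithmic intent.
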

\begin{proof}
For the first column, i.e., $j=1$, we can compute $F(i,j)$ as follows.
Clearly, $F(1,1) = w(v_{1,1}) = \theta_\rho(p_1,q_1)$.
We set $F(i,1) = \infty$ for all $i > s$.
Finally, we find that the remaining entries $F(i,1)$ with $i \in [2,s]$ 
in a bottom-up traversal as the values $C_1(1,i)$.
We conclude this step by initializing a min-heap $H_i$ for each row $i$ 
containing vertex $v_{i,1}$ as its sole element and $F(i,1) = F_\rightarrow(i,1)$ as the key.

Assume now that we want to compute the entries $F_d(i,j)$ for $i \in [1,m]$ where 
all entries $F_d(i,j')$ with $j' < j$ are already computed and
for row $i$ we have a min-heap $H_{i}$ containing all $F_\rightarrow(i,j-k)$ for $k \in [1,s]$ ordered
by key $F_\rightarrow(i,j-k) + R_i(j-k,j-1)$.
From this information we can for each $i$ immediately compute 
$F_\rightarrow(i,j)$ as the minimum of $H_i$, say $v_{i,j'}$ plus $R_i(j-j',j)$.
We then update $H_i$ by deleting all entries for $v_{i,j-s}$ and then
inserting $(v_{i,j},F_\uparrow(i,j))$, $(v_{1,j},F_\downarrow(i,j))$, $(v_{1,j},F_\searrow(i,j))$,
and $(v_{1,j},F_\nearrow(i,j))$ 
using as key for comparison $F_{\cdot}(i,j) + R_i(j-k,j)$ in the insertion.
Note that since for all elements already present in $H_i$ their keys change only by $w(v_{i,j})$ 
and hence their order remains the same.
Moreover, we can directly compute $F_\nearrow(i,j)$ and $F_\searrow(i,j)$ for each $i$.

It remains to compute $F_\uparrow(i,j)$ and $F_\downarrow(i,j)$ for each $i\in[1,m]$ in column $j$.
We describe how to compute $F_\uparrow(i,j)$, $F_\downarrow(i,j)$ can be computed symmetrically.
We start from $v_{1,j}$.
Clearly, $F_\uparrow(1,j) = \infty$.
We also initialize a min-heap $H$ and insert $(v_{1,j},F_\rightarrow(1,j))$, $(v_{1,j},F_\searrow(1,j))$,
and $(v_{1,j},F_\nearrow(1,j)=\infty)$
where the second element is used as key.
Assume that we now want to compute $F_\uparrow(i,j)$ and 
that we have a heap $H$ containing for $k \in [1,s]$ the elements
$(v_{i-k,j},F_\rightarrow(i-k,j))$, $(v_{i-k,j},F_\searrow(i-k,j))$, and $(v_{i-k,j},F_\nearrow(i-k,j))$
ordered by key $F_{\cdot}(i-k,j) + C_j(i-k,i-1)$.
This can be done as for the row by just extracting the minimum element from the heap $H$, 
say $(v_{i',j},F_{\delta}(i',j))$,
and setting $F_\uparrow(i,j) = F_{\delta}(i',j) + C_j(i',i)$.
We update the heap as in the case of $H_i$, with the only difference being that we need to insert the three elements $(v_{i,j},F_\rightarrow(i,j))$, $(v_{1,j},F_\searrow(i,j))$,
and $(v_{1,j},F_\nearrow(i,j))$

Correctness follows since the algorithm computes directly the above recurrence.
Moreover, since every element vertex and partial weight combination gets deleted and inserted at most once 
from some heap over the whole computation and no heap contains more than $3s$ elements at a time,
we obtain the claimed running time of $O(nm\log s)$.
Note, that if $s > m$ or $s > n$ we obtain a runtime of $O(nm(\log(m) + \log(n)))$ since 
again never more than $O(s)$ elements are contained in a heap and no more than $O(nm)$ elements can be inserted or deleted.    
\end{proof}

\section{The semi-discrete setting}\label{sec:semidiscrete}

We show an $O(nm \log(nm))$ time algorithm to compute the semi-discrete barking distance. 
To this end, we define a few critical variables and functions:

\begin{definition}
Let $P$ and $Q$ be any two curves with $n$ and $m$ vertices respectively and let $\rho, s \in \mathbb{R}$ be fixed. 
We define for each $i \in [n]$:
\begin{itemize}
    \item By $\Lambda_i$ the vertical line with $x$-coordinate $x_i = |P[1, i]|$ in $[|P|] \times [|Q|]$.
    \item For any $(\lambda_{i-1}, \lambda_{i}) \in \Lambda_{i-1} \times \Lambda_{i}$ we define the minimal cost of any path from $\lambda_{i-1}$ to $\lambda_{i}$. 
    Formally let $G_i$ denote all differentiable functions $\gamma : [x_{i-1}, x_i] \to [0, |Q|]$. 
We define:
    \[
    f(\lambda_{i-1}, \lambda_{i}) := \min\limits_{ 
    \substack{ \gamma \in G_{i}, \gamma'(t) \leq s \\ 
                f(x_{i-1}) = \lambda_{i}, f(x_i) = \lambda_{i+1}}  } \int_{x_{i-1}}^{ x_i } \theta_\rho(P[x_{i-1}], Q[f(t)]) \, dt.  
    \]
\item Finally, we define $F_i(\lambda_i)$ as the minimum cost of all speed-bounded traversals where the dog ends at $\lambda_i \in \Lambda_i$.
In other words, we have
$F_i(\lambda_i):=\min_{\lambda_{i-1}}(F_{i-1}(\lambda_{i-1})+f(\lambda_{i-1},\lambda_i)).$
\end{itemize}

\end{definition}

Note that the barking distance from $P$ to $Q$ is now just the value of $F_n(|Q|)$.

\mypar{Barkable Space Diagram.} To compute these values, we define the \emph{Barkable Space Diagram} (BSD).\footnote{As a geometric object this diagram is the same as the \emph{free space diagram} for \frechet distance. We use a different name to highlight that the paths through the diagram have different rules.}
The BSD is a graph over $[0, |P|] \times [0, |Q|] \subset \mathbb{R}^2$.
The $x$-axis corresponds to the location of the hiker in its (fixed) traversal of $P$, and the $y$-axis corresponds to the location of the dog on $Q$ (parameterized according to $Q[t]$).
In particular, a point $(x,y)$ corresponds to the pair $(P[x],Q[y])$.
For any $(x, y) \in [0, |P|] \times [0, |Q|]$ we define $BSD[x, y] := \theta_\rho(P[x], Q[y])$.
We denote $B_\rho = \{  (x, y) \in [0, |P|] \times [0, |Q|] \mid BSD[x, y] = 1 \}$ and call this set the \emph{obstacles}. 

A traversal of the dog (i.e. a function $\gamma \in \mathbb{G}$) is a curve $\gamma$ through the BSD.
Furthermore, the barking distance of $\gamma$ is the total time spent in obstacles, i.e., the length of the projection onto the $x$-axis of the curve $\gamma \cap B_{\rho}$. 
The speed bound for the dog translates into a maximal and minimal slope for $\gamma$.

Our BSD naturally partitions into vertical \emph{slabs} in between lines $\Lambda_{i-1}$ and $\Lambda_i$. 
For any curve $\gamma \in [0, |P|] \times [0, |Q|]$, we can extract the values $\lambda_i = \min \{ \gamma \cap \Lambda_i \}$. 
We determine the values $\lambda_i$ corresponding to the optimal curve $\gamma$ by reasoning about the `shape' of the BSD:

\begin{restatable}{lemma}{bsdsemidisclemma}
\label{lem:bsd_semi_discrete}
For any $i \in [n]$, the set $B_\rho \cap  \left( [x_{i-1}, x_i] \times [0, |Q|] \right)$ are $O(m)$ axis-parallel rectangles whose vertical boundaries are incident to $\Lambda_{i-1}$ and $\Lambda_{i}$. 
\end{restatable}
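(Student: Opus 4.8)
The plan is to fix the slab index $i$ and analyze, for each fixed hiker position $x \in [x_{i-1}, x_i]$, the set of dog positions $y \in [0, |Q|]$ for which $\theta_\rho(P[x], Q[y]) = 1$, i.e.\ $d(P[x], Q[y]) > \rho$. The crucial observation is that, because the hiker's traversal of $P$ is the step function $\alpha$ in the semi-discrete definition, the point $P[\alpha(x)]$ does not move as $x$ ranges over $[x_{i-1}, x_i)$: it is the single vertex $p_i$ (more precisely $P[\alpha(t)] = P[x_{i-1}]$ throughout the half-open slab). Hence $BSD[x,y]$ does not depend on $x$ within the slab, and the obstacle region restricted to the slab is a product $([x_{i-1}, x_i]) \times O_i$, where $O_i = \{ y \in [0, |Q|] : d(p_i, Q[y]) > \rho \}$ is a fixed subset of the $y$-axis. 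This immediately gives the "axis-parallel rectangles whose vertical boundaries are incident to $\Lambda_{i-1}$ and $\Lambda_i$" part: every connected component of $([x_{i-1},x_i]) \times O_i$ spans the full width of the slab.

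The remaining work is to bound the number of connected components of $O_i$ by $O(m)$. For this I would go edge by edge along $Q$. Restrict attention to a single edge $e_j = (q_j, q_{j+1})$ of $Q$; the parameter $y$ restricted to this edge traces out the segment $e_j$ at constant speed. The set of points on the line through $e_j$ at Euclidean distance exactly $\rho$ from the fixed point $p_i$ is the intersection of that line with the sphere of radius $\rho$ centered at $p_i$, which consists of at most two points (or the whole line is inside/outside, giving zero). Consequently the segment $e_j$ is partitioned by at most two threshold-crossing points into at most three sub-intervals on each of which $d(p_i, \cdot) > \rho$ holds consistently (all "inside" or all "outside"). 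So each edge contributes $O(1)$ maximal sub-intervals to $O_i$, and since $Q$ has $m-1$ edges, $O_i$ is a union of $O(m)$ intervals, hence has $O(m)$ connected components. Taking the product with $[x_{i-1}, x_i]$ yields $O(m)$ axis-parallel rectangles, as claimed. (Adjacent edges may merge intervals across vertices, which only decreases the count.)

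The only subtlety — and the step I would be most careful about — is the behavior at the vertices of $Q$ and at the slab boundaries. At a vertex $q_{j+1}$ shared by consecutive edges, the function $y \mapsto d(p_i, Q[y])$ is continuous but only piecewise smooth, so an interval of $O_i$ may straddle the vertex; this is harmless for counting (it can only glue two pieces into one). At the vertical boundaries $\Lambda_{i-1}$ and $\Lambda_i$ one must take the convention consistent with the definition of $BSD$ (and of $\alpha$) so that each rectangle is genuinely closed on its left and right sides and abuts the slab lines; a half-open/closed bookkeeping choice fixes this without changing the combinatorial bound. No degeneracy (e.g.\ $e_j$ tangent to the $\rho$-sphere, or $p_i$ lying on the line through $e_j$) increases the bound: tangency gives a single crossing point, and collinearity still gives at most two crossings. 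Hence the bound $O(m)$ is robust, and summing over all $n$ slabs would later give the $O(nm)$ total obstacle complexity used by the algorithm.
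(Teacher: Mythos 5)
Your proof is correct and follows essentially the same route as the paper: fix the vertex $p_i$, observe that within the slab the hiker is stationary so the obstacle set is a product $[x_{i-1},x_i]\times O_i$ with $O_i=\{y : d(p_i,Q[y])>\rho\}$, and bound the number of connected components of $O_i$ by going edge-by-edge along $Q$. Your per-edge count (at most two threshold crossings, via the line--sphere intersection) is in fact slightly more careful than the paper's claim that each edge contains ``at most one endpoint of a maximal subcurve,'' but both yield the same $O(m)$ bound.
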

\begin{proof}
Fix some $p_i\in P$. For any $q\in Q$, consider the distance $d(p_i, q)$.
This partitions $Q$ into maximal subcurves where for each subcurve $S$, all $q \in S$ have either $d(p_i, q) \leq \rho$ or $d(p_i, q) > \rho$. 
Each maximal subcurve $S = Q[y_1, y_2]$ where for all $q \in S$, $d(p_i, q) > \rho$ corresponds uniquely to a connected component in $B_\rho \cap (\cap) [x_{i-1}, x_i]  \times [0, |Q|])$ are $O(m)$. This connected component has its vertical boundaries coinciding with $\Lambda_i$ and $\Lambda_{i+1}$. Its vertical boundaries are horizontal segments of height $y_1$ and $y_2$. 

Note that each edge of $Q$ can contain at most one endpoint of a maximal subcurve $S$ where for all $q \in S$, $d(p_i, q) > \rho$.
This bounds the number of connected components in $B_\rho \cap ([x_{i-1}, x_i] \times [0, |Q|])$ to be $O(m)$. 
\end{proof}

\subsection{\boldmath Understanding the function \texorpdfstring{$f(\lambda_{i-1},\lambda_i)$}{f}}
The function $f(\lambda_{i-1},\lambda_i)$ takes values in $\Lambda_{i-1} \times \Lambda_i$ and returns a cost, which is a value in $\mathbb{R}_{\geq 0}$. In particular, in order to understand the behaviour of the function $f$, we plot a new graph where we think of $\lambda_i \in \Lambda_i$ to be the $x$-axis and $\lambda_{i-1} \in \Lambda_{i-1}$ to be the $y$-axis.\footnote{This choice allows us to read off a univariate function that appears later as a function over the $x$-axis.}

\begin{restatable}{lemma}{fsymmetriclemma}
\label{lem:f_symmetric}
For all $i \in [n]$, the function $f(\lambda_{i-1},\lambda_i)$ is symmetric, i.e., $f(a,b)=f(b,a)$.
\end{restatable}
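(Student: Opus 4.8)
The plan is to prove symmetry by exhibiting, for any fixed slab $[x_{i-1}, x_i] \times [0, |Q|]$ and any pair $(a, b) \in \Lambda_{i-1} \times \Lambda_i$, a cost-preserving bijection between the speed-bounded traversals from $a$ to $b$ and those from $b$ to $a$. First I would observe the key structural fact from Lemma~\ref{lem:bsd_semi_discrete}: within a single slab the obstacle set $B_\rho$ consists of axis-parallel rectangles spanning the full width of the slab, so the cost of a traversal $\gamma$ depends only on which $y$-intervals $\gamma$ passes through and \emph{how long horizontally} it spends in the obstacle bands --- and since each obstacle band spans the entire horizontal extent $[x_{i-1}, x_i]$, the cost contributed while $\gamma(t)$ lies in an obstacle band is exactly the horizontal length of the $t$-interval where $\gamma(t)$ lies in that band. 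Crucially the hiker is stationary within the slab in the relevant sense (in the semi-discrete model $P[\alpha(t)] = p_i$ is constant on the slab), so the obstacle bands do not change shape as $t$ varies across the slab.

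The bijection itself is the time-reversal-and-reflection map. Given a traversal $\gamma : [x_{i-1}, x_i] \to [0, |Q|]$ with $\gamma(x_{i-1}) = a$, $\gamma(x_i) = b$ and $|\gamma'| \leq s$, define $\bar\gamma(t) := \gamma(x_{i-1} + x_i - t)$. Then $\bar\gamma(x_{i-1}) = b$, $\bar\gamma(x_i) = a$, and $|\bar\gamma'(t)| = |\gamma'(x_{i-1}+x_i-t)| \leq s$, so $\bar\gamma \in G_i$ is again speed-bounded and connects $b$ to $a$. For the cost, the set $\{ t : \bar\gamma(t) \in B_\rho\}$ is the reflection of $\{ t : \gamma(t) \in B_\rho \}$ about the midpoint $(x_{i-1}+x_i)/2$, which has the same one-dimensional (Lebesgue) measure; since the integrand $\theta_\rho(p_i, Q[\gamma(t)])$ is just the indicator of $\gamma(t) \in B_\rho$, the two integrals are equal. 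This map is clearly an involution, hence a bijection, so the minima over the two families of traversals coincide: $f(a,b) = f(b,a)$.

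I should be slightly careful about one technical wrinkle: the definition of $G_i$ demands \emph{differentiable} reparametrizations, and strictly speaking the monotonicity built into earlier reparametrization definitions would be violated by a curve that genuinely runs backward --- but the one-sided definition (Definition~\ref{def:onesided}) only requires $|\gamma'| \le s$, not monotonicity, and the reflection preserves differentiability exactly, so $\bar\gamma$ is a legitimate competitor. The only real place to take care is to confirm that the cost integral does not secretly depend on the hiker's position within the slab (it does not, because $\alpha$ is constant equal to $|P[1,i]|$ there, so $P[\alpha(t)] = p_i$ throughout the slab), which is exactly what makes the obstacle bands $t$-independent and lets the reflection argument go through verbatim.

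The main obstacle, such as it is, is purely expository: making precise that ``cost $=$ horizontal time spent in obstacle bands'' and that this quantity is reflection-invariant, i.e., checking that the integral $\int_{x_{i-1}}^{x_i} \mathbf{1}[\gamma(t) \in B_\rho]\, dt$ is invariant under $t \mapsto x_{i-1}+x_i-t$. This is a one-line change of variables, so I do not anticipate any genuine difficulty; the lemma is essentially a structural observation that the slab-local problem has a left-right symmetry, and once the setup is stated correctly the proof is immediate.
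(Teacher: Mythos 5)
Your proposal is correct and follows the same strategy as the paper's proof: rely on Lemma~\ref{lem:bsd_semi_discrete} to get axis-parallel rectangular obstacles spanning the slab, then mirror/time-reverse the traversal to get a cost-preserving bijection between paths from $a$ to $b$ and from $b$ to $a$. You simply spell out the reflection map $\bar\gamma(t) = \gamma(x_{i-1}+x_i-t)$ and verify speed-boundedness explicitly, which the paper leaves implicit under the word ``mirroring.''
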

\begin{proof}
Consider a (partial) traversal of $Q$ which, whilst the human remains stationary at $p_i$, starts at $\lambda_{i-1}=a$ and ends at $\lambda_i=b$.
Any such traversal, in the BSD, is an $x$-monotone curve $\gamma$ contained within the vertical slab corresponding to $p_i$.
As the obstacles are rectangles by Lemma \ref{lem:bsd_semi_discrete}, mirroring this curve gives an optimal curve from $\lambda_{i-1}=b$ to $\lambda_i=a$.
\end{proof}

Fix now a slab with boundaries $\Lambda_{i-1}$ and $\Lambda_i$.
Consider the obstacles $O_1,\ldots, O_k$ in the slab.
We may note for each $O_j$ by $b_j$ and $t_j$ the height of the bottom and top boundary of $O_j$, respectively. This defines a natural grid structure on the plot of $f$ over $\Lambda_{i-1} \times \Lambda_i$.

We say that a point $\lambda_{i-1}$ on the vertical line $\Lambda_{i-1}$, is \emph{covered} (with respect to the slab $[x_{i-1}, x_i] \times [0, |Q|]$) if it lies on the boundary of some obstacle $O_j$, that is, if we have $b_j\leq \lambda_{i-1}\leq t_j$ for some $j$. Otherwise, we say that it is \emph{uncovered} (with respect to the current slab). Analogously, say that $\lambda_i\in\Lambda_i$ is covered (with respect to the slab $[x_{i-1}, x_i] \times [0, |Q|]$) if $b_j\leq \lambda_{i-1}\leq t_j$ for some $j$ and uncovered otherwise.
In the following, the slab under consideration is always fixed, so we just say that a point is covered or uncovered.
This defines three different types of rectangles in the grid structure on the plot of $f$: a point $(\lambda_{i-1},\lambda_i)$ is \emph{completely covered} if both $\lambda_i$ and $\lambda_{i-1}$ are covered, it is \emph{half-covered} if one of the is covered but the other is uncovered, and it is \emph{completely uncovered} if both $\lambda_i$ and $\lambda_{i-1}$ are uncovered.

\begin{restatable}{lemma}{bivariatediagramlemma}
\label{lem:f_bivariate_diagram}
For all vertices $p_i\in P$, the function $f(\lambda_{i-1},\lambda_i)$ is a piecewise linear function.
Further, the pieces can be described in the plot of $f$ as follows:
\begin{itemize}
    \item[a)] there are two lines $\ell_1$ and $\ell_2$ of slope $1$ such that for all points $(\lambda_{i-1},\lambda_i)$ below $\ell_1$ or above $\ell_2$ we have $f(\lambda_{i-1},\lambda_i)=\infty$;
    \item[b)] in between 
    the pieces form an $(O(m)\times O(m))$-grid (truncated by $\ell_1$ and $\ell_2$) where 
    the completely covered 
    cells crossing the diagonal are further divided by segments of slope $-1$.
\end{itemize}
\end{restatable}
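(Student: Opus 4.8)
\medskip
\noindent\textbf{Proof plan.}
Fix the slab $[x_{i-1},x_i]\times[0,|Q|]$ and put $w:=x_i-x_{i-1}$. By Lemma~\ref{lem:bsd_semi_discrete} the obstacle part of this slab is a disjoint union of horizontal strips $[x_{i-1},x_i]\times(b_1,t_1),\ldots,[x_{i-1},x_i]\times(b_k,t_k)$ with $k=O(m)$; let $B:=\bigcup_{j}(b_j,t_j)\subseteq[0,|Q|]$ be the union of the corresponding open height intervals. Restricted to the slab a traversal is a curve $\gamma:[x_{i-1},x_i]\to[0,|Q|]$ with $|\gamma'|\le s$ (the dog may move in either direction but at bounded speed), and $f(\lambda_{i-1},\lambda_i)$ is the minimum, over all such $\gamma$ with $\gamma(x_{i-1})=\lambda_{i-1}$ and $\gamma(x_i)=\lambda_i$, of the measure of $\{t:\gamma(t)\in B\}$. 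Part a) is then immediate: a feasible $\gamma$ exists iff $|\lambda_i-\lambda_{i-1}|\le sw$, which in the plot (with $\lambda_i$ on the horizontal and $\lambda_{i-1}$ on the vertical axis) is exactly the band between the slope-$1$ lines $\ell_1:\lambda_{i-1}=\lambda_i-sw$ and $\ell_2:\lambda_{i-1}=\lambda_i+sw$, with $f=\infty$ outside.

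For part b), I would first record the grid: the $O(m)$ numbers $\{b_j,t_j\}$ partition each of $\Lambda_{i-1}$ and $\Lambda_i$ into $O(m)$ intervals, hence the plot into an $O(m)\times O(m)$ grid. On the interior of one cell, whether $\lambda_{i-1}$ (resp.\ $\lambda_i$) is covered is fixed, the cell lies on one side of the diagonal $\lambda_{i-1}=\lambda_i$, and the combinatorial type of $B\cap[\min,\max]$ is fixed, where $\min:=\min(\lambda_{i-1},\lambda_i)$ and $\max:=\max(\lambda_{i-1},\lambda_i)$. The relevant quantity is $\Phi(\lambda_{i-1},\lambda_i)$, the total length of $B\cap[\min,\max]$; it is an affine function of $(\lambda_{i-1},\lambda_i)$ on each open cell.

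The crux is the following dichotomy. If the cell is \emph{not} one of the completely covered diagonal cells $[b_a,t_a]\times[b_a,t_a]$ --- equivalently, if $[\min,\max]$ is not contained in a single strip, so it contains a point outside $B$ (here one uses that between two consecutive strips there always is a non-obstacle height) --- then $f=\Phi/s$. The upper bound is witnessed by the curve that crosses $[\min,\max]$ once at speed $s$ and spends the remaining time $w-|\lambda_i-\lambda_{i-1}|/s\ge0$ waiting at a point of $[\min,\max]\setminus B$; the matching lower bound is the one-dimensional co-area (Banach indicatrix) inequality
\[
s\cdot f(\lambda_{i-1},\lambda_i)\;\ge\;\int_{x_{i-1}}^{x_i}\mathbf{1}_{B}(\gamma(t))\,|\gamma'(t)|\,dt\;=\;\int_{B}N(z)\,dz\;\ge\;\Phi(\lambda_{i-1},\lambda_i),
\]
where $N(z)=\#\{t:\gamma(t)=z\}\ge1$ for every $z\in[\min,\max]$ by the intermediate value theorem. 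On all these cells $\Phi/s$ is affine (constant on completely uncovered cells, of slope $1$ on completely covered off-diagonal cells, with an axis-parallel level set on half-covered cells), so each is a single linear piece. If instead the cell is a completely covered diagonal cell $[b_a,t_a]\times[b_a,t_a]$, I claim
\[
f(\lambda_{i-1},\lambda_i)\;=\;\min\Bigl\{\,w,\ \tfrac1s\,(\lambda_{i-1}+\lambda_i-2b_a),\ \tfrac1s\,(2t_a-\lambda_{i-1}-\lambda_i)\,\Bigr\}.
\]
The three terms are the costs of staying inside strip $a$ for the whole slab, of leaving through the bottom $b_a$ and waiting in the gap below before returning, and of the symmetric move through the top $t_a$; each is realised by an explicit speed-$\{0,s\}$ curve, so ``$\le$'' holds, and ``$\ge$'' follows by splitting on whether $\gamma$ ever reaches height $\le b_a$ and whether it ever reaches height $\ge t_a$ and in each of the four cases bounding $\int_B N$ below using $N\ge1$ on $[\min,\max]$ and $N\ge2$ on the part of $(b_a,t_a)$ beyond $[\min,\max]$ that $\gamma$ is forced to retrace. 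As the three affine functions above, and their pairwise-equality loci, are all of the form $\lambda_{i-1}+\lambda_i=\mathrm{const}$, taking the minimum divides the cell into linear pieces precisely by segments of slope $-1$. Intersecting the whole picture with the feasible band truncates cells by $\ell_1$ and $\ell_2$, giving exactly the description in b).

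I expect the main obstacle to be the ``$\ge$'' direction of the displayed formula on the completely covered diagonal cells: this is the only place where the ability of the dog to run backwards (or forwards) out of an obstacle in order to wait in a neighbouring gap really matters, and one has to verify that no more elaborate back-and-forth helps. The co-area identity is the right tool, but a little care is needed around the measure-zero strip boundaries (where $\theta_\rho=0$) and around the fact that the optimal curve uses only the speeds $0$ and $s$, so the infimum is attained only in a limiting sense when the reparametrization is required to be differentiable.
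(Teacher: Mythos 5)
Your proof is correct, and it takes a genuinely different route than the paper's. The paper first establishes a structural fact about optimal traversals within a slab --- that some optimal $\gamma$ consists of at most two segments of extremal slope and at most one horizontal segment --- and then derives piecewise linearity and the cell structure from a fairly elaborate case analysis on the obstacle width (their Cases~1--3, supported by Figure~\ref{fig:case3}). You instead work directly with the cost integral: you set $\Phi(\lambda_{i-1},\lambda_i)$ to be the obstacle length inside $[\min(\lambda_{i-1},\lambda_i),\max(\lambda_{i-1},\lambda_i)]$, prove $f=\Phi/s$ on every cell that is not a completely covered diagonal cell via the Banach-indicatrix (one-dimensional co-area) inequality, and give the closed form $f=\min\{w,\ \tfrac1s(\lambda_{i-1}+\lambda_i-2b_a),\ \tfrac1s(2t_a-\lambda_{i-1}-\lambda_i)\}$ on the diagonal cells, from which the slope~$-1$ subdivision is immediate. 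Your dichotomy (``not a diagonal completely covered cell iff $[\min,\max]$ is not contained in a single strip'') is correct, your lower-bound case split on whether $\gamma$ touches $b_a$ or $t_a$ is sound --- in particular the fourth case is covered because $(\lambda_{i-1}+\lambda_i-2b_a)+(2t_a-\lambda_{i-1}-\lambda_i)=2(t_a-b_a)$, so one of the two diagonal terms is always at most $\tfrac1s(t_a-b_a)$ --- and the differentiability/infimum caveat you flag is indeed only a measure-zero smoothing issue. What your approach buys is explicit formulas for $f$ on each cell and a lower bound that does not require first classifying the shape of the optimal curve; what the paper's approach buys is the three-segment normal form for optimal traversals, which is the geometric picture they lean on again in the subsequent lemmas (e.g.\ Lemma~\ref{lem:f_univariate_structure} and Lemma~\ref{lem:monotone_between_breakings}), so if you adopted your argument for this lemma you would still want to record the normal form separately.
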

\begin{proof}
We claim that there is always an optimal traversal (i.e., a traversal of $Q$ whose cost corresponds to a value of $f(\lambda_{i-1},\lambda_i)$) whose curve $\gamma$ consists of at most two line segments of maximal (or minimal) slope and at most one horizontal line segment.
Indeed, any segment of non-optimal slope can be replaced by a segment of optimal slope and a horizontal segment. As the borders of the obstacles are horizontal segments by Lemma \ref{lem:bsd_semi_discrete}, this can only improve the time spent in an obstacle.
Similarly, any two segments of positive (negative) slope with a horizontal segment between them can be replaced by a single segment of positive (negative slope) with a horizontal segment afterwards.
Finally, as each obstacle is a rectangle, no optimal path will zig-zag, thus we may indeed assume that $\gamma$ consists of at most three segments as claimed above.
In particular, note that such an optimal curve $\gamma$ will only have a horizontal segment in an obstacle if $\gamma$ lies completely in this obstacle.
This already implies that the function is piecewise linear: the time an optimal segment stays in an obstacle is proportional to the vertical distance of its start or endpoint to the boundary of the obstacle.

It remains to prove the claims on the pieces. Claim a) follows immediately from the speed bound, as from a point $\lambda_{i-1}\in\Lambda_{i-1}$ we can only reach points $\lambda_i\in\Lambda_i$ with $\lambda_{i-1}-\ell\leq \lambda_i\leq \lambda_{i-1}+\ell$, where the $\ell$ is a number that depends on the speed bound.
For claim b), consider two points $\lambda_{i-1}\in\Lambda_{i-1}$ and $\lambda_i\in\Lambda_i$, where the cost of the optimal path between them by definition is $f(\lambda_{i-1},\lambda_i)$. We will distinguish the cases whether $\lambda_{i-1}$ and $\lambda_i$ are covered or uncovered.

Let us start with the case that both $\lambda_{i-1}$ and $\lambda_i$ are uncovered. Then, for any $x\in\Lambda_{i-1}$ and $y\in\Lambda_i$ in the same uncovered regions we have that if $y$ is reachable from $x$, then $c(x,y)$, the cost of the path from $x$ to $y$, equals $c(\lambda_{i-1},\lambda_i)$: as by the above observations any optimal path traverses obstacles with maximum speed, this cost is proportional to the width of the obstacles between the two uncovered regions.

Let us now assume that $\lambda_{i-1}$ is covered by some obstacle $O_j=O$ and that $\lambda_{i}$ is either also covered by $O$ or lies in an uncovered region adjacent to the region covered by $O$. We distinguish three different subcases, depending on the width of the obstacle, starting with the largest and ending with the smallest. Recall that $t$ and $b$ denote the topmost and bottommost point of the left boundary of the obstacle $O$. Further denote by $t'$ the lowest point on the right boundary that is reachable from $t$ and by $b'$ the highest point reachable from $b$.

\textbf{Case 1:} there is a point on the right boundary of the obstacle that can neither be reached from $t$ nor from $b$. In particular we have $b<b'<t'<t$. For an illustration of this case we refer to Figure \ref{fig:case3}(a). In any cell of the form $[b',t']\times I$ and, symmetrically, $I\times [b',t']$, the function $f$ is constant at the highest possible cost $c:=x_i-x_{i-1}$. The cell $[b,b']\times[b,b']$ is subdivided into two triangles by a segment of slope $-1$, where $f$ is constant $c$ on the upper triangle and linearly increasing with gradient $(1,1)$ on the lower triangle. Symmetrically, the cell $[t',t]\times[t',t]$ is also subdivided into two triangles, now with the lower one constant and the upper one linear with gradient $(-1,-1)$. Finally the cells of the form $[b,b']\times[\cdot, b]$ (and their symmetric variants) are linear with gradient $(1,0)$ (or the corresponding negative or swap of coordinates), whereas the cells $[\cdot, b]\times[\cdot, b]$ and $[t, \cdot]\times[t, \cdot]$ are constant 0.

\textbf{Case 2:} every point on the right boundary can be reached either from $t$ or from $b$, but some of them only from one of them. We thus have $b<t'<b'<t$. For the structure of $f$ in this case, we refer to Figure \ref{fig:case3}(b).

\textbf{Case 3:} every point on the right boundary can be reached both from $t$ and from $b$. We thus have $t'<b<t<b'$. For the structure of $f$ in this case, we refer to Figure \ref{fig:case3}(c).

The cases where $\lambda_{i-1}$ is covered and $\lambda_{i}$ is in a different region differ from the above cases by the addition of a value that is proportional to the width of the obstacles that are completely between. Finally, the cases where $\lambda_{i-1}$ is uncovered and $\lambda_{i}$ is covered follow from symmetry.
The size of the grid is thus bounded by the number of obstacles, which by Lemma~\ref{lem:bsd_semi_discrete} is $O(m)$.
\end{proof}

For an illustration of this, see Figure \ref{fig:semidiscrete_example}.
\begin{figure}
    \centering
    \includegraphics[page=1, width=\textwidth]{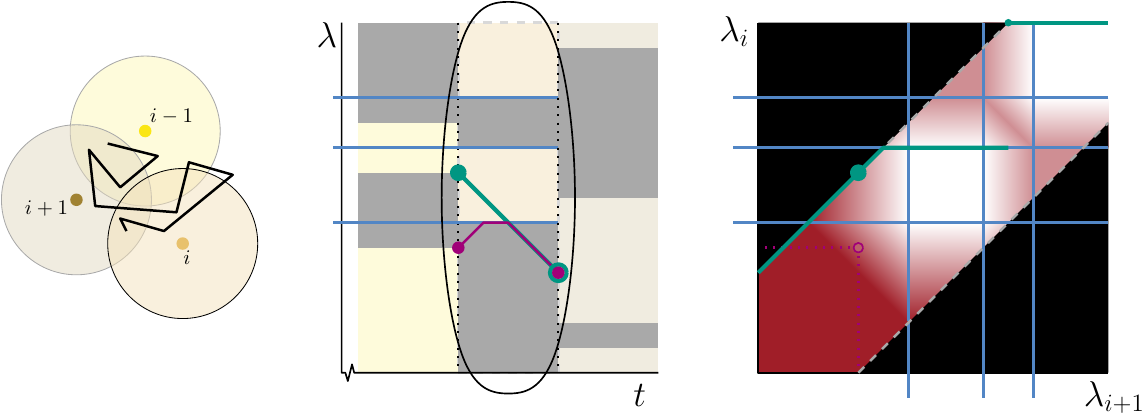}
    \caption{Two trajectories $P$ and $Q$ (left), the corresponding BSD (middle) and the plot of $f(\lambda_{i-1},\lambda_i)$ (right). In the plot of $f(\lambda_{i-1},\lambda_i)$ there is also the graph of $\Psi(\lambda_i)$.}
    \label{fig:semidiscrete_example}
\end{figure}
The lemma can be proven by a careful analysis of the possible paths $\gamma$. Illustrations of the behaviour of $f(\lambda_{i-1},\lambda_i)$ around differently sized obstacles can be found in Figure~\ref{fig:case3}. 

\begin{figure}
    \centering
    \includegraphics[page=1]{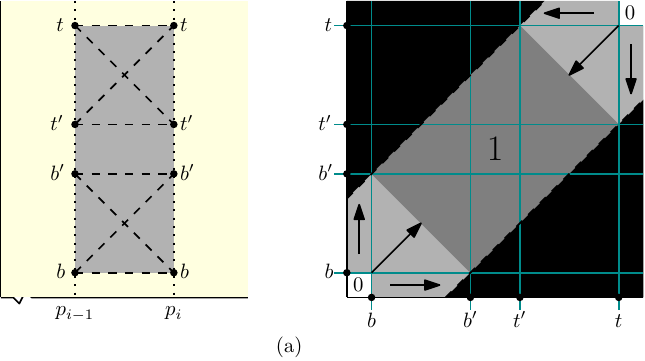}
    \vspace{1cm}
    
    \includegraphics[page=1]{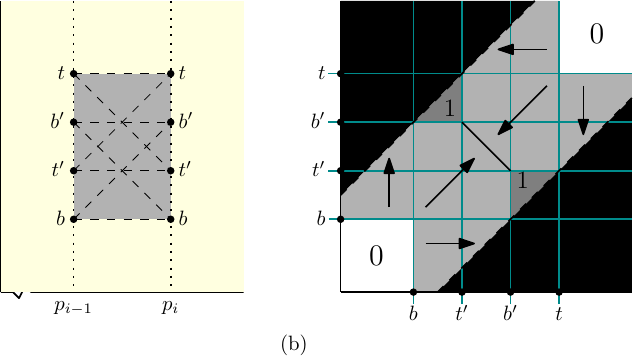}
    \vspace{1cm}
    
    \includegraphics[page=1]{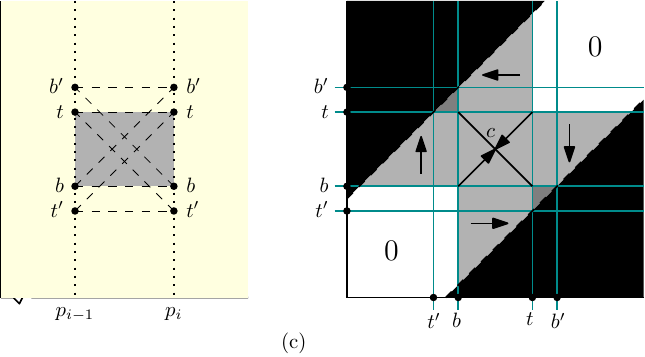}
    \caption{The graphs of $f$ for a large (a) medium (b) and small obstacle (c).}
    \label{fig:case3}
\end{figure}

Fixing the endpoint of the path $\gamma$ corresponds to fixing the value $\lambda_i=y$. This defines a univariate function $f(\lambda_{i-1})=f(\lambda_{i-1},y)$, which in the plot of $f$ corresponds to the restriction to a vertical line. Depending on whether $y$ is covered or uncovered, this line only passes through completely covered and half-covered or through half-covered and completely uncovered cells, where each cell defines an interval on the line. We call these intervals \emph{covered} if $\lambda_{i-1}$ is covered and \emph{uncovered} otherwise. The following now follows from Lemma \ref{lem:f_bivariate_diagram}.

\begin{lemma}\label{lem:f_univariate_structure}
For each $\lambda_{i}=y$, for the (univariate) function $f_y(\lambda_{i-1})=f(\lambda_{i-1},y)$ there are values $p_1\leq p_2\leq y\leq p_3\leq p_4$ such that
\begin{itemize}
    \item $p_1,\ldots, p_4$ correspond to coordinates of horizontal lines of the grid structure of $f(\lambda_{i-1},\lambda_i)$;
    \item $f_y(\lambda_{i-1})=\infty$ for $\lambda_{i-1}<p_1$ and $\lambda_{i-1}>p_4$;
    \item $f_y(\lambda_{i-1})$ is piecewise linear where all pieces have slope $-1$ on covered intervals and $0$ on uncovered intervals for $p_1\leq \lambda_{i-1}\leq p_2$;
    \item $f_y(\lambda_{i-1})$ is piecewise linear where all pieces have slope $1$ on covered intervals and $0$ on uncovered intervals for $p_3\leq \lambda_{i-1}\leq p_4$;
    \item $f_y(\lambda_{i-1})$ is linear with slope $1$ and then constant for $p_2\leq \lambda_{i-1}\leq y$;
    \item $f_y(\lambda_{i-1})$ is constant and then linear with slope $-1$ for $y\leq \lambda_{i-1}\leq p_3$.
\end{itemize}
Further, if $\lambda_{i-1}$ is uncovered, that is, not on the boundary of an obstacle, then we can choose $p_2=p_3$ and ignore the last two points. 
\end{lemma}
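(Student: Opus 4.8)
The plan is to derive Lemma~\ref{lem:f_univariate_structure} directly from the structural description of the bivariate diagram in Lemma~\ref{lem:f_bivariate_diagram} by intersecting that diagram with a vertical line $\lambda_i=y$. First I would recall from Lemma~\ref{lem:f_bivariate_diagram}(a) that the region of finite values of $f$ is bounded below by $\ell_1$ and above by $\ell_2$, two lines of slope $1$; intersecting these with the vertical line $\lambda_i=y$ yields exactly two values, which we name $p_1$ and $p_4$, and outside $[p_1,p_4]$ the function $f_y$ is $+\infty$. I would also observe that the diagonal $\lambda_{i-1}=\lambda_i$ meets the line at $\lambda_{i-1}=y$, so $p_1\le y\le p_4$.

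Next I would track how the vertical line passes through the grid cells described in Lemma~\ref{lem:f_bivariate_diagram}(b). Depending on whether $y$ (the $\lambda_i$ coordinate) is covered or uncovered, the line only meets completely-covered and half-covered cells, or only half-covered and completely-uncovered cells; in either case each cell the line crosses contributes one interval on the line, and the breakpoints between consecutive intervals are horizontal grid lines, i.e.\ the $b_j$ and $t_j$ coordinates of obstacles — this gives the first bullet. Within each such interval the gradient of $f$ is one of the constant vectors enumerated in the proof of Lemma~\ref{lem:f_bivariate_diagram} (gradient $(0,0)$, $(\pm 1,0)$, $(0,\pm 1)$, $(\pm1,\pm1)$), so the restriction to the vertical line is linear with slope equal to the $\lambda_{i-1}$-component of that gradient: $0$ on an interval where $\lambda_{i-1}$ is uncovered, and $\pm1$ on an interval where $\lambda_{i-1}$ is covered. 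The sign is governed by the diagonal: I would argue that on the cells below the diagonal (where the optimal path must ascend at maximal slope, so cost grows as $\lambda_{i-1}$ decreases) the slope is $-1$, and above the diagonal (path descends at minimal slope, cost grows as $\lambda_{i-1}$ increases) the slope is $+1$. This pins down the behaviour for $p_1\le\lambda_{i-1}\le p_2$ and $p_3\le\lambda_{i-1}\le p_4$, where $p_2,p_3$ are the coordinates where the line enters/leaves the band of cells straddling the diagonal.

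For the middle range I would use the extra subdivision of the completely-covered diagonal cells by segments of slope $-1$ noted in Lemma~\ref{lem:f_bivariate_diagram}(b): along the vertical line $\lambda_i=y$, such a cell contributes a piece where $f$ is constant (the $c$-triangle, where the full obstacle width is paid) adjacent to a piece of gradient $(1,1)$ or $(-1,-1)$; restricting to the line, this reads as ``linear with slope $1$ then constant'' for $p_2\le\lambda_{i-1}\le y$ and ``constant then linear with slope $-1$'' for $y\le\lambda_{i-1}\le p_3$, matching the last two bullets. Finally, for the case $\lambda_{i-1}$ uncovered I would note that then no obstacle has its left boundary straddling $y$ in the relevant sense, the diagonal cell is completely uncovered or half-covered rather than a subdivided completely-covered cell, and the constant-then-linear and linear-then-constant pieces collapse, so we may set $p_2=p_3$ and drop those bullets.

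The main obstacle I anticipate is bookkeeping rather than a genuine mathematical difficulty: one must be careful that the case analysis of Lemma~\ref{lem:f_bivariate_diagram} (Cases~1--3, i.e.\ large/medium/small obstacles, plus the ``different region'' additions) really does cover every cell the vertical line can cross, and that the signs of the slopes and the order $p_1\le p_2\le y\le p_3\le p_4$ come out consistently in each case — in particular verifying that crossing from below to above the diagonal flips the covered-interval slope from $-1$ to $+1$ exactly once, and that the monotonicity $p_1\le p_2$ and $p_3\le p_4$ is not violated by the medium-obstacle case where $t'$ and $b'$ interleave. Everything else is a direct read-off of a vertical slice of the piecewise-linear surface already constructed.
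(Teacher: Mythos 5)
Your proposal is correct and takes the same route as the paper: the paper simply asserts that the lemma ``follows from Lemma~\ref{lem:f_bivariate_diagram}'' by restricting the bivariate plot to the vertical line $\lambda_i=y$, and supplies no further argument. Your slice-by-slice read-off of the grid cells, their gradients, and the diagonally subdivided cells straddling $\lambda_{i-1}=\lambda_i$ is precisely the elaboration the paper leaves implicit (note only that the final clause of the lemma should be read as conditioning on $y$ being uncovered, as you in fact do in your reasoning, rather than on the variable $\lambda_{i-1}$).
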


\subsection{Computing the barking time}

Our algorithm computes $F_i(\lambda_{i})$ from $F_{i-1}(\lambda_{i-1})$ and $f(\lambda_{i-1},\lambda_i)$. For this, we need to understand for each point $\lambda_i$ the point $\lambda_{i-1} \in \Lambda_{i-1}$ which is intersected by an optimal curve $\gamma \in G_i$. 
As there might be several optimal curves, we will always consider the highest one, that is, the one maximizing $\lambda_{i-1}$.
We define 
$\Psi(\lambda_i):=\max\{\text{argmin}_{\lambda_{i-1}}(F_{i-1}(\lambda_{i-1})+f(\lambda_{i-1},\lambda_i))\}.$

\begin{restatable}{lemma}{argminmonotonelemma}
\label{lem:argmin}
The function $\Psi(\lambda_i)$ is monotone.
\end{restatable}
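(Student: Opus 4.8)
The plan is to prove monotonicity of $\Psi$ by an exchange argument: if $\lambda_i < \lambda_i'$ but $\Psi(\lambda_i) > \Psi(\lambda_i')$, we derive a contradiction by swapping the two optimal ``entry points'' in $\Lambda_{i-1}$ and showing the total cost cannot increase, while the swap would produce a valid alternative that contradicts the maximality in the definition of $\Psi$. Concretely, write $a = \Psi(\lambda_i)$, $b = \Psi(\lambda_i')$ and suppose for contradiction that $a > b$. By definition, $F_{i-1}(a) + f(a,\lambda_i) \le F_{i-1}(b) + f(b,\lambda_i)$ and $F_{i-1}(b) + f(b,\lambda_i') \le F_{i-1}(a) + f(a,\lambda_i')$. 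Adding these inequalities, the $F_{i-1}$ terms cancel and we obtain
\[
f(a,\lambda_i) + f(b,\lambda_i') \le f(b,\lambda_i) + f(a,\lambda_i').
\]
So it suffices to show the reverse (a ``supermodularity''/Monge-type inequality) $f(b,\lambda_i) + f(a,\lambda_i') \le f(a,\lambda_i) + f(b,\lambda_i')$ whenever $b < a$ and $\lambda_i < \lambda_i'$, with strictness in the relevant configuration, or at least that equality forces $a$ not to be the \emph{maximal} argmin for $\lambda_i'$, again a contradiction.

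The key step is therefore establishing this Monge-type inequality for $f$, and here I would lean heavily on the structural description of $f$ from Lemma~\ref{lem:f_bivariate_diagram} and Lemma~\ref{lem:f_univariate_structure}. The intuition is geometric: an optimal curve from $a$ to $\lambda_i$ and an optimal curve from $b$ to $\lambda_i'$ are two $x$-monotone curves in the slab $[x_{i-1},x_i]\times[0,|Q|]$; since $b < a$ at the left boundary but $\lambda_i < \lambda_i'$ at the right boundary, these two curves must cross at some $x^* \in [x_{i-1},x_i]$. Swapping the two curves at the crossing point yields a curve from $b$ to $\lambda_i$ and a curve from $a$ to $\lambda_i'$; both are still $x$-monotone, still respect the slope (speed) bound, and the total time-in-obstacles is exactly preserved because the set $B_\rho$ restricted to the slab is a union of full-width axis-parallel rectangles (Lemma~\ref{lem:bsd_semi_discrete}), so obstacle membership of a point depends only on its $y$-coordinate and $x$-coordinate independently of which curve it belongs to. Hence $f(b,\lambda_i) + f(a,\lambda_i') \le f(a,\lambda_i) + f(b,\lambda_i')$, giving the needed inequality; combined with the displayed inequality above this forces equality throughout, so in particular $F_{i-1}(a) + f(a,\lambda_i') = F_{i-1}(b)+f(b,\lambda_i')$, meaning $a$ is also an argmin for $\lambda_i'$ — contradicting $\Psi(\lambda_i') = b < a$ being the maximal argmin.

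I expect the main obstacle to be the degenerate cases in the crossing/swapping argument: making precise that two optimal curves with ``crossed'' endpoints can be taken to intersect (they might only touch, or one might need to be perturbed among the family of optimal curves), and handling the speed bound at the crossing — when we splice, the slope at $x^*$ could momentarily violate the bound, which I would fix by first routing both curves through the common point $(x^*, y^*)$ using the three-segment normal form from the proof of Lemma~\ref{lem:f_bivariate_diagram} before swapping tails. A secondary subtlety is that $f$ can take the value $\infty$; I would dispatch this first by noting that if either side of the target inequality is finite then the endpoints are within the speed-reachable band, and the band structure (claim a) of Lemma~\ref{lem:f_bivariate_diagram}, the lines $\ell_1,\ell_2$ of slope $1$) is itself Monge, so the swapped pairs are also reachable and hence finite. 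Once finiteness is secured, the obstacle-time bookkeeping is the routine part and follows directly from the rectangle structure.
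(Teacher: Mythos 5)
Your proposal is correct and rests on the same core idea as the paper's proof: two optimal curves whose endpoints are ``crossed'' ($b<a$ on $\Lambda_{i-1}$ but $\lambda_i<\lambda_i'$ on $\Lambda_i$) must intersect inside the slab, and swapping their tails at that intersection cannot increase total time-in-obstacles. The paper applies this exchange directly to the full paths from $(1,1)$, comparing cumulative costs at the crossing point; you instead cancel the $F_{i-1}$ terms algebraically and reduce the claim to a Monge-type inequality for the slab-local cost $f$, which you then prove by the same crossing/swap restricted to the slab—a cleaner modular packaging that isolates a reusable structural fact about $f$, but geometrically the same argument. Your observation that mere \emph{equality} in the Monge inequality already contradicts the maximality in the definition of $\Psi$ mirrors the paper's ``smaller or equal cost'' step, and the technical caveats you flag (splice differentiability, the $\infty$ band) are real but handled at the same informal level in the paper.
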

\begin{proof}
Assume for the sake of contradiction that $\Psi(\lambda_i)$ is not monotone, that is, there are values $q_1<q_2$ for which $\Psi(q_1)=:r_1>r_2:=\Psi(q_2)$. That is, the highest optimal path $\gamma_1$ to $q_1$ passes through $r_1$ and the highest optimal path $\gamma_2$ to the higher $q_2$ goes through $r_2$, which is lower than $r_1$. In particular, the paths $\gamma_1$ and $\gamma_2$ cross in some point $x$. Now the cost of $\gamma_2$ from to $x$ has to be strictly smaller than the cost of $\gamma_1$ to $x$, otherwise we would get a path to $q_2$ with smaller or equal cost than $\gamma_2$ by following $\gamma_1$ until $x$ and only following $\gamma_2$ from $x$ to $q_2$, so $\gamma_2$ would not be the highest optimal path. But then the symmetric construction gives us a path to $q_1$ with strictly smaller cost than $\gamma_1$, which is a contradiction.
\end{proof}

\begin{restatable}{lemma}{fstructurelemma}
    \label{lem:Fstructure}
The function $F_i(\lambda_i)$ is a piecewise linear function with $O(im)$ parts.
\end{restatable}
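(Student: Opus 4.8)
My plan is to prove this by induction on $i$, using the recurrence $F_i(\lambda_i) = \min_{\lambda_{i-1}} \big( F_{i-1}(\lambda_{i-1}) + f(\lambda_{i-1}, \lambda_i) \big)$ together with the structural description of $f$ from Lemma \ref{lem:f_bivariate_diagram} and the monotonicity of $\Psi$ from Lemma \ref{lem:argmin}.

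For the base case $i = 1$, the function $F_1(\lambda_1) = f(\lambda_0, \lambda_1)$ where $\lambda_0$ is the fixed start point $1 \in \Lambda_0$; by Lemma \ref{lem:f_bivariate_diagram} (restricting the bivariate plot of $f$ to a vertical line, as described in Lemma \ref{lem:f_univariate_structure}) this is piecewise linear with $O(m)$ pieces. For the inductive step, assume $F_{i-1}$ is piecewise linear with $O((i-1)m)$ pieces. I want to argue that $F_i(\lambda_i) = F_{i-1}(\Psi(\lambda_i)) + f(\Psi(\lambda_i), \lambda_i)$ is piecewise linear with $O(im)$ pieces. The key idea is that by Lemma \ref{lem:argmin}, $\Psi$ is monotone, so as $\lambda_i$ sweeps upward, the optimal predecessor $\Psi(\lambda_i)$ also sweeps upward monotonically; thus the composition traverses the breakpoints of $F_{i-1}$ at most once, contributing $O((i-1)m)$ breakpoints, and it traverses the grid structure of $f$ (which by Lemma \ref{lem:f_bivariate_diagram} has an $O(m) \times O(m)$ grid, but along a monotone staircase path only $O(m)$ cells are crossed) contributing $O(m)$ more breakpoints. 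Adding these gives $O(im)$ pieces, and within each piece both $F_{i-1}$ and $f$ are linear (here I need that $\Psi$ itself is piecewise linear between these breakpoints, which follows because on each grid cell $f$ is linear and $F_{i-1}$ is linear, so the minimizing $\lambda_{i-1}$ either sits at a fixed breakpoint or moves linearly; the slope structure from Lemma \ref{lem:f_univariate_structure}, in particular the slopes $\{-1,0,1\}$, makes this tractable).

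The main obstacle I anticipate is making precise the claim that the monotone sweep of $\Psi$ only incurs $O(m) + O((i-1)m)$ new breakpoints rather than a product-sized blowup. Concretely: a breakpoint of $F_i$ occurs only where either $\Psi(\lambda_i)$ hits a breakpoint of $F_{i-1}$, or $(\Psi(\lambda_i), \lambda_i)$ crosses a grid edge in the plot of $f$, or $\Psi$ itself changes its "mode" (switches from tracking a fixed breakpoint of $F_{i-1}$ to moving along a cell of $f$, or vice versa). Because $\Psi$ is monotone, the first type happens $O((i-1)m)$ times total and the second type happens $O(m)$ times total (a monotone staircase through an $O(m) \times O(m)$ grid crosses $O(m)$ cells); the third type is subsumed, since a mode change is triggered by one of the first two events. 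I will need to verify that between consecutive such events $F_i$ is genuinely linear — this reduces to checking that on a single cell of the $f$-grid, with $F_{i-1}$ linear, the function $\lambda_i \mapsto \min_{\lambda_{i-1}}(F_{i-1}(\lambda_{i-1}) + f(\lambda_{i-1},\lambda_i))$ is linear, which is a finite case analysis over the cell types (completely covered, half-covered, completely uncovered, and the diagonal-split cells) using the explicit gradients listed in the proof of Lemma \ref{lem:f_bivariate_diagram}. This case analysis is routine but needs care with the $\pm 1$-slope subdivisions of the diagonal cells.
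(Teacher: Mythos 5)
Your proposal matches the paper's own proof in all essentials: induction on $i$, invoking the monotonicity of $\Psi$ (Lemma~\ref{lem:argmin}) to conclude that the graph of $\Psi$ is an $xy$-monotone path in the plot of $f$, and then bounding the number of breakpoints of $F_i$ by counting the events along this path, namely $O(m)$ from the grid structure of $f$ plus $O((i-1)m)$ from breakpoints of $F_{i-1}$. The case analysis you defer as ``routine but needs care'' is carried out in the paper via a finer classification of the graph of $\Psi$ (horizontal step segments with vertical jumps, segments along $\ell_1,\ell_2$, and crossings of the slope-$(-1)$ diagonals), which is exactly where the $O((i-1)m)$ contribution from $F_{i-1}$-breakpoints is localized, but your high-level accounting is the same.
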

\begin{proof}
We prove this by induction on $i$. For the base case we may assume $F_0(\lambda_0)=0$.
By Lemma \ref{lem:f_bivariate_diagram} $f(\lambda_{i-1},\lambda_i)$ is piecewise linear and its cells form (more or less) a grid structure. By the induction hypothesis $F(\lambda_{i-1})$ is piecewise linear and as it only depends on $\lambda_{i-1}$, we get that the function $F_{i-1}(\lambda_{i-1})+f(\lambda_{i-1},\lambda_i)$ is also piecewise linear and each of its pieces is defined by subdividing the pieces of $f(\lambda_{i-1},\lambda_i)$ by horizontal lines corresponding to the non-differentiable points of $F_{i-1}(\lambda_{i-1})$. In particular, for each $\lambda_i$ we have that $\Psi(\lambda_i)$ lies on the boundary of a piece.
This already implies that $F_i(\lambda_i)$ is piecewise linear, and it remains to bound the number of parts.

For this we consider the graph of $\Psi(\lambda_i)$. By Lemma \ref{lem:argmin}, $\Psi(\lambda_i)$ is monotone, so this graph is $xy$-monotone, and as it follows the boundaries of the pieces, it must thus consist of segments on $\ell_1$ and $\ell_2$ and horizontal segments with vertical jumps, i.e., outside of $\ell_1$ and $\ell_2$ it is the graph of a step function. Further, a new part of $F_i(\lambda_i)$ can only occur whenever the graph of $\Psi(\lambda_i)$ crosses a vertical boundary of a piece, makes a vertical jump or crosses the horizontal boundary of a piece while following $\ell_1$ or $\ell_2$. For the first type of changes we note that all the boundaries that were added with the addition of $F_{i-1}(\lambda_{i-1})$ are horizontal, so the number of vertical boundaries is still $O(m)$ as proven in Lemma \ref{lem:f_bivariate_diagram}. As for the number of vertical jumps, we note that every vertical jump is either on a vertical boundary or crosses a diagonal segment of slope $-1$, where each such segment can only be crossed once by monotonicity. As by Lemma \ref{lem:f_bivariate_diagram} the number of diagonal segments is also $O(m)$, these also contribute to $O(m)$ changes. Finally, the number of horizontal lines that can be crossed is $O(m)$ for the grid lines of $f(\lambda_{i-1},\lambda_i)$ and $O((i-1)m)$ for the lines corresponding to non-differentiable points of $F_{i-1}(\lambda_{i-1})$. It follows that the total number of parts is $O(im)$.
\end{proof}

From the above proof it follows that there are two types of non-differentiable points in $F_{i}(\lambda_i)$. The first type are the ones that were already present in $F_{i-1}(\lambda_{i-1})$, i.e., the ones corresponding to points where the graph of $\Psi(\lambda_i)$ goes along $\ell_1$ or $\ell_2$ and crosses a horizontal line induced by $F_{i-1}(\lambda_{i-1})$. The second type are the newly introduced ones, that is, the once defined by crossing a grid line of $f(\lambda_{i-1},\lambda_i)$ or a vertical jump. The arguments above show that the number of non-differentiable points of the second type is on $O(m)$. We call these points \emph{new breaking points}. We call the horizontal lines induced by the new breaking points of $F_{i-1}(\lambda_{i-1})$ as well as the first horizontal lines before and after them \emph{breaking lines}.%

\begin{restatable}{lemma}{monotonebreaklemma}
\label{lem:monotone_between_breakings}
Between any two new breaking points the function $F_i(\lambda_i)$ is monotonically increasing or decreasing and each piece has integer slope.
\end{restatable}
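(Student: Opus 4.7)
Let $I$ be an interval between two consecutive new breaking points of $F_i$. I will start by inspecting the graph of $\Psi$ on $I$. By the definition of a new breaking point, no vertical jump of $\Psi$ occurs on the interior of $I$ and no grid line of $f(\lambda_{i-1},\lambda_i)$ is crossed there. Combined with the description of $\Psi$ in the proof of Lemma~\ref{lem:Fstructure}, this forces the graph of $\Psi$ on $I$ to be either a single horizontal segment, or a single subsegment of one of the diagonal lines $\ell_1,\ell_2$, and guarantees that $(\Psi(\lambda_i),\lambda_i)$ stays within one cell of the grid structure of $f$ provided by Lemma~\ref{lem:f_bivariate_diagram}. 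The only pieces of $F_i$ on $I$ therefore arise from crossing horizontal lines induced by breakpoints of $F_{i-1}$.

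In the horizontal case, $F_{i-1}(\Psi(\lambda_i))$ is constant and $f(\Psi(\lambda_i),\lambda_i)$ is the restriction of an affine function with integer gradient (Lemma~\ref{lem:f_bivariate_diagram}) to a vertical line inside one cell, so $F_i$ on $I$ is a single linear piece with integer slope and is trivially monotone. In the diagonal case, since $\ell_1$ and $\ell_2$ have slope $1$, I write $\Psi(\lambda_i)=\lambda_i+c$ for a constant $c$, and I let $(g_1,g_2)\in\mathbb{Z}^2$ denote the gradient of $f$ on the visited cell. Then
\[
F_i(\lambda_i) \;=\; F_{i-1}(\lambda_i+c)\;+\;(g_1+g_2)\lambda_i\;+\;\mathrm{const},
\]
and by the induction hypothesis applied to $F_{i-1}$, each linear piece of $F_i$ on $I$ has integer slope.

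For monotonicity in the diagonal case, the main tool will be first-order optimality of $\Psi$ being on the boundary of the feasible region. On $\ell_1$, since lowering $\lambda_{i-1}$ is infeasible, the right derivative of $F_{i-1}(\lambda_{i-1})+f(\lambda_{i-1},\lambda_i)$ with respect to $\lambda_{i-1}$ at $\lambda_{i-1}=\Psi(\lambda_i)$ must be nonnegative; on $\ell_2$ the symmetric inequality holds for the left derivative. As $\lambda_i$ ranges over $I$, this constraint must hold at every $\lambda_{i-1}=\lambda_i+c$, and therefore every piece of $F_{i-1}$ traced out on $I$ has slope at least $-g_1$ (resp.\ at most $-g_1$). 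Plugging into the formula above, each piece of $F_i$ on $I$ has slope at least $g_2$ (resp.\ at most $g_2$). A case analysis on the cell types of Lemma~\ref{lem:f_bivariate_diagram} that can sit adjacent to $\ell_1$ (resp.\ $\ell_2$) then shows $g_2\geq 0$ (resp.\ $g_2\leq 0$) in every such cell, and so the pieces of $F_i$ on $I$ all have slopes of the same sign.

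The step I expect to be most delicate is this final case analysis: one must enumerate the cell configurations along $\ell_1$ and $\ell_2$ (fully covered, half-covered, fully uncovered, and the triangular subdivisions inside large completely covered cells, as depicted in Figure~\ref{fig:case3}) and verify the sign of the $\lambda_i$-component of the gradient in each. Although each individual case follows directly from the formulas established in the proof of Lemma~\ref{lem:f_bivariate_diagram}, the bookkeeping requires carefully tracking which cells are actually intersected in their interior by the speed-bound lines $\ell_1$ and $\ell_2$, rather than only at corners.
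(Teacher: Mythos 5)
Your plan is sound and reaches the same conclusion as the paper, but the two arguments finish differently. The paper also reduces to the case where $\Psi$ follows $\ell_1$ or $\ell_2$ and gets integer slopes from the induction hypothesis together with the fact that $f$ restricted to such a diagonal is linear with slope in $\{-1,0,1\}$; however, for monotonicity it argues combinatorially: it first shows that the $F_{i-1}$-values at the crossed breakpoints $a_1<\dots<a_k$ are non-decreasing (otherwise the max-slope segment would not be optimal), which is the discrete counterpart of your first-order condition $F_{i-1}'\geq -g_1$, and then it excludes a residual slope $-1$ by an explicit path-exchange (rerouting the segment from $a_{j+1}$ to $b_j$). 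You instead fold both steps into the single inequality $F_i'\geq g_2$ plus the claim that $g_2\geq 0$ in every cell of the plot of $f$ whose interior meets $\ell_1$ (and symmetrically $g_2\leq 0$ along $\ell_2$). That claim is true, and your instinct about where the danger lies is exactly right: the only cells with negative $\lambda_i$-gradient on the $\ell_1$ side are the triangular pieces of the completely covered diagonal cells (gradient $(-1,-1)$), and since those squares have side length exactly $\ell$, the line $\ell_1$ meets them only at a corner, never in the interior; in all remaining cells below the main diagonal the cost equals the covered height between $\lambda_{i-1}$ and $\lambda_i$ (up to the constant-cost cells), so its $\lambda_i$-derivative is $0$ or $1$. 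What your write-up lacks is the actual execution of this case analysis, which is the step carrying the content of the paper's exchange argument; as submitted it is a correct plan rather than a complete proof. If you carry out that enumeration (Cases 1--3 of Lemma~\ref{lem:f_bivariate_diagram}), your version is arguably cleaner than the paper's, since the KKT-style inequality replaces two separate ad hoc exchange arguments.
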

\begin{proof}
We prove this by induction on $i$. The base case $F_0(\lambda_0)$ is again trivial.
We first note that the only case where $F_i(\lambda_i)$ is not just a single segment between two new breaking points is if $\Psi(\lambda_i)$ follows $\ell_1$ or $\ell_2$. Consider such a part of $\Psi(\lambda_i)$, without loss of generality on $\ell_1$ and the horizontal lines it crosses. The crossed horizontal lines correspond to non-differentiable points $a_1<a_2<\ldots<a_k$ of $F_{i-1}(\lambda_{i-1})$ on $\Lambda_{i-1}$ from which to considered optimal paths in the current slab are segments of maximal slope. Denote by $b_j$ the endpoint of the segment starting at $a_j$. We first note that $F_{i-1}(a_1)\leq\ldots\leq F_{i-1}(a_k)$, as otherwise the segment of maximal slope would not be an optimal path. As the pieces of $F_{i-1}(\lambda_{i-1})$ have integer slopes by the induction hypothesis and $f(\lambda_{i-1},\lambda_i)=\alpha x+c$ for some constant $c$ and an $\alpha\in\{-1,0,1\}$, we thus already get that the slopes of the relevant pieces are in the set $\{-1,0,1\ldots\}$. It remains to show that there is no segment of slope $-1$. Assume for the sake of contradiction that $[F_i(b_j),F_i(b_{j+1})]$ has slope $-1$. This can only happen if $F_{i-1}(a_j)=F_{i-1}(a_{j+1})$ and $f(a_j,b_j)=f(a_{j+1},b_{j+1})-1$. In particular, $b_j$ and $b_{j+1}$ are uncovered. But then the segment from $a_{j+1}$ to $b_j$ gives a path to $b_j$ with smaller cost, which is a contradiction.
\end{proof}

\begin{corollary}\label{cor:order}
Let $a_1$ and $a_2$ be two consecutive non-differentiable points of $F_{i-1}(\lambda_{i-1})$ that induce non-differentiable points $b_1$ and $b_2$ in $F_i(\lambda_i)$. Then $b_1$ and $b_2$ are also consecutive.
\end{corollary}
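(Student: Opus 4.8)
The plan is to show that no non-differentiable point of $F_i$ lies strictly between $b_1$ and $b_2$, which is exactly what ``consecutive'' asks for. Recall from the proof of Lemma~\ref{lem:Fstructure} that every non-differentiable point of $F_i$ is a value of $\lambda_i$ at which the graph of $\Psi$ either makes a vertical jump, or crosses a vertical boundary of a cell of the grid structure of $f$, or --- while running along $\ell_1$ or $\ell_2$ --- crosses a horizontal line of the grid structure; in the last case this line is either a grid line of $f$ (giving a \emph{new breaking point}) or lies at the height of a non-differentiable point of $F_{i-1}$ (giving an \emph{inherited} point). By hypothesis $b_1$ and $b_2$ are inherited points: the graph of $\Psi$ runs along $\ell_1$ or $\ell_2$ at $\lambda_i=b_1$ (resp. $\lambda_i=b_2$) and takes the value $a_1$ (resp. $a_2$) there. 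Since $a_1<a_2$ and $\Psi$ is monotone (Lemma~\ref{lem:argmin}), we may assume $b_1<b_2$, and $\Psi(\lambda_i)\in(a_1,a_2)$ for every $\lambda_i\in(b_1,b_2)$.

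First I would rule out \emph{inherited} breakpoints of $F_i$ in the open interval $(b_1,b_2)$: such a breakpoint would be a $\lambda_i$ at which the graph of $\Psi$ crosses a horizontal line at the height of a non-differentiable point of $F_{i-1}$, and by the previous paragraph this height lies in $(a_1,a_2)$, contradicting that $a_1$ and $a_2$ are consecutive non-differentiable points of $F_{i-1}$. (If ``consecutive'' in the statement is read as ``consecutive among the inherited points of $F_i$'', this already closes the argument.)

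The step I expect to be the main obstacle is ruling out a \emph{new breaking point} of $F_i$ strictly between $b_1$ and $b_2$. The main tool is that $a_1,a_2$ being consecutive makes $F_{i-1}$ a single linear piece on $[a_1,a_2]$, so on the corresponding horizontal strip of the plot of $f$ the cell structure of $F_{i-1}(\lambda_{i-1})+f(\lambda_{i-1},\lambda_i)$ coincides, up to a global linear shift, with that of $f$. I would first show that on $(b_1,b_2)$ the graph of $\Psi$ stays on a single boundary line $\ell_1$ or $\ell_2$: by Lemma~\ref{lem:Fstructure} it is a step function away from $\ell_1,\ell_2$, so it can only leave $\ell_1$ (resp. $\ell_2$) through an upward (resp. downward) vertical jump, and such a jump inside $(b_1,b_2)$ would itself be a new breaking point; the heart of this sub-step is to show, via the monotonicity of $\Psi$ and the local form of $f$ near $\ell_1,\ell_2$ (Lemmas~\ref{lem:f_bivariate_diagram}, \ref{lem:f_univariate_structure} and~\ref{lem:monotone_between_breakings}), that inserting such a jump between two crossings of the \emph{inducing} kind at heights $a_1$ and $a_2$ is impossible. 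With the graph of $\Psi$ confined to a single boundary line on $(b_1,b_2)$, the only remaining way it can create a breakpoint there is by crossing a horizontal line of the strip, and the only such lines are grid lines of $f$, since no $F_{i-1}$-breakpoint line lies in the open strip. The last and most delicate point --- where I expect the real work to be --- is to argue that the hypothesis that \emph{both} $a_1$ and $a_2$ induce breakpoints forces this open strip to contain no grid line of $f$ either, i.e., that no obstacle of the $i$-th slab has a horizontal boundary at a height strictly between $a_1$ and $a_2$. Granting that, nothing lies strictly between $b_1$ and $b_2$, so they are consecutive non-differentiable points of $F_i$.
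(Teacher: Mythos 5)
The paper states this corollary without proof, presenting it as an immediate consequence of the preceding discussion (the classification of breakpoints into inherited and new ones, together with Lemma~\ref{lem:argmin}). Your first step --- ruling out inherited breakpoints in the open interval $(b_1,b_2)$ via the monotonicity of $\Psi$ and the assumption that $a_1,a_2$ are consecutive in $F_{i-1}$ --- is correct, and it is exactly what the algorithm relies on: in the proof of Lemma~\ref{lem:Fcomputation}, the $\textsc{ShiftHorizontal}$ calls only require that the inherited keys keep their relative order after the update, which is precisely what this step gives.

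Your concern about the stronger reading (no new breaking point of $F_i$ strictly between $b_1$ and $b_2$ either) is well placed, and I do not see how to close it. A horizontal grid line of $f$ --- the height of an obstacle boundary in the $i$-th slab --- depends only on $Q$, $p_i$ and $\rho$, not on $F_{i-1}$, so nothing in the hypothesis that $a_1$ and $a_2$ are consecutive breakpoints of $F_{i-1}$ prevents such a line from lying at a height strictly between $a_1$ and $a_2$. If it does, the graph of $\Psi$ running along $\ell_1$ or $\ell_2$ crosses it inside $(b_1,b_2)$, and by the paper's own definitions that crossing is a new breaking point of $F_i$. So the "last and most delicate point" you flag is a genuine gap under the strong reading, not just a step you haven't worked out. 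The resolution is the one you parenthetically offer yourself: "consecutive" here should be read as "consecutive among the inherited breakpoints of $F_i$," and under that reading your first step is the complete proof.
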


\begin{restatable}{lemma}{psipathlemma}
\label{lem:Psi_path}
Whenever the graph of $\Psi(\lambda_i)$ follows a horizontal line, this line is either a grid line or a breaking line.
\end{restatable}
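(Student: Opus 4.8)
The plan is to analyze what it means for the graph of $\Psi(\lambda_i)$ to follow a horizontal line, i.e., for $\Psi$ to be constant on an interval $I$ of values $\lambda_i$. By Lemma~\ref{lem:argmin} the function $\Psi$ is monotone, and from the proof of Lemma~\ref{lem:Fstructure} its graph follows the boundaries of the pieces of $F_{i-1}(\lambda_{i-1})+f(\lambda_{i-1},\lambda_i)$, which are exactly the grid cells of $f$ (from Lemma~\ref{lem:f_bivariate_diagram}) further subdivided by horizontal lines coming from the non-differentiable points of $F_{i-1}$. So a horizontal stretch of the graph of $\Psi$ lies along a horizontal edge of this refined subdivision, which is either a grid line of $f(\lambda_{i-1},\lambda_i)$ (the horizontal lines at heights $b_j,t_j$ coming from obstacle boundaries, or the coordinates $p_1,\dots,p_4$ of Lemma~\ref{lem:f_univariate_structure}) or a horizontal line induced by a non-differentiable point of $F_{i-1}$.

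First I would rule out that $\Psi$ can rest on a horizontal line induced by a non-differentiable point of $F_{i-1}$ that is \emph{not} a breaking line. Suppose $\Psi(\lambda_i)=a$ for all $\lambda_i$ in an interval $I$, where $a$ is such an interior non-differentiable point of $F_{i-1}$. For $\Psi$ to be worth staying at the \emph{same} $\lambda_{i-1}=a$ while $\lambda_i$ varies over $I$, the function $\lambda_i\mapsto f(a,\lambda_i)$ must be constant on $I$ (any decrease would be captured by a different, lower $\lambda_{i-1}$, and we take the highest argmin; any increase would make it not the minimizer against a neighboring $\lambda_{i-1}$). By Lemma~\ref{lem:f_bivariate_diagram}, the only way $f(a,\cdot)$ is constant on a nondegenerate interval while $a$ is strictly interior to an obstacle (i.e.\ $a$ is covered but not an endpoint $b_j$ or $t_j$) is inside a completely-covered cell where $f$ is constant at the maximal cost $c$ — but then the value $F_{i-1}(a)+c$ is never optimal, since from any covered $\lambda_{i-1}$ near the obstacle boundary one reaches the same $\lambda_i$ at strictly smaller cost by Lemma~\ref{lem:f_univariate_structure} (the piece through a covered interval has slope $\pm1$, not $0$, away from obstacle boundaries). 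Hence $a$ must coincide with an obstacle boundary coordinate $b_j$ or $t_j$, but those are grid lines of $f$, and we assumed $a$ was a non-differentiable point of $F_{i-1}$ not on a grid line — contradiction. So if $\Psi$ rests on a horizontal line through a non-differentiable point of $F_{i-1}$, that point is one of the finitely many that got \emph{preserved}, i.e.\ exactly the ones the paragraph before the lemma calls new breaking points of $F_{i-1}$, whose horizontal lines (together with the neighboring ones) are by definition breaking lines.

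The remaining case is that $\Psi$ follows a horizontal line that is a grid line of $f(\lambda_{i-1},\lambda_i)$, which is precisely the first alternative in the statement, so there is nothing more to prove there. I would organize the write-up as: (1) recall from the proof of Lemma~\ref{lem:Fstructure} that the graph of $\Psi$ is a step function (outside $\ell_1,\ell_2$) following piece boundaries; (2) classify a horizontal step by the type of horizontal line it lies on — a horizontal boundary is either a grid line of $f$ or a horizontal line induced by $F_{i-1}$; (3) dispatch the grid-line case immediately, and (4) for the $F_{i-1}$-induced case run the optimality/exchange argument above (the same flavor as the crossing argument in Lemma~\ref{lem:argmin}) to show the line must be a breaking line. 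The main obstacle is step (4): one must carefully argue that resting at a fixed $\lambda_{i-1}=a$ over a range of $\lambda_i$ forces $f(a,\cdot)$ locally constant, and then use the precise slope-$\pm1$-on-covered-intervals structure of Lemma~\ref{lem:f_univariate_structure} (read in the other coordinate) to conclude $a$ is an obstacle boundary — hence a grid line — unless $a$ survives as a breaking point of $F_{i-1}$; handling the endpoints $p_1,\dots,p_4$ and the transition pieces of slope $\pm1$ near the diagonal cleanly is where the bookkeeping lives.
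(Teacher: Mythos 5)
There is a genuine gap in step (4). You claim that if $\Psi(\lambda_i)=a$ for all $\lambda_i$ in an interval $I$, then $\lambda_i\mapsto f(a,\lambda_i)$ must be constant on $I$. That does not follow from the definition of $\Psi$: $a$ being the (highest) argmin of $\lambda_{i-1}\mapsto F_{i-1}(\lambda_{i-1})+f(\lambda_{i-1},\lambda_i)$ for every $\lambda_i\in I$ only says that $F_{i-1}(a)+f(a,\lambda_i)\leq F_{i-1}(b)+f(b,\lambda_i)$ for all $b$ and $\lambda_i\in I$; the common minimum value $F_i(\lambda_i)$ is perfectly free to vary with $\lambda_i$ while $a$ stays the minimizer (e.g.\ if $F_{i-1}$ has a deep dip at $a$). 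The parenthetical justification (``any decrease would be captured by a different $\lambda_{i-1}$\ldots'') conflates a change in the value of $f(a,\cdot)$ with a change in which $\lambda_{i-1}$ is optimal. Since the rest of step (4) builds on this false premise, the argument does not go through; you also never invoke Lemma~\ref{lem:monotone_between_breakings}, which is actually what the claim hinges on.

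The paper instead fixes a single $\lambda_i$ and studies the univariate function $\lambda_{i-1}\mapsto F_{i-1}(\lambda_{i-1})+f(\lambda_{i-1},\lambda_i)$: if the argmin is neither on $\ell_1,\ell_2$, nor on a grid line, nor on (or adjacent to) a new breaking point, then by Lemma~\ref{lem:f_univariate_structure} the $f$-summand is locally affine with slope $\alpha\in\{-1,0,1\}$, and by Lemma~\ref{lem:monotone_between_breakings} the $F_{i-1}$-summand is monotone with integer slopes between new breaking points; adding a $\{-1,0,1\}$-slope affine piece to an integer-slope monotone function cannot create an interior local minimum except possibly on the first or last piece, so the minimizer is pushed to a grid line or to a (neighbor of a) new breaking point. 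To repair your write-up you should drop the ``$f(a,\cdot)$ is constant'' claim and switch to this per-$\lambda_i$, slope-accounting argument, explicitly using Lemma~\ref{lem:monotone_between_breakings}.
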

\begin{proof}
Fix a value $\lambda_i$ and assume that the minimum of $F_{i-1}(\lambda_{i-1})+f(\lambda_{i-1},\lambda_i)$ does not lie on $\ell_1$ or $\ell_2$. We want to show that it either lies on or right next to a new breaking point of $F_{i-1}(\lambda_{i-1})$ or on a grid point of $f(\lambda_{i-1},\lambda_i)$. For the sake of contradiction assume otherwise. Then $\lambda_{i-1}$ is in a cell of $f(\lambda_{i-1},\lambda_i)$, that is, by Lemma \ref{lem:f_univariate_structure} we have that $f(\lambda_{i-1},\lambda_i)=\alpha x+c$ for some constant $c$ and a fixed $\alpha\in\{-1,0,1\}$. Further, $\lambda_{i-1}$ is between two new breaking points of $F_{i-1}(\lambda_{i-1})$, thus by Lemma \ref{lem:monotone_between_breakings} we have that $F_{i-1}(\lambda_{i-1})$ is monotonically increasing or decreasing and each piece has integer slope. Thus, $F_{i-1}(\lambda_{i-1})+f(\lambda_{i-1},\lambda_i)$ is monotonically increasing or decreasing except possibly on the first or last part. But then the minimum is either on a new breaking point or right next to it.
\end{proof}

\subsection{The algorithm}
\label{sec:algorithm}
For the algorithm we proceed iteratively over the vertical slabs from left to right.
Ideally, we would want to compute the function $F_i(\lambda_i)$ iteratively for every vertical slab.
The problem though is that by Lemma~\ref{lem:Fstructure} only guarantees us that there are no more than
$O(nm)$ many breakpoints over the whole computation. 
This would pretty directly lead to a $O(nm^2\log m)$ algorithm after showing how to compute the argmin
in which we maintain and update all function values at breakpoints explicitly.
To improve the running time, we next describe a data structure, 
which will allow us to query and update 
the function values $F_{i-1}(\lambda_{i-1})$ for specific breakpoints $\lambda_{i-1}$.
Crucially, while this data structure might contain at some point $O(nm)$ elements,
all its operations can be executed in $O(\log(nm))$ time.

\mypar{Maintaining the breakpoints.} Let $(x, g)$ be a \emph{key-function pair} where $x \in \mathbb R$ and $g : \mathbb R \rightarrow \mathbb R$
is a linear function.
We assume that the linear function $g$ is represented by the real numbers $m$ and $b$ 
where $m$ is the slope and $b$ is the $y$-axis value of the function.
Hence, when we in the following speak of adding two linear functions we
just need to add its slope and $y$-axis value.
To query $F_i$ on the set of up to $O(mn)$ breakpoints we next describe a data structure that
stores $n$ key-function pairs and allows us to efficiently add functions to the pairs and 
evaluate them at a point equivalent to some key stored in the data structure.
Our data structure should support for a pair $(x,g)$ insertion via $\textsc{Insert}(x,g)$,
deleting via $\textsc{Delete}(x,g)$,
evaluating $g$ at $x$ via $\textsc{Evaluate}(x)$,
and the following two operations:
\begin{description}
    \item[{$\textsc{ShiftVertical}(a, b, h)$}] Add for each pair $(x, g)$ with $x \in (a,b)$ the function $h$ to $g$.
    \item[{$\textsc{ShiftHorizontal}(a, b, y)$}] Add for each pair $(x,g)$ with $x \in (a,b)$ the value $y$ to $x$ assuming that the order of elements in the data structure sorted by their key remains the same.
\end{description}

\begin{restatable}{lemma}{datastructure}
    \label{lem:datastructure}
    There is a data structure that maintains a set of $n$ key-function pairs and 
    supports the above operations in $O(\log n)$ time each.
\end{restatable}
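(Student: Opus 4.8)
My plan is to build the data structure as an augmented balanced binary search tree (e.g.\ a red-black tree or a weight-balanced tree) whose leaves store the key-function pairs $(x,g)$ in sorted order of the key $x$, together with lazy-propagation tags at internal nodes, exactly in the spirit of a segment-tree / interval-tree with lazy updates. The two nontrivial operations, \textsc{ShiftVertical} and \textsc{ShiftHorizontal}, are both "add a value to every element whose key lies in a query interval $(a,b)$", so the standard trick applies: locate the $O(\log n)$ canonical subtrees covering $(a,b)$ and deposit a lazy tag at each of their roots, pushing pending tags down the search path on the way. The only subtlety is that here there are \emph{two} independent additive quantities being shifted --- the linear function $g$ (represented by the pair $(\text{slope},\text{intercept})$, which adds componentwise) and the key $x$ (a single real) --- so each internal node carries a pair of pending tags $(\Delta g,\Delta x)$, and pushing down simply adds the parent's tags to the two children's tags and clears the parent. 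Since adding linear functions is componentwise addition of $(\text{slope},\text{intercept})$, all tag combination is $O(1)$.

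Concretely, I would maintain at each leaf the "true" values $(x,g)$ reconstructed as the stored base value plus the sum of all ancestor tags; \textsc{Evaluate}$(x)$ walks down from the root to the leaf with key $x$ (using the fact that the tree is a BST on keys), accumulating tags along the way, and returns $g(x)$ with the accumulated $\Delta g$ applied. \textsc{Insert} and \textsc{Delete} are the usual BST insert/delete; before rotating or splicing one first pushes all tags on the affected path down to the nodes involved so that no tag straddles a structural change, and rebalancing rotations then just need to recompute the $O(1)$ aggregate at a constant number of nodes --- all in $O(\log n)$. For \textsc{ShiftVertical}$(a,b,h)$ we find the (at most two) search paths to the predecessor of $a$ and successor of $b$, and for every node that hangs off the "inside" of these paths we add $h$ (i.e.\ its slope and intercept) to that node's $\Delta g$ tag; \textsc{ShiftHorizontal}$(a,b,y)$ is identical but adds $y$ to the $\Delta x$ tag. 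The assumption in the statement that the sorted order is preserved under \textsc{ShiftHorizontal} is exactly what makes this sound: since every element in the shifted interval moves by the same amount $y$ and the relative order is promised not to change, the BST structure on keys remains valid and no re-sorting is needed.

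The step I expect to require the most care is the interaction between lazy tags and the structural modifications of \textsc{Insert}/\textsc{Delete} (and the rotations used for rebalancing): one must be disciplined about pushing all pending tags strictly below any node that is about to be rotated or spliced, so that a tag never "means" something different before and after a pointer change. This is a well-understood but fiddly bookkeeping argument; once it is set up correctly, every operation touches only $O(\log n)$ nodes and does $O(1)$ work per node, giving the claimed $O(\log n)$ time bound. I would present this as: describe the augmentation and the push-down invariant, verify the invariant is restored after each of the five operations, and conclude the time bound by the $O(\log n)$ bound on path lengths and canonical-interval decompositions in a balanced tree.
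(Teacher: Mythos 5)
Your plan is a genuine alternative to the paper's construction. The paper builds a modified AVL tree whose internal nodes each store a shift value $\xi_u$ and a function $h_u$, and maintains the invariant that, for every stored pair $(k,g)$, the key $k$ and the value $g(k)$ are recovered as partial sums of the $\xi_u$'s and $h_u$'s along the root-to-node path; $\textsc{ShiftVertical}$ and $\textsc{ShiftHorizontal}$ are then implemented by depositing the update at the lowest common ancestor of the two search paths and then undoing it at a constant number of nodes near the interval endpoints. You instead store the pairs at the leaves of a balanced BST and use lazy tags $(\Delta x,\Delta g)$ over the $O(\log n)$ canonical subtrees of the query range, with push-down before structural changes. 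Both bookkeeping schemes yield $O(\log n)$ per operation, and yours is the more standard and, in my view, the easier one to verify — the paper's LCA-plus-two-corrections scheme is somewhat delicate to state correctly.

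There is, however, one step where I would not wave my hands. Your $\textsc{ShiftHorizontal}(a,b,y)$ is ``add $y$ to the $\Delta x$ tag and do nothing else.'' That faithfully implements the operation as literally stated, after which $\textsc{Evaluate}$ on the shifted element returns $g(x+y)$. But notice that in the paper's own implementation of $\textsc{ShiftHorizontal}$ the function at the LCA is \emph{also} modified, by the constant $-m\xi$; and the way the structure is used in the proof of Lemma~\ref{lem:Fcomputation} (re-keying $F_{i-1}$'s breakpoints from $\lambda_{i-1}$ to $\lambda_i=\lambda_{i-1}\pm d$ and then adding the slab cost so that $\textsc{Evaluate}(\lambda_i)=F_{i-1}(\lambda_{i-1})+f(\Psi(\lambda_i),\lambda_i)$) requires that the evaluated value be \emph{preserved} under the horizontal shift, i.e.\ that $\textsc{Evaluate}$ still return $g(x)$, not $g(x+y)$. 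A plain additive $\Delta x$ tag changes the evaluated value by $(\text{slope of }g)\cdot y$, and that slope differs from leaf to leaf, so it cannot be compensated by a single uniform $\Delta g$ tag. You would need to either store the linear function relative to its key (so that evaluation reads off the intercept and re-keying automatically translates the function), or carry a cumulative horizontal offset $\delta$ per tagged subtree and apply the correction $-m\delta$ at evaluation time; in both cases the interaction with $\textsc{ShiftVertical}$ needs real care, because the amount by which an added function $h$ contributes to the stored intercept then depends on $\delta$. This value-preservation subtlety is exactly what the paper's $-m\xi$ term is trying to handle, and it is the one piece of your plan that is not routine lazy-propagation bookkeeping; the rest of what you describe (push-down discipline around rotations and splices, canonical range decomposition, $O(\log n)$ path length) is correct and matches the standard recipe.
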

\begin{proof}
    At the core of our data structure is a modified AVL tree~\cite{adelson1962algorithm}
    $\mathcal T$ whose keys 
    are going to be the keys of the key-function pairs.
    We assume that each leaf of the AVL tree has a pointer 
    At each internal node $u \in \mathcal T$ we store a \emph{shift-value} $\xi_u \in \mathbb R$ and 
    a linear function $h_u : \mathbb R \rightarrow \mathbb R$ which we call the \emph{function} at this node.

    To show correctness we are going to maintain the following invariant for $\mathcal T$ after each operation:
    \begin{invariant}
    \label{inv:tree}
        Let $(k,g)$ be a key-function pair stored in the data structure then there exists a path starting at the root with vertices $(\xi_1,h_1),\ldots, (\xi_m,h_m)$ in $\mathcal T$ such that
        \begin{enumerate}
            \item $k = \sum_{i=1}^m \xi_i$
            \item $g(k) = \sum_{i=1}^m h_i(k)$
        \end{enumerate}
    \end{invariant}

    We implement $\textsc{Search}(k)$ in our tree such that it returns the node where 
    the key of the left tree is smaller and of the right tree is larger, or the returned node is a leaf.
    We use the following comparison.
    Let $(\xi_i,h_i)$ be the current node, then we recurse on the left subtree when
    $k \leq \xi_i$ and on the right one else using the new key $k - \xi_i$.
    Let $(\xi', g')$ be the returned node and $(\xi_1,h_1),\ldots, (\xi_m,h_m)$ the
    traversed path from the root, then $k = \sum_{i=1}^m \xi_i$ by Invariant~\ref{inv:tree} whenever 
    $(k,g)$ is an element of our data structure.

    Rotation is implemented in the usual way, but we need to update the functions stored at each node.
    Let $u = (\xi,h)$ be the node around which we execute our rotation.
    We first describe a left rotation.
    First, we update the values, then we rotate as usual.
    Let $\xi'$ be the shift-value and $h'$ the function of the right child $u_r$ of $u$.
    We add $\xi'$ to the shift-values of the children of $u_r$ and 
    $h'$ to the function of the children of $u_r$.
    We set the shift-value and function of $u_r$ to be equal to $\xi$ and $h$ respectively and
    finally set shift-value and function to constant $0$ for $u$.
    Right rotation is executed is implemented symmetrically.
    Checking all possible paths to traverse $\mathcal T$ from root to any leaf after rotation
    shows that Invariant~\ref{inv:tree} is fulfilled.
    Using left and right rotations we can implement the necessary double rotations as usual.
 
    Suppose we want to insert the key-function pair $(k,g)$.
    If $\mathcal T$ is empty we create one single node $(k,g)$ and return.
    The invariant then holds by definition.
    Else, we first search for the location to insert $(k,g)$ using the same comparison as in the search function.
    Let $(\xi_1,h_1),\ldots, (\xi_m,h_m)$ be the path starting at the root and ending at $(\xi_m,h_m)$.
    Moreover, let $\xi = \sum_{i=1}^k \xi_i$ and $h = \sum_{i=1}^k h_i$.
    We insert a new node for $(k,g)$ with the values $(k - \xi, g - h)$.
    Invariant~\ref{inv:tree} holds by definition.

    Suppose we want to delete the key-function pair $(k,g)$.
    First, locate $(k,g)$ with $\textsc{Search}(k)$, let $u = (\xi, h)$ be the returned node.
    Again, we only need to describe how to update shift-values and functions,
    the other parts are as usual.
    If $u$ is a leaf we can simply delete it.
    If $u$ has only one child $v$ 
    we simply add the shift-value $\xi$ and function $h$ to the corresponding values of $v$ and replace $u$ by $v$.
    If $u$ has two children, we locate the successor $v$ of $u$ in its right subtree and
    replace $u$ by $v$.
    In case $v$ was also the right child of $u$ we add $\xi$ to $v$'s 
    shift-value and $h$ to $v$'s function.
    It remains to handle the case when $v$ was not the right child of $u$.
    Let $\xi_v$ and $h_v$ be $v's$ shift-value and function.
    Moreover, let $w$ be the unique child of $v$.
    We replace $v$ by $w$ and add $\xi_v$ and $h_v$ to the shift-value and function of $w$.
    Then, we replace $u$ by $v$ and set its shift-value and function equal to $\xi$ and $h$ respectively.

    We argue that Invariant~\ref{inv:tree} holds after executing the above replacements, but
    before any rotations are executed.
    Since the invariant holds after a series of rotations it then holds after the whole call to delete has terminated.
    If $u$ was a leaf Invariant~\ref{inv:tree} holds again.
    Let $(\xi_1,h_1),\ldots, (\xi_m,h_m)$ be a path starting at the root and passing through $u$ before $u$ got deleted.
    The cases of $u$ having had only one child or its successor being a child are easy to check since 
    we only need to add the shift-value and function of $u$ to the node replacing it without any adjustment to the subtrees.
    By the same argument, the invariant holds for the subtree previously rooted at the successor of $u$.
    Finally, in case the successor was not a child of $u$ we simply replaced its shift-value and function by the one of $u$
    which means the invariant holds.

    It remains to describe the two shift functions.
    We begin with $\textsc{ShiftVertical}(a, b, h)$.
    Let $u_a$ and $u_b$ be the two nodes of $\mathcal T$ found by calling $\textsc{Search(a)}$ and $\textsc{Search}(b)$.
    We now add $h$ to the function of the lowest common ancestor $r$ of $u_a$ and $u_b$ in $\mathcal T$.
    In the following we assume that $u_b$ is in the right subtree of $r$ and $u_a$ in the left subtree of $r$.
    If the sum of shift-values on the path from root to $u_a$ is larger than $a$ we
    simply subtract $h$ from the function of the left child of $u_a$,
    else we subtract $h$ from the function of $u_a$ and add it to the function of its right child.
    We proceed symmetrically for $u_b$ with left and right children reversing their roles.

    Let $(k,g)$ be an element of our data structure.
    Part one of Invariant~\ref{inv:tree} holds since the shift-values were not altered.
    Let $u_k = \textsc{Search(k)}$, $u_a = \textsc{Search(a)}$, and $u_b = \textsc{Search(b)}$.
    There are two cases $k \in (a,b)$ or $k \not\in (a,b)$.
    Consider the sums $s_k$, $s_a$, $s_b$ of shift-values along paths from the root to $u_k$, $u_a$, and $u_b$,
    then it holds that $s_a < s_k < s_b$, meaning that $u_k$ is a successor or $u_a$ and a predecessor of $u_b$.
    Hence, the lowest common ancestor $r$ of $u_a$ and $u_b$ is on the root to $u_k$ path,
    which means that $g(k) = \sum_{i=1}^m(h_i(k)) = \sum_{i=1, u_i\neq r}^m h_i(k) + h_r(k)$ where $h_r$ is the function stored at $r$.
    Note that $h_r$ consists of the sum of the function at $r$ 
    before the call to $\textsc{ShiftVertical}$ and $h$ as desired.
    Symmetrically we can prove the statement for $k \not\in (a,b)$.

    Next we describe $\textsc{ShiftHorizontal}(a, b, \xi)$.
    Let $u_a$ and $u_b$ be the two nodes of $\mathcal T$ found by calling $\textsc{Search(a)}$ and $\textsc{Search}(b)$.
    Again, let $r$ be the lowest common ancestor of $u_a$ and $u_b$ in $\mathcal T$.
    We add $\xi$ to the shift-value of $r$ and to its function the constant $q = -(m \cdot \xi)$ where 
    $m$ is the slope of the current function at $r$.
    In the following we assume that $u_b$ is in the right subtree of $r$ and $u_a$ in the left subtree of $r$.
    We update the subtree rooted at $u_a$ as above with respect to the tree before any changes
    subtracting $\xi$ and $q$ from the shift-value and function of the left child of $u_a$ or
    subtracting them from $u_a$ and adding them to the right child of $u_a$.
    We proceed symmetrically for the shift-value and function of the subtree rooted at $u_b$.

    Let $(k,g)$ be an element of our data structure after an application of $\textsc{ShiftHorizontal}$
    Since we assume that the ordering of the key-functions pairs by their keys does not change in this operation
    we can immediately assume that $\mathcal T$ is still fulfills all the properties of an AVL-tree.
    It remains to show that Invariant~\ref{inv:tree} holds.
    Let $u_k = \textsc{Search(k)}$, $u_a = \textsc{Search(a)}$, and $u_b = \textsc{Search(b)}$.
    There are two cases $k \in (a,b)$ or $k\not\in (a,b)$.
    Consider the sums of shift-values $s_k$, $s_a$, and $s_b$ on the respective root-node-paths.
    Before the call to $\textsc{ShiftHorizontal}$ we know that $s_a < s_k < s_b$.
    Hence, $u_k$ is a successor of $u_a$ and a predecessor of $s_b$.
    Moreover, we desire that $s_k = k$ which holds if $s_k = s_k' + \xi$ where $s_k'$ is the sum
    on the root-to-node path to $u_k$ before the call to $\textsc{ShiftHorizontal}$.
    But this is true as either $\xi$ is added once either at $r$ or at $r$ and the right child of $u_a$ in which case its also subtracted at $u_a$.
    In case $k\not\in (a,b)$ we argue in the same fashion, either $u_k$ is not in the subtree rooted at $r$ in which case no change was made to its shift-value or $k$ is in the left subtree of $u_a$ in which case $x$ is first added at $r$ and then subtracted at $u_a$ or at the left child of $u_a$.

    It remains to show part two of Invariant~\ref{inv:tree}.
    After the execution of $\textsc{ShiftHorizontal}$ we desire that $g(k) = g'(k + \xi) = mk + m\xi + b = g'(k) - m\xi $ where
    $g'$ is the function before the call to $\textsc{ShiftHorizontal}$.
    Assume $k \in (a,b)$ and let $g(k) = \sum_{i=1}^m h_i(k)$ on a path from the root to $u_k$, 
    as above consider $h_r$ to be the function at $r$,
    then 
    $\sum_{i=1}^m h_i(k) = \sum_{i=1,u_i\neq r}^m h_i(k) + h_r(k) = \sum_{i=1,u_i\neq r}^m h_i'(k) + h_r'(k) - m\xi = g'(k) - m\xi$.
    In case $k \not\in (a,b)$ the invariant holds we simply recover the previous function as desired.

    Since we only need to ever traverse paths of the tree,
    modify vertices along those paths or their direct children, and 
    do constant time algebraic operations the runtime bounds follow directly from the ones for the AVL-tree.
\end{proof}

\mypar{Putting things together.} From now on we are going to assume that $F_{i-1}$ is given as a data structure as described in above.
Next to inserting and deleting key-function pairs in this structure,
we are allowed to evaluate a function at its key,
add a linear function on an interval of key-function pairs, and 
add constant shifts to the keys of the elements.

\begin{restatable}{lemma}{argminlemma}
\label{lem:argmincomp}
Given $F_{i-1}(\lambda_{i-1})$, we can compute $\Psi(\lambda_i)$ in time $O(m\log m)$.    
\end{restatable}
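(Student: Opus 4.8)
The plan is to compute $\Psi(\lambda_i)$ by tracing the graph of $\Psi$ across the current slab as $\lambda_i$ increases from $0$ to $|Q|$, exploiting the monotonicity from Lemma~\ref{lem:argmin} together with the structural description of $f(\lambda_{i-1},\lambda_i)$ from Lemma~\ref{lem:f_bivariate_diagram} and Lemma~\ref{lem:f_univariate_structure}, and the fact (Lemma~\ref{lem:Psi_path}) that whenever $\Psi$ follows a horizontal line this line is a grid line of $f$ or a breaking line of $F_{i-1}$. Concretely, $\Psi(\lambda_i)$ is an $xy$-monotone staircase that alternates between (i) pieces lying on one of the two diagonal lines $\ell_1$, $\ell_2$ of slope $1$, and (ii) horizontal pieces at a grid line of $f$ or at a breaking line of $F_{i-1}$, with vertical jumps in between. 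So the algorithm maintains a ``current'' candidate value $\lambda_{i-1}^\star$ for the argmin and advances it event by event.

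First I would extract the $O(m)$ grid lines of $f(\lambda_{i-1},\lambda_i)$ in the current slab (the heights $b_j,t_j$ of the obstacles, together with the reachability heights $b_j',t_j'$), which by Lemma~\ref{lem:bsd_semi_discrete} and Lemma~\ref{lem:f_bivariate_diagram} can be done in $O(m\log m)$ time. Next, using the data structure of Lemma~\ref{lem:datastructure} that stores $F_{i-1}$ as key-function pairs, I would identify the $O(m)$ \emph{new breaking points} of $F_{i-1}$ that are relevant — i.e.\ those that could be crossed by $\Psi$ while it runs along $\ell_1$ or $\ell_2$ — and query $F_{i-1}$ at each of them and at each grid line of $f$; each query costs $O(\log(nm))=O(\log m)$ by Lemma~\ref{lem:datastructure}, so this is $O(m\log m)$ total. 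By Lemma~\ref{lem:monotone_between_breakings} and Corollary~\ref{cor:order}, between consecutive new breaking points $F_{i-1}$ is monotone with integer slope, so on each such interval the candidate-minimizing behaviour of $F_{i-1}(\lambda_{i-1})+f(\lambda_{i-1},\lambda_i)$ is determined by a constant-size comparison of slopes (slope of $F_{i-1}$ against the local slope $\alpha\in\{-1,0,1\}$ of $f$ given by Lemma~\ref{lem:f_univariate_structure}). I would therefore walk through the $O(m)$ candidate heights in sorted order, maintaining the running minimum of $F_{i-1}(\lambda_{i-1})+f(\lambda_{i-1},\lambda_i)$ as a function of $\lambda_i$; each such value is a linear function of $\lambda_i$ with slope in $\{-1,0,1\}$ on the relevant range (again by Lemma~\ref{lem:f_univariate_structure}), and the lower envelope of $O(m)$ such linear functions, processed in order, has $O(m)$ breakpoints and is computable in $O(m)$ time once the pieces are sorted, with the sort contributing the $O(m\log m)$. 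The output is exactly the graph of $\Psi(\lambda_i)$ as a step function with $O(m)$ steps plus $O(m)$ diagonal segments, which is the promised description.

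The main obstacle I expect is bookkeeping the interaction between the two ``resolutions'' at play: the $O(m)$ grid lines of $f$ are slab-local, but the breaking lines come from $F_{i-1}$, which may already have $\Theta((i-1)m)$ breakpoints, and we must \emph{not} touch all of them. The key point, which I would argue carefully, is that Lemma~\ref{lem:Psi_path} restricts $\Psi$ to run along a horizontal segment only at a grid line or at a breaking line, and Lemma~\ref{lem:monotone_between_breakings} guarantees monotone integer-slope behaviour of $F_{i-1}$ strictly between new breaking points; combined with the monotonicity of $\Psi$ (Lemma~\ref{lem:argmin}), this means that to locate the argmin it suffices to evaluate $F_{i-1}$ at the $O(m)$ new breaking points and at the $O(m)$ grid-line heights — all other breakpoints of $F_{i-1}$ lie on monotone pieces and are never selected as $\Psi$ values except when $\Psi$ is riding $\ell_1$ or $\ell_2$, where the contribution is handled implicitly by the diagonal segments. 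Using $\textsc{Search}$, $\textsc{Evaluate}$, and range queries on the data structure of Lemma~\ref{lem:datastructure}, each such evaluation is $O(\log m)$, giving the claimed $O(m\log m)$ bound; the remaining work (sorting the $O(m)$ events, computing the lower envelope, emitting the staircase) is $O(m\log m)$.
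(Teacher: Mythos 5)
Your high-level outline is in the right spirit — use the monotonicity of $\Psi$ from Lemma~\ref{lem:argmin}, the structure of $f$ from Lemmas~\ref{lem:f_bivariate_diagram} and~\ref{lem:f_univariate_structure}, and the restriction from Lemma~\ref{lem:Psi_path} that horizontal pieces of $\Psi$ sit on grid lines or breaking lines, and evaluate $\Psi$ only at the $O(m)$ vertical boundary lines of the plot of $f$. But there is a genuine gap in how you close the efficiency argument, and it is precisely the place where the paper introduces its one nontrivial algorithmic device.

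The core difficulty is: for each of the $O(m)$ fixed values $\lambda_i=y$, you must find $\operatorname{argmin}_{\lambda_{i-1}} \bigl(F_{i-1}(\lambda_{i-1})+f(\lambda_{i-1},y)\bigr)$ over $O(m)$ candidate heights, and you have only an $O(\log m)$ budget per $y$. Your ``lower envelope of $O(m)$ linear functions'' step does not deliver this. For a fixed candidate $\lambda_{i-1}^{(k)}$, the map $\lambda_i \mapsto F_{i-1}(\lambda_{i-1}^{(k)})+f(\lambda_{i-1}^{(k)},\lambda_i)$ is not a single line — by the symmetry Lemma~\ref{lem:f_symmetric} it is piecewise linear with $\Theta(m)$ pieces, so the envelope you describe has $\Theta(m^2)$ input segments, and neither the Davenport--Schinzel bound nor the monotonicity of $\Psi$ directly reduces the \emph{computation} to $O(m\log m)$: monotonicity tells you the output is a staircase with $O(m)$ steps, but not how to find each step in $O(\log m)$ without scanning. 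You also invoke range queries against the data structure of Lemma~\ref{lem:datastructure}, but that structure supports only point evaluation, insertion/deletion, and the two shift operations — not range-minimum queries — so it cannot be the engine that answers ``find the minimizing candidate on an interval''.

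What the paper does to close this gap is a small but essential decomposition. By Lemma~\ref{lem:f_univariate_structure}, for any fixed $y$ the univariate slice $F_{i-1}(\cdot)+f(\cdot,y)$ is, on each of four subintervals $[p_1,p_2]$, $[p_2,y]$, $[y,p_3]$, $[p_3,p_4]$, equal to $F_{i-1}(\cdot)+g(\cdot)+c$ for a $y$-independent function $g\in\{\ell^+,\ell^-,s^+,s^-\}$ and a constant $c$ that does not affect the argmin. One therefore precomputes four arrays of values $F_{i-1}(\lambda_{i-1}^{(k)})+g(\lambda_{i-1}^{(k)})$ over the $O(m)$ candidates, places each in a segment tree supporting range-minimum in $O(\log m)$, reads off $p_1,\dots,p_4$ in $O(1)$, and answers each of the $O(m)$ columns with four range-minimum queries plus a constant-time comparison; the diagonal-jump locations are then filled in by solving a linear equation per diagonal segment. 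If you want to keep your sweep flavor, you would need to replace your envelope step with something that genuinely exploits total monotonicity at $O(\log m)$ cost per probe (e.g.\ a SMAWK-style search), but either way the missing ingredient is a concrete $O(\log m)$-per-column minimization, and your writeup does not supply one.
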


\begin{proof}
We compute $\Psi(\lambda_i)$ on the vertical boundary lines. There are $O(m)$ such lines by Lemma \ref{lem:f_bivariate_diagram}, so in order to achieve the claimed running time, we have to be able to do this in time $O(\log m)$ per line.
For each of these vertical lines, by Lemma~\ref{lem:Psi_path}, 
$\Psi$ can only be on one of $O(m)$ possible points.
These are all points corresponding to new breaking points of $F_{i-1}$ and
we can compute them in $O(m\log m)$ time by querying $F_{i-1}$ $O(m)$ times.
Moreover, since these are all breaking points from the previous iteration,
we can assume that the corresponding $O(m)$ coordinates are explicitly stored.

We introduce the following four functions:
\begin{itemize}
    \item $\ell^+(x):=x$;
    \item $\ell^-(x):=-x$;
    \item $s^+(x)$ is continuous piecewise linear where all pieces have slope $1$ on covered intervals and $0$ on uncovered intervals;
    \item $s^-(x)$ is continuous piecewise linear where all pieces have slope $-1$ on covered intervals and $0$ on uncovered intervals.
\end{itemize}
By Lemma \ref{lem:f_univariate_structure}, the function $F_{i-1}(\lambda_{i-1})+f(\lambda_{i-1},\lambda_i)$ restricted to any vertical boundary line can be partitioned into four intervals and on each interval it is of the form $F(\lambda_i)+g+c$ for a real value $c$ and a function $g\in\{\ell^+,\ell^-, s^+, s^-\}$. In particular, the argmin restricted to one of these intervals is invariant of the value $c$. Further, the boundaries $p_1,\ldots, p_4$ of the intervals can be read off in constant time.

Given the values from above we can compute all $F_{i-1}(\lambda_{i-1})+g(\lambda_{i-1})$ for $g\in\{\ell^+,\ell^-, s^+, s^-\}$.
We store the values corresponding to one $g$ in a segment tree~\cite{DBLP:books/lib/BergCKO08} that supports querying the minimum on a given interval of leafs in $O(\log m)$ time.
Building each such tree takes time at most $O(m\log m)$.
For a vertical boundary line we now read off the four intervals and for each interval find the minimum within this interval using the segment tree of the corresponding function $F_{i-1}(\lambda_{i-1})+g$ and return the argument.
We then evaluate $F_{i-1}(\lambda_{i-1})+f(\lambda_{i-1},\lambda_i)$ on these four arguments and keep the one minimizing $F_{i-1}(\lambda_{i-1})+f(\lambda_{i-1},\lambda_i)$.
The computed values already uniquely determine $\Psi(\lambda_i)$ except between two vertical boundaries where $\Psi(\lambda_i)$ jumps over a diagonal segment. In these cases we still need to compute the exact location of the jump. This amounts to evaluating two linear functions and computing where they are equal, so it can be done in constant time per diagonal segment, that is, $O(m)$ total time.

\end{proof}

\begin{restatable}{lemma}{fcomputationlemma}
\label{lem:Fcomputation}
Given $F_{i-1}(\lambda_{i-1})$, we can compute $F_i(\lambda_i)$.
\end{restatable}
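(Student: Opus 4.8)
The plan is to show how to transform the data structure representing $F_{i-1}(\lambda_{i-1})$ into one representing $F_i(\lambda_i)$, using the structural results of Lemmas~\ref{lem:f_bivariate_diagram}--\ref{lem:Psi_path} together with the computation of $\Psi(\lambda_i)$ from Lemma~\ref{lem:argmincomp}. First I would recall that $F_i(\lambda_i) = F_{i-1}(\Psi(\lambda_i)) + f(\Psi(\lambda_i),\lambda_i)$, so once $\Psi$ is known, the value of $F_i$ at any point is determined by a query into the old structure plus a local (piecewise-linear, integer-slope) correction coming from $f$. By Corollary~\ref{cor:order} the map $a \mapsto b$ from non-differentiable points of $F_{i-1}$ to the points they induce in $F_i$ is order-preserving, so the "old" breakpoints keep their relative order and we only need to (i) relocate them along $\Lambda_i$ and (ii) adjust their function values, while also (iii) inserting the $O(m)$ new breaking points.

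Concretely, the plan is: use Lemma~\ref{lem:argmincomp} to compute $\Psi(\lambda_i)$ in $O(m\log m)$ time, obtaining the $O(m)$ maximal intervals of $\Lambda_i$ on which $\Psi$ behaves in one of a few canonical ways — either it is constant (a horizontal piece, $\Psi$ stays on a grid line or a breaking line), or it follows $\ell_1$ or $\ell_2$ (so $\Psi(\lambda_i) = \lambda_i \pm \ell$), or it makes a vertical jump at a single point. For each such interval I would issue a constant number of data-structure operations: on an interval where $\Psi$ follows $\ell_1$ or $\ell_2$, every old breakpoint $a_j$ swept by $\Psi$ becomes a breakpoint $b_j$ with key shifted by the same additive constant ($\textsc{ShiftHorizontal}$ on the appropriate key-range) and function value increased by $f(a_j,b_j)$, which by Lemma~\ref{lem:f_univariate_structure} is an affine function of the key of fixed integer slope on that range ($\textsc{ShiftVertical}$ with the corresponding linear function); on an interval where $\Psi$ is constant equal to some $r$, $F_i$ restricted there is just $F_{i-1}(r) + f(r,\lambda_i)$, a single known piecewise-linear segment (at most three pieces by Lemma~\ref{lem:f_bivariate_diagram}), so it contributes $O(1)$ new breaking points which we $\textsc{Insert}$; and at each of the $O(m)$ vertical jumps we $\textsc{Insert}$ the one or two new breaking points whose location we already computed in Lemma~\ref{lem:argmincomp}. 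Old breakpoints that are no longer on the graph of $\Psi$ (those "jumped over") are removed with $\textsc{Delete}$. Since $\Psi$ is monotone (Lemma~\ref{lem:argmin}) these key-ranges are disjoint and the $\textsc{ShiftHorizontal}$ operations preserve the sorted order, so the precondition of Lemma~\ref{lem:datastructure} is met.

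For correctness I would verify, interval by interval, that the updated structure evaluates to $F_i$ at exactly the breakpoints of $F_i$: on $\ell_1/\ell_2$-intervals this is the identity $F_i(b_j) = F_{i-1}(a_j) + f(a_j,b_j)$ combined with Lemma~\ref{lem:monotone_between_breakings}, which guarantees no spurious breakpoints are created between consecutive $b_j$; on constant-$\Psi$ intervals and at jumps it is immediate from the definition of $F_i$ and Lemma~\ref{lem:Psi_path}. For the running time, Lemma~\ref{lem:argmincomp} costs $O(m\log m)$, and we perform $O(m)$ structure operations each costing $O(\log(nm))$ by Lemma~\ref{lem:datastructure} (the structure holds $O(nm)$ elements), giving $O(m\log(nm))$ per slab. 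The main obstacle, I expect, is the bookkeeping in step (ii): showing that the value-correction $f(a_j, b_j)$ really is, on each maximal sub-range of a single $\Psi$-interval, a \emph{single} linear function of the old key with integer slope, so that it can be applied with one $\textsc{ShiftVertical}$ call rather than $O(m)$ of them — this is exactly where Lemma~\ref{lem:f_univariate_structure}'s description of $f_y$ as slope-$\pm1$-on-covered, slope-$0$-on-uncovered must be matched up against the breaking-line structure so that the pieces of $f$ and the pieces of $F_{i-1}$ do not interleave in an uncontrolled way; handling the transition cells where $\Psi$ switches between $\ell_1$, a horizontal segment, and $\ell_2$ (and the $\pm1$-slope subdivisions of covered diagonal cells from Lemma~\ref{lem:f_bivariate_diagram}(b)) is the fiddly part of the argument.
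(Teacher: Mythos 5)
Your proposal matches the paper's proof in essentially every respect: compute $\Psi$ via Lemma~\ref{lem:argmincomp}, then traverse its $O(m)$ pieces and, on each $\ell_1/\ell_2$-segment, apply one $\textsc{ShiftHorizontal}$ (by the slab width $\pm d$) and one $\textsc{ShiftVertical}$ (by the single linear function $g(\lambda_i)=f(\Psi(\lambda_i),\lambda_i)$), while $\textsc{Delete}$-ing breakpoints jumped over and $\textsc{Insert}$-ing the $O(m)$ new ones, with correctness following from $F_i(\lambda_i)=F_{i-1}(\Psi(\lambda_i))+f(\Psi(\lambda_i),\lambda_i)$. The "fiddly part" you flag is exactly what the paper resolves by invoking Corollary~\ref{cor:order} (in combination with the surrounding structural lemmas), so the approach is the same.
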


\begin{proof}
By Lemma~\ref{lem:argmincomp} we can assume that $\Psi(\lambda_i)$ is given and 
since its description consists of only $O(m)$ points we may traverse it in a linear fashion.
We run the following algorithm to perform the necessary updates to $F_{i-1}$.

Consider a vertical segment of $\Psi$ between $\Psi(\lambda_i^{(j)})$ and $\Psi(\lambda_i^{(j+1)})$
we call $\textsc{Delete}(\Psi(\lambda_a))$ on all lines that lie between 
$\Psi(\lambda_i^{(j)})$ and $\Psi(\lambda_i^{(j+1)})$. 
This can be implemented by retrieving all the to-be-deleted nodes of the tree
and then calling delete on all of them.

For each segment between $\lambda_i^{(j)}$ and $\lambda_i^{(j+1)}$ that lies on $\ell_1$ or $\ell_2$
in the plot of $f$ we first execute
$\textsc{ShiftHorizontal}(\Psi(\lambda_i^{(j)}),\Psi(\lambda_i^{(j+1)}), d)$ if we are on $\ell_1$ and  
if we are on $\ell_2$ we execute$\textsc{ShiftHorizontal}(\Psi(\lambda_i^{(j)}),\Psi(\lambda_i^{(j+1)}), -d)$,
where $d$ is the length of the vertical slab between $\Lambda_{i-1}$ and $\Lambda_i$.
Let $g(\lambda_i):=f(\Psi(\lambda_i),\lambda_i)$ and note that by Corollary~\ref{cor:order}, between $\lambda_i^{(j)}$ and $\lambda_i^{(j+1)}$ this is a single linear function of the form $mx+q$ where $m\in\{-1,0,1\}$.
Next, we call $\textsc{ShiftVertical}(\lambda_i^{(j)}, \lambda_i^{(j+1)}, g(\lambda_i))$.

Finally, call $\textsc{Insert}(\lambda_i^{(j)}, w_j)$ for every $\lambda_i^{(j)}$ where $w_j$ is the
explicit value for $F_i(\lambda_i^{(j+1)})$.

To show correctness we are going to argue that for every key $k$ in our data structure
it holds $F_i(k) = \textsc{Evaluate}(k)$.
Whenever $k$ was inserted in the last step of the algorithm this follows by correctness of the insert function.
For all other elements we only executed one horizontal and one vertical shift at most.
It remains to show that these shift values are indeed correct.
Which follows with
\[F_i(\lambda_i)=F_{i-1}(\lambda_i \pm d) + f(\Psi(\lambda_i), \lambda_i) = F_{i-1}(\lambda_{i-1}) + f(\Psi(\lambda_i), \lambda_i).\qedhere\]
\end{proof}

\begin{theorem}
Given two polygonal curves $P$ and $Q$ with $n$ and $m$ vertices, respectively, 
the semi-discrete Barking distance of $P$ to $Q$ can be computed in time $O(nm\log(nm))$.
\end{theorem}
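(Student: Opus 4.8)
The plan is to assemble the pieces developed in the preceding lemmas into an inductive algorithm that processes the vertical slabs of the BSD from left to right, maintaining the function $F_i(\lambda_i)$ implicitly in the data structure of Lemma~\ref{lem:datastructure}. The key quantitative facts we rely on are: the function $F_i$ has $O(im)$ breakpoints (Lemma~\ref{lem:Fstructure}), only $O(m)$ of which are \emph{new} in slab $i$; between consecutive new breaking points $F_i$ is monotone with integer-slope pieces (Lemma~\ref{lem:monotone_between_breakings} and Corollary~\ref{cor:order}); the shape of $f(\lambda_{i-1},\lambda_i)$ within a slab is the $O(m)\times O(m)$ grid of Lemma~\ref{lem:f_bivariate_diagram}; and $\Psi(\lambda_i)$ is monotone (Lemma~\ref{lem:argmin}) and its horizontal portions lie on grid lines or breaking lines (Lemma~\ref{lem:Psi_path}).

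First I would set up the invariant that at the start of iteration $i$ we have $F_{i-1}$ stored in the data structure, with the $O((i-1)m)$ breakpoints as keys and the corresponding linear pieces as the stored functions, together with an explicit list of the $O(m)$ coordinates of the \emph{new} breaking points from iteration $i-1$ (these are the only places where $\Psi(\lambda_i)$ can settle on a horizontal segment). Then, for each of the $n$ slabs, I would invoke Lemma~\ref{lem:argmincomp} to compute $\Psi(\lambda_i)$ in $O(m\log m)$ time, and Lemma~\ref{lem:Fcomputation} to update the data structure from $F_{i-1}$ to $F_i$ using a constant number of $\textsc{ShiftHorizontal}$, $\textsc{ShiftVertical}$, $\textsc{Insert}$, and $\textsc{Delete}$ operations per piece of $\Psi$, of which there are $O(m)$. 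Since every breakpoint is inserted once and deleted once over the whole run, and each data-structure operation costs $O(\log(nm))$ by Lemma~\ref{lem:datastructure} (the structure holds up to $O(nm)$ elements), the total work is $\sum_{i=1}^n O(m\log(nm)) = O(nm\log(nm))$. Finally, the answer is $F_n(|Q|)$, which we read off with one $\textsc{Evaluate}$ call.

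The main subtlety — and where I expect the real bookkeeping to live — is verifying that the amortized accounting genuinely works: although $F_i$ can have $\Theta(nm)$ breakpoints at the end, we must argue that in slab $i$ we only ever \emph{touch} $O(m)$ of them. This hinges on Corollary~\ref{cor:order}, which guarantees that old breakpoints between two new breaking points are carried along as a contiguous block under a single affine transformation, so they can be updated by one $\textsc{ShiftHorizontal}$/$\textsc{ShiftVertical}$ pair applied to an interval of keys rather than individually; the ones that get deleted (when $\Psi$ jumps vertically) are charged to their earlier insertion. One also has to check that the precondition of $\textsc{ShiftHorizontal}$ — that the key-order is preserved — is met, which follows because shifting a contiguous block of keys along $\ell_1$ or $\ell_2$ by the slab width $d$ respects monotonicity of the underlying traversal. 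Assembling these observations, together with the correctness identity $F_i(\lambda_i)=F_{i-1}(\Psi(\lambda_i))+f(\Psi(\lambda_i),\lambda_i)$ already established in Lemma~\ref{lem:Fcomputation}, completes the proof.
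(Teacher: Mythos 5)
Your proposal is correct and follows essentially the same route as the paper: iterate over the $n$ slabs, invoke Lemma~\ref{lem:argmincomp} and Lemma~\ref{lem:Fcomputation} to go from $F_{i-1}$ to $F_i$ using the data structure of Lemma~\ref{lem:datastructure}, and amortize over the $O(nm)$ total breakpoints (Lemma~\ref{lem:Fstructure}), each of which is inserted and deleted at most once. You spell out the invariant, the role of Corollary~\ref{cor:order} in keeping the per-slab work to $O(m)$ operations, and the precondition of \textsc{ShiftHorizontal} more explicitly than the paper does, but the underlying argument is the same.
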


\begin{proof}
The semi-discrete Barking distance of $P$ to $Q$ is just the value $F_n(|Q|)$. We can compute $F_n(\lambda_n)$ by iteratively computing all $F_i(\lambda_i)$. Examining the proof of Lemma~\ref{lem:Fcomputation} 
we can see that the needed computations can be implemented in $O(m\log m)$.
Additionally, every query to the data structure takes only $O(\log(nm))$ time by Lemma~\ref{lem:datastructure}.
Finally, by Lemma~\ref{lem:Fstructure} there are only $O(nm)$ breaking points and 
each one of them is inserted and deleted at most once from our data structure.
It follows that the total running time is in $O(nm\log(nm))$.
\end{proof}

\section{The continuous setting}\label{sec:continuous}

The underlying ideas are similar to those for the semi-discrete setting.
The definitions of the functions $f(\lambda_{i-1},\lambda_i)$ and $F_i(\lambda_i):=\min_{\lambda_{i-1}}(F_{i-1}(\lambda_{i-1})+f(\lambda_{i-1},\lambda_i))$ extend.
In contrast to before, these functions are now piecewise quadratic.
The BSD has a natural grid structure, where each rectangular cell corresponds to a choice of a segment $P_i$ of $P$ and a choice of a segment $Q_i$ of $Q$. It is a well-known fact (see e.g.\ Lemma 3 in \cite{alt1995computing}) that for each such rectangle $R_{i,j}$, the clear space of the BSD is the intersection of $R_{i,j}$ with an ellipse. We denote this clear space by $C_{i,j}$ and call it \emph{cut ellipse}. In the following, we restrict our attention to a fixed segment $P_i$ of $P$, so we abbreviate the notation to $C_j$.

The main idea for the algorithm is that we can always find an optimal curve in the BSD such that each segment corresponds to either (i) a segment of maximum or minimum slope or (ii) a segment on a special diameter of a $C_j$. %
We discretize the problem by describing all the potential endpoints of such segments. 
For each cut ellipse $C_j$ we consider the \emph{longest feasible diameter} $d(C_j)$ given by a segment whose supporting line passes through the center of the (uncut) ellipse with endpoints coinciding with the intersections of the two vertical tangents of $C_j$.
Denote by $l_j$ and $r_j$ the left and right endpoints of this diameter, respectively.

We again denote by $\Lambda_{i-1}$ and $\Lambda_i$ the left and right boundary of the slab. For any curve $\gamma$ in the BSD we define its cost as the time $\gamma$ spends outside $\bigcup C_j$. We say that a curve with start point $a$ and endpoint $b$ is \emph{optimal} if it minimizes the cost among all such paths.

\begin{restatable}{lemma}{optimalpath}
    \label{lem:optpath}
    For any two points $\lambda_{i-1}$ and $\lambda_i$ on the boundary of a vertical slab 
    there is an optimal polygonal curve from $\lambda_{i-1}$ to $\lambda_i$
    such that its segments are either of maximum or minimum slope (Type 1) or
    lies on $d(C_j)$ for some cut ellipse $C_j$ (Type 2).
\end{restatable}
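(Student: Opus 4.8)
The plan is to start from an arbitrary optimal polygonal curve $\gamma$ from $\lambda_{i-1}$ to $\lambda_i$ within the slab (we may assume an optimal curve is polygonal, since the cost is a piecewise‑linear function of finitely many parameters once we fix the combinatorial structure of which cut ellipses $\gamma$ passes through) and perform a sequence of local exchange arguments, each of which does not increase the cost, until every segment is of Type 1 or Type 2. The key structural fact I would use repeatedly is that the clear space in each cell $R_{i,j}$ is convex, being the intersection of the rectangle with the ellipse interior; hence the union $\bigcup_j C_j$ restricted to each maximal vertical strip between consecutive cell boundaries is a union of convex regions stacked vertically, and each $C_j$ is $x$‑monotone with a well‑defined left endpoint, right endpoint, and the feasible diameter $d(C_j)$ through its center.

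First I would handle segments that stay entirely outside $\bigcup C_j$: such a segment contributes its full horizontal extent to the cost, so it should be replaced by a segment of maximum (or minimum) slope whenever the slope is not already extremal — this strictly decreases the horizontal extent covered while still being a valid curve, and we can reconnect to the rest of $\gamma$ by inserting a vertical‑free detour that we then straighten in a later pass. Second, for a segment that lies entirely inside a single $C_j$, it contributes zero cost, so the only constraint is reachability; by convexity of $C_j$ we may freely replace it by any chord of $C_j$ with the same endpoints, and in particular we route through $C_j$ along $d(C_j)$ as far as the speed bound allows, picking up Type‑1 pieces of extremal slope on either side to enter and exit — this is exactly the dichotomy in the statement. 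Third, for a segment that crosses the boundary of some $C_j$ (partly inside, partly outside), I would push the outside portion to extremal slope and the inside portion onto a chord, again using convexity so that the spliced curve remains inside $C_j$ on the inside part; the cost can only go down because the inside part is free and the outside part's horizontal extent shrinks under the slope change. Iterating these exchanges in left‑to‑right order, and observing that each exchange either reduces the cost or reduces the number of non‑conforming segments without increasing cost, terminates with a curve all of whose segments are Type 1 or Type 2.

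The main obstacle I expect is the bookkeeping needed to argue that the local exchanges can be made globally consistent — i.e., that after straightening one segment to extremal slope the curve still reaches $\lambda_i$ while respecting the speed bound and monotonicity, and that we do not create an infinite regress of exchanges. The clean way around this is to argue about an optimal curve that is also \emph{lexicographically extremal} in a suitable sense (e.g., among all optimal curves, one minimizing the number of segments, and among those, one that is ``as steep as possible, as early as possible''); for such an extremal optimum the exchange arguments above cannot apply without contradicting optimality or extremality, which forces every segment to be Type 1 or Type 2. A secondary subtlety is justifying that it suffices to use the \emph{longest feasible} diameter $d(C_j)$ rather than an arbitrary chord: here I would note that any time $\gamma$ uses $C_j$ as a ``free shortcut'', extending the free portion to the full feasible diameter only helps, since it maximizes the vertical progress made at zero cost within the $x$‑range where $C_j$ is available, and the endpoints $l_j, r_j$ lie on the vertical tangents so they are the extreme points reachable while staying in $C_j$ across the whole cell. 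Combining these observations yields the claimed normal form for an optimal curve.
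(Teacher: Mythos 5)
Your overall strategy---take an optimal polygonal curve and perform local, cost-non-increasing exchanges until every segment is of Type~1 or Type~2---is the same as the paper's. However, there are two concrete gaps in your execution.

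First, your argument for segments lying entirely in an obstacle does not work as stated. You claim that replacing such a segment by one of extremal slope ``strictly decreases the horizontal extent covered.'' But the cost of a segment between two fixed endpoints is its horizontal projection, which is the same regardless of slope; steepening a fixed-endpoint segment changes nothing. The paper instead first subdivides $\gamma$ at every intersection with a cut-ellipse boundary, so that each sub-path lies entirely in an obstacle or entirely in a cut ellipse with \emph{fixed} endpoints. For a sub-path in an obstacle the cost is then a fixed quantity; one may replace it by \emph{any} Type~1/2 path with the same endpoints, and if the replacement ever left the obstacle the original curve would not have been optimal. This subdivision also dissolves your third case (segments partly inside, partly outside a cut ellipse), which you treat as a separate and somewhat delicate case but which never arises after subdividing.

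Second, and more importantly, you miss the case where the feasible diameter $d(C_j)$ is \emph{steeper} than the maximum permitted slope. In that situation you cannot ``route along $d(C_j)$ as far as the speed bound allows''---you cannot route along it at all, since any nonzero portion of it would violate the speed constraint. The paper handles this separately: when $d(C)$ is too steep, the sub-path inside $C$ is replaced by up to three segments, all of maximum absolute slope, which is always possible because the direct chord from $\gamma_j$ to $\gamma_{j+1}$ is shallower than the maximum slope while $d(C)$ is steeper, so an intermediate maximum-slope segment inside $C$ exists. Your phrase ``as far as the speed bound allows'' suggests awareness of a constraint but does not produce this construction, and without it the lemma is not proven for ellipses whose diameter is too steep. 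A smaller point: in the continuous setting the cost functions are piecewise \emph{quadratic}, not piecewise linear, so your justification for restricting to polygonal curves should be adjusted accordingly (the conclusion still holds). Your lexicographic-extremality device for termination is reasonable and in fact more explicit than what the paper writes.
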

\begin{proof}
    Let $\gamma$ be a polygonal path from $\lambda_{i-1}$ to $\lambda_i$ that 
    does not confirm to some of the three conditions.

    Subdivide $\gamma$ with points $\gamma_1,\ldots,\gamma_k$ whenever $\gamma$ intersects 
    the boundary of a cut ellipse.
    Let $\gamma_j$ and $\gamma_{j+1}$ be two consecutive points such that
    the segments on the sub-path $\gamma_{j,j+1}$ are not all of Type~1 and~2.
    
    First, we consider the case that $\gamma_{j,j+1}$ lies in an obstacle.
    Then, we can replace $\gamma_{j,j+1}$ by any polygonal path with segments of Type 1 and 2 that connects the points $\gamma_j$ and $\gamma_{j+1}$. If the new path ever leaves the obstacle, the the original path $\gamma$ was not optimal.

    The other case is that $\gamma_{j,j+1}$ lies in cut ellipse $C$.
    First, assume that $d(C)$ has absolute slope smaller than the maximal absolute slope.
    We distinguish two cases, either $\gamma_{j,j+1}$ intersects $d(C)$ or it doesn't.
    In the first case, we replace $\gamma_{j,j+1}$ by 
    a segment $s_j$ of maximal absolute slope with endpoint at $\gamma_j$, 
    a segment $s_{j+1}$ of maximal absolute slope with endpoint at $\gamma_{j+1}$, and
    a segment $s'$ along $d(C)$ connecting the intersection points of $d(C)$ with $s_j$ and $s_{j+1}$.
    In the latter case, $\gamma_{j,j+1}$ is replaced by the same path or
    by the two segments $s_j$ and $s_{j+1}$ solely.
    
    Second, assume that $d(C)$ has absolute slope larger than the maximal absolute slope.
    If the segment with endpoints $\gamma_j$ and $\gamma_{j+1}$ has maximal absolute slope we 
    replace $\gamma_{j,j+1}$ by this segment.
    Assume this is not the case, then
    we replace $\gamma_{j,j+1}$ by three segments of maximum absolute slope.
    A segment $s_j$ of maximal absolute slope with endpoint at $\gamma_j$, 
    a segment $s_{j+1}$ of maximal absolute slope with endpoint at $\gamma_{j+1}$, and
    a segment $s'$ connecting the endpoints of $s_j$ and $s_{j+1}$ at maximum absolute slope.
    Note that $s_j$, $s'$, and $s_{j+1}$ can
    always be chosen in this way since the connection of $s_j$ and $s_{j+1}$ along $d(C)$ is too steep,
    but the segment connecting $\gamma_j$ and $\gamma_{j+1}$ is shallower than the maximum absolute slope.
    Hence, there exists a segment in between lying inside $C$ and having maximum absolute slope.
\end{proof}

We now further restrict the optimal curves.
For each ellipse $C_j$ completely contained in the vertical slab we consider eight lines in total, four with maximum slope and four with minimum slope.
They are the two tangents $t_l^+$ and $t_r^+$ of maximum and 
the two tangents $t_l^-$ and $t_r^-$ of minimum slope,
with the intersection point of $t_l^{+/-}$ before the one of $t_r^{+/-}$ in $x$-direction.
The other four lines are the lines of maximum and minimum slope passing through $l_j$ and $r_j$,
denoted as $s_l^+$, $s_r^+$, $s_l^-$, and $s_r^-$.
For a cut ellipse $C_j$ that is cut off, we add the same lines, even if $l_j$ or $r_j$ are not in the vertical slab.
We additionally add lines of maximum and minimum slope passing through all intersections of $C_j$ with grid lines of the BSD.
Let $S^+$ and $S^{-}$ be the resulting set of lines with positive and negative slope, respectively.
For a cut ellipse $C_j$, we call all these up to 24 lines its \emph{boundary lines}.

$S^+$ ($S^{-}$) defines a set of parallel \emph{strips} that cover the slab.
We define the following lines only for $S^+$, the definition for $S^{-}$ is symmetric.
By definition, in each of them any maximum slope line intersects the same subset of cut ellipses. 
Consider a strip $\tau$ bounded from the left by line $t_l\in S^+$ and from the right by $t_r\in S^+$ and two cut ellipses $C_i,C_j$ in $\tau$, we denote by $c_{\tau,C_i,C_j}(x)$ the cost function of going between the boundaries of $C_i$ and $C_j$ using a maximum slope line at horizontal distance $x$ from $t_l$. 
This function is the sum of the costs of traversing the obstacles between consecutive cut ellipse boundaries that lie in between $C_i$ and $C_j$. 
It is thus a sum of quadratic functions, and therefore a quadratic function itself.
We denote by $\opt_{i,j,\tau}^+$ the line of maximal slope which minimizes the cost $c_{\tau,C_i,C_j}(x)$. Note that if $\opt_{i,j,\tau}^+$ is not equal to $t_l$ or $t_r$, then it is unique.
We call $\opt_{i,j,\tau}^{+}$ the \emph{positive transition line} and 
$\opt_{i,j,\tau}^{-}$ the \emph{negative transition line} of $C_i$ and $C_j$ in $\tau$.

\begin{restatable}{lemma}{shiftlemma}
\label{lem:shift}
For any two points $\lambda_{i-1}$ and $\lambda_i$ on the boundary of a vertical slab 
there is an optimal polygonal curve as in Lemma~\ref{lem:optpath} such that all segments of Type~1 in obstacles lie either on lines of the type $\opt_{i,j,\tau}^{+/-}$, a boundary line, 
or on the four lines of maximum and minimum slope at the start and end points of the polygonal curve.
\end{restatable}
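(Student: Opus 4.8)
The plan is to start from the optimal curve $\gamma$ guaranteed by Lemma~\ref{lem:optpath}, whose segments are all of Type~1 (maximum/minimum slope) or Type~2 (on some $d(C_j)$), and to argue that every maximal Type~1 sub-path inside an obstacle can be "snapped" onto one of the named lines without increasing the cost. First I would isolate a single maximal Type~1 segment $\sigma$ that passes through an obstacle (the region outside $\bigcup C_j$), together with the cut ellipses $C_i$ and $C_j$ it enters and leaves. If $\sigma$ does not touch the start or end point of $\gamma$, then both its endpoints lie on boundaries of cut ellipses; by Lemma~\ref{lem:optpath} the portions of $\gamma$ inside $C_i$ and $C_j$ can be freely re-routed along Type~1 and Type~2 segments, so I have the freedom to slide the whole "Type~2 along $d(C_i)$ — then $\sigma$ — then Type~2 along $d(C_j)$" gadget parallel to itself. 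Sliding $\sigma$ (while keeping it at maximum, resp.\ minimum, slope) within the strip $\tau\in S^+$ (resp.\ $S^-$) that contains it keeps the set of crossed cut-ellipse boundaries fixed, so the cost is exactly $c_{\tau,C_i,C_j}(x)$, a quadratic in the horizontal offset $x$. Its minimum over the strip is attained either in the interior — at $\opt^{+/-}_{i,j,\tau}$ by definition — or at one of the strip boundaries, which are boundary lines $t_l,t_r\in S^{+/-}$. Either way we may move $\sigma$ onto one of the admissible lines without increasing cost.

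The remaining endpoints of $d(C_i)$ and $d(C_j)$ must still be connected up to the rest of $\gamma$, which is the point where I would use the boundary lines $s_l^{\pm},s_r^{\pm}$ through $l_j,r_j$ and the tangents $t_l^{\pm},t_r^{\pm}$: the short connector segments between the shifted $\sigma$ and the diameters, and the connectors joining the diameter endpoints to the neighbouring Type~1 segments, are themselves Type~1 segments incident to a diameter endpoint or tangent to $C_i$ or $C_j$, hence supported on boundary lines. The extra "intersections of $C_j$ with grid lines of the BSD" among the boundary lines handle the case where a Type~1 segment only partially crosses a cut ellipse (entering and leaving through a horizontal or vertical grid edge rather than the ellipse arc): the supporting line of such a segment can be taken through one of those grid–ellipse intersection points. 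Finally, Type~1 segments incident to the overall start point $\lambda_{i-1}$ or end point $\lambda_i$ of $\gamma$ cannot be slid freely, which is exactly why the statement explicitly exempts "the four lines of maximum and minimum slope at the start and end points"; these are left untouched.

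Carrying this out, I would induct on the number of Type~1 sub-paths in obstacles that are not yet on an admissible line, processing them one at a time from left to right; each snap operation does not increase cost and does not create new off-line Type~1 segments (the connectors it introduces are on boundary lines), so the process terminates with a curve of the desired form whose cost is still optimal.

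\textbf{Main obstacle.} The delicate part is making rigorous the claim that the re-routing inside $C_i$ and $C_j$ needed to accommodate the shift of $\sigma$ can always be realized with Type~1/Type~2 segments that themselves end up on boundary lines — in particular handling cut ellipses whose diameter $d(C_j)$ is steeper than the maximum slope (so $d(C_j)$ is not actually a feasible direction and the "transition" through $C_j$ is purely via Type~1 segments through grid–ellipse intersections), and ensuring that the interaction of consecutive snaps does not cascade. I expect this bookkeeping — rather than any single geometric inequality — to be where the real work lies; the convexity (quadraticity) of $c_{\tau,C_i,C_j}$ that pins the optimum to a transition line or a strip boundary is the easy structural input, already established before the lemma.
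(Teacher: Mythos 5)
Your proposal matches the paper's proof in its essential approach: both slide a non-compliant Type~1 obstacle segment within its strip, invoke that the resulting cost $c_{\tau,C_i,C_j}(x)$ is quadratic so its minimum sits on a transition line or a strip boundary, handle segments incident to $\lambda_{i-1}$ or $\lambda_i$ separately via the four exempted lines, and then induct on the number of offending segments (the paper phrases the induction as the preceding segment degenerating to a point). Your version merely spells out the re-routing inside $C_i$ and $C_j$ and the steep-diameter case in more detail than the paper's terser argument.
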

\begin{proof}
Let $\gamma$ be an optimal polygonal path from $\lambda_{i-1}$ to $\lambda_i$ as in Lemma~\ref{lem:optpath}
for which at least one of its segments of Type~1 lies in an obstacle and 
neither on a transition line nor on one of the lines of maximal and minimal slope at the start or end 
vertices of $\gamma$.
Let $s$ be the first segment seen from the start of $\gamma$ that lies in an obstacle, but 
does not lie on one of our desired lines.
We assume without loss of generality that $s$ has maximal slope, the case where $s$ has minimal slope is symmetric.
If one endpoint of $s$ coincides with one of the endpoints of $\gamma$, then $s$ lies on the line of maximal slope through this endpoint.
Let $\tau$ be the strip with bounding lines $\tau_l$ and $\tau_r$ in which $s$ lies.
We try to shift $s$ towards $\tau_l$.
Note that cost will not improve as we assumed $\gamma$ to be optimal.
If we can shift until the segment before $s$ degenerates to a single point, we have reduced the number of vertices on the polygonal path $\gamma$ and we continue inductively.
If we cannot shift further because the cost would increase, there are two cases.
If the cost increases in both directions of shifting, then $s$ lies on the positive transition line between two cut ellipses.
If the cost increases only towards $\tau_l$, then $s$ lies on a boundary line.%
\end{proof}

\begin{observation}
    \label{obs:linebounds}
    There are $O(m)$ boundary lines each of which can be computed in time $O(1)$.
    There are $O(m^2)$ transition lines and these can be computed as minima of $O(m)$ quadratic functions in time $O(m^3)$.
\end{observation}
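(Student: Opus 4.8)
The plan is to treat the two assertions separately, in both cases exploiting that within the fixed vertical slab of $P_i$ there are only $O(m)$ cut ellipses: one $C_j$ per edge of $Q$, since the slab decomposes into the $m-1$ BSD cells $[x_{i-1},x_i]\times[y_{j-1},y_j]$ and each cell's clear space is a single (convex) cut ellipse by Lemma~3 of \cite{alt1995computing}.

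For the boundary lines I would just walk through the definition and count. Each cut ellipse $C_j$ gives the two maximum-slope tangents $t_l^+,t_r^+$, the two minimum-slope tangents $t_l^-,t_r^-$, the four lines of extremal slope through the diameter endpoints $l_j,r_j$, and, when $C_j$ is cut off, at most one line of each extremal slope through each point where $C_j$ meets a grid line of the BSD; a cell has four sides and the ellipse meets an axis-parallel line in at most two points, so that is $O(1)$ further lines. Hence $C_j$ contributes $O(1)$ boundary lines and, summing over the $O(m)$ cut ellipses, there are $O(m)$ boundary lines. Each one is computable in $O(1)$ time from elementary conic operations: a tangent of prescribed slope is found by forcing a discriminant to vanish (a quadratic in the intercept); the diameter $d(C_j)$ is obtained from the two vertical tangents and the center of the uncut ellipse; a line of fixed slope through a given point is immediate; and the intersections of $C_j$ with a grid line are the roots of a quadratic.

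For the transition lines I would first bound the number of strips: since each cut ellipse contributes $O(1)$ lines to $S^+$ and to $S^-$, we have $|S^+|,|S^-|=O(m)$, so $S^+$ and $S^-$ each partition the slab into $O(m)$ parallel strips. I then charge every transition line to the ordered pair $(C_i,C_j)$ of cut ellipses it joins; there are $O(m^2)$ such pairs. For a fixed pair and a fixed sign, view the cost $c_{C_i,C_j}(x)$ of the extremal-slope line at position $x$ that runs from the boundary of $C_i$ to that of $C_j$: the sequence of cut ellipses and obstacles the line meets between $C_i$ and $C_j$ is constant within each strip (this is precisely the property for which the strips were defined), and on such a piece $c_{C_i,C_j}$ is, as noted in the text, a sum of the $O(m)$ elementary obstacle-crossing costs, each quadratic in $x$, hence quadratic. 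Therefore $c_{C_i,C_j}$ is piecewise quadratic with $O(m)$ pieces, and its minimizer is the best of the per-strip minima $\opt_{i,j,\tau}^{+/-}$, i.e.\ the minimum of $O(m)$ quadratic functions; scanning the strips from left to right while maintaining the running sum of obstacle-cost quadratics, this minimum is computed in $O(m)$ time. Doing this for all $O(m^2)$ pairs and both signs gives $O(m^2)$ transition lines in $O(m^3)$ total time.

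The step I expect to be the main obstacle is the second count: one has to verify carefully that the obstacle-crossing costs really are quadratic in the line-position parameter and glue into a piecewise-quadratic function whose breakpoints are exactly the strip boundaries, and that ``going between $C_i$ and $C_j$'' is well defined, i.e.\ that the order in which an extremal-slope line meets the cut ellipses is fixed inside a strip. The boundary-line part is routine once the constant-size list of lines per ellipse is written out.
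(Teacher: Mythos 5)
The paper states this observation without a proof, so there is nothing to compare against directly; I am assessing your reconstruction on its own terms. Your argument for the boundary lines is correct and is clearly the intended one: one cut ellipse per edge of $Q$ gives $O(m)$ ellipses in the slab, each contributes a constant number of lines (four extremal-slope tangents, four extremal-slope lines through $l_j$ and $r_j$, and $O(1)$ further lines through the at most eight grid-line intersections), and each such line falls out of a constant number of conic computations.

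For the transition lines, there is one point you should make explicit. You charge each transition line to an ordered pair $(C_i,C_j)$ and then compute, for each pair, the single minimizer of the piecewise-quadratic function $c_{C_i,C_j}$ by scanning the $O(m)$ strips. That gives $O(m^2)$ lines in $O(m^3)$ time and matches the observation's phrasing ``minima of $O(m)$ quadratic functions.'' But the paper defines $\opt_{i,j,\tau}^{+/-}$ for every triple $(\tau,C_i,C_j)$, which read literally allows up to $O(m^3)$ transition lines. Your argument implicitly keeps only the globally optimal line per pair; this is what the observation's bound and the algorithm's later reference to ``the transition line between $C_i$ and $C_j$'' indicate, but the reduction from triples to pairs should be stated rather than smuggled in through the charging step. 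You have also correctly identified the genuine soft spot: whether the per-obstacle crossing cost is truly quadratic in the offset. The entry and exit $x$-coordinates on an ellipse boundary are of the form ``linear plus square root of a quadratic'' in the line position, so the naive per-obstacle cost involves square roots rather than polynomials; reconciling this with the paper's claim requires care, though this is a concern about the paper's assertion itself, not a flaw you introduced.
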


We are now ready to define our discretization of the problem. On a high level, we define a weighted directed graph $G_i$ connecting relevant points on $\Lambda_{i-1}$ with relevant points on $\Lambda_i$ such that optimal curves correspond to shortest paths in $G_i$.
The edges of $G_i$ are geometrically motivated by the boundary and transition lines defined above. In order to also include the information of $F_{i-1}(\lambda_{i-1})$, we further do a construction similar to the boundary lines, but for the non-differentiable points of $F_{i-1}$.
Formally, let $x\in\Lambda_{i-1}$ be a non-differentiable point of $F_{i-1}(\lambda_{i-1})$. We consider two lines through $x$, one with maximal slope and one with minimal slope. Taking these two lines for every non-differentiable point of $F_{i-1}(\lambda_{i-1})$, we get a set of lines that we call the \emph{$F$-induced lines.}
We define the graph $G_i$ (and a natural straight-line drawing of it in the BSD).
First, we define the following set of vertices and their embedding in the BSD.
\begin{itemize}
    \item[(i)] For each intersection between two lines which are boundary or $F$-induced lines lying outside or on the boundary of $\bigcup_i C_i$, we add a vertex.
    \item[(ii)] For each intersection of a boundary line or an $F$-induced line with $\Lambda_{i-1}$ or $\Lambda_i$, we add a vertex. This includes adding a vertex on $\Lambda_{i-1}$ for each non-differentiable point of $F_{i-1}$.
    \item[(iii)] For each $C_j$ we add as vertices the intersection points of the boundary of $C_j$ with all boundary and $F$-induced lines.
    \item[(iv)] For each $C_j$ add as vertices the intersections of its boundary with its transition lines.
    \item[(v)] For each interval between two vertices on $\Lambda_i$ we add one vertex in the interval on $\Lambda_i$. For each of those vertices, we now add the lines of maximal and minimal slope to the set of $F$-induced lines and add vertices as in (i), (ii) and (iii).
    \item[(vi)] Finally, we add one dummy vertex $L$, which we think of as lying to the left of $\Lambda_{i-1}$.
\end{itemize}
It will follow from our arguments that the function $F_i(\lambda_i)$ has at most $O(im)$ pieces.
Thus we get $O(n^2m^2)$ vertices in total in the vertical slab.
We add four types of edges: 
\begin{itemize}
    \item We add edges between each vertex of Types (i), (ii), (iii), and (v) and the at most four closest vertices on the lines of maximum or minimum slope that pass through this vertex. We only add such edges if they fully lie in obstacles.
    \item For each transition line between cut ellipses $C_i$ and $C_j$ we add an edge between the two consecutive vertices of Type (iv) on the line where one belongs to $C_i$ and the other to~$C_j$.
    \item Inside each $C_i$ we make all possible connections, i.e., we connect two vertices lying on the boundary of $C_i$ by an edge if the slope of the segment connecting them does not exceed our slope bounds. 
    Note that every $C_i$ is convex and that we might be including edges in a $C_i$ incident to vertices that lie in the shared boundary between $C_i$ and $C_{i-1}$ or $C_{i+1}$.
    \item We connect $L$ with all vertices lying on $\Lambda_{i-1}$. 
\end{itemize}
Note that this defines $O(n^2m^2)$ edges.
We orient all these edges from left to right
and assign them weights.
Consider an edge $e$ not incident to $L$ and its intersections with cut ellipses. 
These intersections define sub-segments, some of which fully lie in obstacles. 
The weight of $e$, $w(e)$, is the sum of the lengths of the projections onto the horizontal line of these sub-segments in obstacles.
For the edges incident to $L$ we define the weight of the edge $\{L,v\}$ as $F_{i-1}(v)$.

\begin{restatable}{lemma}{nondifflemma}
\label{lem:left-nondiff}
For any fixed $\lambda_i$ corresponding to a point on $\Lambda_i$ and for each interval $I=[\lambda_{i-1},\lambda_{i-1}']$ on $\Lambda_{i-1}$ that minimizes $f(\lambda_{i-1},\lambda_i)$ 
we have that 
the endpoints of $I$ are vertices of $G_i$.
\end{restatable}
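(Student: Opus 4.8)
The plan is to show that $f(\lambda_{i-1},\lambda_i)$, viewed as a function of $\lambda_{i-1}$ for fixed $\lambda_i$, is piecewise linear/quadratic with all breakpoints occurring at the $y$-coordinates (on $\Lambda_{i-1}$) of vertices of $G_i$; then any interval $I$ on which it attains its minimum must have its endpoints at such breakpoints (or at the endpoints $1$, $|Q|$ of $\Lambda_{i-1}$, which are also included as intersection vertices of Type (ii)). The key structural input is Lemma~\ref{lem:optpath} together with Lemma~\ref{lem:shift}: for each choice of $\lambda_{i-1}$ there is an optimal curve whose Type~1 segments lie on boundary lines, $F$-induced lines, transition lines, or the two extreme-slope lines through the start point $\lambda_{i-1}$ and the end point $\lambda_i$, and whose Type~2 segments lie on the diameters $d(C_j)$. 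Since $\lambda_i$ is fixed, the extreme-slope lines through $\lambda_i$ are fixed; only the two extreme-slope lines through $\lambda_{i-1}$ move as $\lambda_{i-1}$ varies.

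First I would argue that as $\lambda_{i-1}$ moves within a small interval, the combinatorial ``type'' of the optimal curve --- which sequence of cut ellipses it passes through, which boundary/transition/diameter lines its segments lie on, and the order in which it crosses obstacle boundaries --- stays fixed, so that $f(\lambda_{i-1},\lambda_i)$ is a single analytic (piecewise-quadratic, in the continuous setting) formula on that interval. This combinatorial type can only change when, as $\lambda_{i-1}$ varies, the extreme-slope line through $\lambda_{i-1}$ begins or stops intersecting a cut ellipse, passes through an intersection of two already-present lines, or the optimal curve switches from one candidate transition/diameter configuration to another; by construction, each such event corresponds exactly to $\lambda_{i-1}$ hitting a vertex of $G_i$ of Type (i), (ii), or (iii), because those vertices were defined precisely as the intersections of boundary and $F$-induced lines with each other, with $\Lambda_{i-1}$, $\Lambda_i$, and with the boundaries of the $C_j$. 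The refinement in step~(v), which adds for a representative point in each interval of $\Lambda_i$ the two extreme-slope lines through it and the corresponding new vertices, is what guarantees that the candidate lines through the fixed $\lambda_i$ are also tracked. Hence on each maximal sub-interval of $\Lambda_{i-1}$ delimited by ($y$-coordinates of) vertices of $G_i$, $f(\cdot,\lambda_i)$ is given by one smooth formula, and an argmin interval $I$ can only have an endpoint either at such a vertex or at a point where two of these smooth pieces cross while both equal the global minimum. In the latter case, along the two extreme-slope lines through $\lambda_{i-1}$ the value of $f$ is locally monotone (increasing as one moves the start point away from the reachable region, by the obstacle-traversal-cost argument already used in the proofs of Lemmas~\ref{lem:f_bivariate_diagram} and~\ref{lem:monotone_between_breakings}), so a flat minimum-attaining sub-interval cannot arise in the interior of a piece unless that piece is itself flat, which again forces its delimiting breakpoints to be vertices of $G_i$.

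The step I expect to be the main obstacle is the case analysis showing that the optimal curve's combinatorial type changes \emph{only} at vertices of $G_i$ and not at some other algebraic event --- in particular, ruling out that the point where one candidate optimal curve overtakes another (a ``collision'' between two pieces of the lower envelope that defines $f$) could occur strictly between two consecutive $G_i$-vertices at a value strictly below neighboring values, which would create a minimum at a non-vertex point. Here I would use Lemma~\ref{lem:shift} more carefully: both competing optimal curves have their Type~1 segments on a fixed finite set of lines (independent of $\lambda_{i-1}$ once the type is fixed, except for the extreme-slope lines through $\lambda_{i-1}$), so the difference of their cost functions is linear in $\lambda_{i-1}$ on the piece, hence has at most one zero, and at that zero the common value equals the minimum only if both functions are simultaneously minimal there --- but a linear function with a strict interior minimum is impossible, so the minimum over the piece is attained at an endpoint of the piece, i.e.\ at a $G_i$-vertex. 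Combining this with the monotonicity observations along $\ell_1,\ell_2$ (the extreme-slope lines) yields that every endpoint of every argmin interval $I$ is a vertex of $G_i$, as claimed.
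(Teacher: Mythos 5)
Your proposal attempts a global structural argument (classifying the breakpoints of $f(\cdot,\lambda_i)$ and excluding interior minima), whereas the paper's proof is a short local perturbation: take an argmin $\lambda_{i-1}$ that is not a vertex of $G_i$, use Lemma~\ref{lem:shift} to assume the first segment of an optimal trajectory has extreme slope, and examine the point $p$ where it first hits a cut ellipse $C_j$. If $\lambda_{i-1}$ lies in a cut ellipse, it is interior to an argmin interval whose endpoints are vertices; if the tangent to $C_j$ at $p$ is non-vertical, a small parallel shift of the first segment makes it enter $C_j$ sooner and strictly decreases the cost, contradicting optimality; if the tangent is vertical, $p$ is the leftmost point of $C_j$, so the first segment lies on a boundary line and $\lambda_{i-1}$ is a Type~(ii) vertex. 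That is the entire argument.

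The concrete gap in your write-up is the claim that ``the difference of their cost functions is linear in $\lambda_{i-1}$ on the piece.'' In the continuous setting the cost of a segment of extreme slope between two elliptic arcs depends on $\lambda_{i-1}$ through the intersection of a shifting parallel line with ellipses; that dependence is quadratic (indeed algebraic), not linear, and the paper itself treats $f$ and $F_i$ as piecewise quadratic. Thus a single piece of $f(\cdot,\lambda_i)$ can perfectly well have a strict interior minimum, and your linearity-based exclusion of interior minima does not go through. Your fallback --- the ``locally monotone'' claim via Lemmas~\ref{lem:f_bivariate_diagram} and~\ref{lem:monotone_between_breakings} --- also does not apply: those lemmas concern the semi-discrete setting with rectangular obstacles and piecewise-linear $f$ with integer slopes, and nothing there is proved for elliptic obstacles. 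The monotonicity you actually need --- that shifting the first segment towards the leftmost point of the first cut ellipse it reaches strictly reduces cost --- is exactly the tangent/shift argument the paper supplies; it is the irreducible geometric content of the lemma, and your structural framing adds machinery around that step without closing it.
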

\begin{proof}
Let $\lambda_{i-1}$ be a point which minimizes $f(\lambda_{i-1},\lambda_i)$ but that is not a vertex of $G_i$. Let $\gamma$ be a trajectory of optimal cost. 
By Lemma~\ref{lem:shift} we may assume that the first segment is of maximal or minimal slope; without loss of generality we assume that it is of maximal slope. 
We first consider the case in which $\lambda_{i-1}$ lies on the intersection of $\Lambda_i$ with some cut ellipse $C_j$. The intersection of $\Lambda_i$ with $C_j$ is in general a segment, and its endpoints are vertices of $G_i$. 
Thus, if $\lambda_{i-1}$ is not a vertex of $G_i$, we have that in a small neighborhood of $\lambda_{i-1}$ all points lie on the boundary of $C_j$, implying that $\lambda_{i-1}$ is in the interior of a minimizing interval whose endpoints are vertices of $G_i$.

Next we assume that $\lambda_{i-1}$ lies in some obstacle. Let $C_j$ be the cut ellipse first intersected by the first segment of $\gamma$ and let $p$ be the intersection point. Consider the tangent $\tau$ to $C_j$ at $p$ and assume it to be non-vertical having positive slope; the argument for negative slope is symmetric.
Since $\tau$ is of positive slope the leftmost point of $C_j$ lies to the left of $p$. Hence, there exists a number $\varepsilon > 0$ such that we can shift the first segment of $\gamma$ by $\varepsilon$ units to the bottom and elongate the second segment of $\gamma$ with a segment lying completely inside $C_j$. Let $p'$ be the intersection point of this shifted segment and $C_j$, then $p'$ lies left of $p$ and we found a path of cost smaller than the one of $\gamma$, a contradiction.

Finally, assume that $\tau$ is vertical at $p$. Then, $p$ coincides with the left-most point on the boundary of $C_j$. But then, we find the starting point of $\gamma$ to lie on the intersection of a boundary-line with $\lambda_{i-1}$ for which a vertex was added to $G_i$.
\end{proof}

\begin{restatable}{lemma}{graphcontlemma}
\label{lem:graph-continuous}
Let $v$ be a vertex of $G_i$ on $\Lambda_i$ and let $w(L,v)$ be the weight of a shortest path in $G_i$ from $L$ to $v$. Then $w(L,v)=F_i(v)$.
\end{restatable}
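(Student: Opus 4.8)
The plan is to prove both inequalities $w(L,v) \leq F_i(v)$ and $w(L,v) \geq F_i(v)$, exploiting the fact that paths in $G_i$ from $L$ to $v$ correspond exactly (up to cost) to the relevant traversals in the BSD slab, with the edge $\{L,\lambda_{i-1}\}$ contributing the additive term $F_{i-1}(\lambda_{i-1})$.

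\begin{proof}[Proof sketch]
We show $w(L,v) = F_i(v)$ by two inequalities.

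First, $w(L,v) \leq F_i(v)$. Let $\lambda_{i-1}\in\Lambda_{i-1}$ be a point attaining $F_i(v) = F_{i-1}(\lambda_{i-1}) + f(\lambda_{i-1},v)$. By Lemma~\ref{lem:left-nondiff} we may pick $\lambda_{i-1}$ to be a vertex of $G_i$ on $\Lambda_{i-1}$ (if $\lambda_{i-1}$ lies in the interior of a minimizing interval, replace it by an endpoint of that interval, which is a vertex of $G_i$ with the same value of $F_{i-1}+f$ since $F_{i-1}$ is affine and $f$ constant in between). By Lemmas~\ref{lem:optpath} and~\ref{lem:shift} there is an optimal curve $\gamma$ from $\lambda_{i-1}$ to $v$ of cost $f(\lambda_{i-1},v)$ whose segments are Type~1 segments lying on transition lines, boundary lines, $F$-induced lines, or on the lines of maximal/minimal slope through the start point $\lambda_{i-1}$ or the end point $v$, together with Type~2 segments on diameters $d(C_j)$. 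The breakpoints of $\gamma$ are then intersections of such lines with each other, with grid lines, with the cut-ellipse boundaries, or with $\Lambda_{i-1}, \Lambda_i$ — precisely the vertices of types (i)--(v). The subtle point is that $\gamma$'s start $\lambda_{i-1}$ and end $v$ need the slope-lines through them to be present: for $\lambda_{i-1}$ this is ensured because a vertex of $G_i$ on $\Lambda_{i-1}$ is a non-differentiable point of $F_{i-1}$ (an $F$-induced line passes through it) or a boundary-line intersection; for $v$ this is exactly why step (v) adds, for each interval on $\Lambda_i$, a representative vertex together with slope lines and the induced new vertices. Hence $\gamma$ decomposes into edges of $G_i$, each edge's weight being the horizontal projection of the part of that segment inside obstacles, so the $L$-to-$v$ path $L \to \lambda_{i-1} \to \cdots \to v$ has weight $F_{i-1}(\lambda_{i-1}) + (\text{cost of }\gamma) = F_i(v)$.

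Second, $w(L,v) \geq F_i(v)$. Take any shortest path $\pi$ in $G_i$ from $L$ to $v$. Its first edge is $\{L,u\}$ for some $u\in\Lambda_{i-1}$ with weight $F_{i-1}(u)$, and the remaining part $\pi'$ is a path from $u$ to $v$ through vertices lying in the slab, every edge of which is realized by an actual segment in the BSD (Type~1 segments of extremal slope lying in obstacles, transition-line edges, or chords inside a cut ellipse $C_j$ whose slope respects the speed bound). Concatenating these segments yields a monotone, speed-feasible curve $\hat\gamma$ from $u$ to $v$ in the slab whose cost (time spent outside $\bigcup_j C_j$) equals $w(\pi') := w(\pi) - F_{i-1}(u)$, because edge weights are exactly the horizontal projections of the obstacle portions and the complementary (clear) portions inside the $C_j$ contribute nothing. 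Therefore $w(\pi) = F_{i-1}(u) + \mathrm{cost}(\hat\gamma) \geq F_{i-1}(u) + f(u,v) \geq F_i(v)$, the first inequality by definition of $f(u,v)$ as the minimum cost over all such curves and the second by definition of $F_i(v)$ as the minimum over $\lambda_{i-1}$.

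The main obstacle is the $\leq$ direction: one must verify that the structural description of optimal curves from Lemmas~\ref{lem:optpath} and~\ref{lem:shift} produces only breakpoints that were actually added as vertices of $G_i$, and in particular that the slope lines at the \emph{endpoint} $v$ are available. This is precisely the role of construction step (v), which must be invoked carefully: $v$ itself may not have had its slope lines included when types (i)--(iii) were built (those used only boundary and $F$-induced lines), so one argues that $v$ lies in some interval of $\Lambda_i$ delimited by already-present vertices, the representative vertex of that interval carries the needed slope lines, and — since $f(\lambda_{i-1},\cdot)$ and $F_i(\cdot)$ are affine between consecutive type-(ii)/(v) vertices on $\Lambda_i$ — the value $F_i(v)$ is recovered by linear interpolation from the representative, or one simply notes that we only need the claim at the $O(im)$ breakpoints of $F_i$, which are themselves vertices of $G_i$. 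Everything else is bookkeeping: monotonicity and the speed bound guarantee the concatenated segments form a valid reparametrization, and additivity of obstacle-time over the ellipse-induced subdivision gives the weight identity.
\end{proof}
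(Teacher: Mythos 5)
Your proposal follows the same two-inequality strategy as the paper: the $\geq$ direction by observing that each edge of a shortest graph path from $L$ corresponds to a sub-trajectory of equal cost (the edge $\{L,u\}$ contributing $F_{i-1}(u)$), and the $\leq$ direction by taking an optimal trajectory, forcing it through a vertex of $G_i$ on $\Lambda_{i-1}$ via Lemma~\ref{lem:left-nondiff}, normalizing it by Lemma~\ref{lem:shift}, and reading off a graph path. This matches the paper's proof structure and invokes the same lemmas.

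One inaccuracy worth flagging: you justify replacing $\lambda_{i-1}$ by an endpoint of a minimizing interval by saying ``$F_{i-1}$ is affine and $f$ constant in between,'' and later speak of $f(\lambda_{i-1},\cdot)$ and $F_i(\cdot)$ being \emph{affine} between consecutive type-(ii)/(v) vertices. In the continuous setting these functions are piecewise \emph{quadratic}, not piecewise linear (this is precisely why the algorithm evaluates $F_i^{\pm}$ at three points per interval and recovers the piece by quadratic interpolation). As stated, your justification for being able to push the argmin of $F_{i-1}(\cdot)+f(\cdot,v)$ to a vertex of $G_i$ does not go through, since a quadratic piece of $F_{i-1}$ can attain its minimum in the interior of an interval on which $f(\cdot,v)$ is constant. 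The paper's own phrasing of this step is terse and glosses over the same point; your version makes the gap more visible by committing to the incorrect ``affine'' claim. The rest of your argument — the careful tracking of which slope lines through $\lambda_{i-1}$ and $v$ are present in $G_i$, and why step~(v) supplies the ones at $v$ — is a sound elaboration of what the paper leaves implicit.
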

\begin{proof}
Let $\pi$ be a path from $L$ to $v$ in $G_i$ whose weight is $w(L,v)$. By our definition of $G_i$, each edge $e_k$ of the path defines a sub-trajectory $\gamma_k$ whose cost equals the weight of $e_k$. These sub-trajectories intersect in the vertices of $\pi$, so $\pi$ defines a trajectory $\gamma$ to $v$ with cost $w(L,v)$. This shows that $w(L,v)\geq F_i(v)$.

Let now $\gamma$ be an optimal trajectory to $v$, which by definition has cost $F_i(v)$. 
Recall that $F_i(v) = \min_{w}\{F_{i-1}(w) + f(w,v)\}$ and that the non-differentiable points of $F_{i-1}$ are vertices of $G_i$.
Thus, by Lemma~\ref{lem:left-nondiff}, we can assume that the point $w$ in which $\gamma$ intersects $\Lambda_{i-1}$ is a vertex of $G_i$. 
By Lemma~\ref{lem:shift}, we can further assume that $\gamma$ fulfills the conditions of this Lemma and thus, by definition of $G_i$, $\gamma$ is partitioned into sub-trajectories that are in one-to-one correspondence with edges in $G_i$. 
More precisely, we partition $\gamma$ at its intersections with a cut ellipse $C_j$ as long as the sub-trajectory does not follow a transition line.
These edges define a path from $L$ to $v$ in $G_i$ whose weight is $F_i(v)$, showing that $w(L,v)\leq F_i(v)$.
\end{proof}

With this we can already compute $F_i(\lambda_i)$ for any $\lambda_i$ that is a vertex of $G_i$. In the following, we discuss how we can extend the computation to all of $F_i(\lambda_i)$.
Recall that $F_i(\lambda_i)$ is a continuous, piecewise quadratic function. 
In particular, in order to compute $F_i(\lambda_i)$, we will also need to compute the non-differentiable points of it.
We first observe that $G_i$ already gives us many of these.
For this, let $V_{\Lambda_i}$ denote the vertices of $G_i$ on $\Lambda_i$ that are not of Type (v).
Note that there is a natural order on $V_{\Lambda_i}$ given by the order on $\Lambda_i$.

In order to compute all the non-differentiable points of $F_i(\lambda_i)$ we realize that in the interval $(v,w)$ of $\Lambda_i$ between consecutive vertices of $V_{\Lambda_i}$, the optimal curves come in two forms: either the last segment is of maximal or of minimal slope. We define the functions $F_i^+(\lambda_{i})$ and $F_i^{-}(\lambda_i)$ as the cost of an optimal curve to $\lambda_i$ where we restrict the set of optimal curves to those of Lemma~\ref{lem:shift} with the last segment having maximal slope and minimal slope, respectively.
It then follows from Lemma~\ref{lem:shift} that $F_i(\lambda_i)=\min(F_i^+(\lambda_i),F_i^{-}(\lambda_i))$.
The next lemma tells us that if there is a non-differentiable point of $F_i(\lambda_i)$ in the interior of $[v,w]$ it must be at the point in which $F_i^+(\lambda_{i})=F_i^{-}(\lambda_i)$.
We further define the graph $G_i^+$ ($G_i^-$) as the subgraph of $G_i$ in which we restrict the edges incident to vertices on $\Lambda_i$ to those that have positive slope (resp. negative slope) and observe that we can compute $F_i^+(\lambda_{i})$ and $F_i^{-}(\lambda_i)$ from $G_i^+$ and $G_i^-$ just like we computed $F_i(\lambda_{i})$ from $G_i$.

\begin{restatable}{lemma}{fplusminuslemma}
\label{lem:F-plus-minus}
Let $v$ and $w$ be two consecutive points in $V_{\Lambda_i}$. Then $F_i^+(\lambda_i)$ and $F_i^{-}(\lambda_i)$, restricted to the interval on $\Lambda_i$ where they are not infinite, are continuous, piecewise quadratic functions that do not have any non-differentiable points in the interior of the interval $[v,w]$.
Further, for any vertex $v$ in $G_i$ on $\Lambda_i$ let $w(L,v)$ be the weight of a shortest path in $G_i^+$ ($G_i^{-}$) from $L$ to $v$. Then $w(L,v)=F_i^+(v)$ ($w(L,v)=F_i^{-}(v)$).
\end{restatable}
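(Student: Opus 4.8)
The plan is to prove the two assertions of the lemma separately: the ``further'' part, which characterizes $F_i^+$ and $F_i^-$ as shortest-path weights in $G_i^+$ and $G_i^-$, is a restricted copy of Lemma~\ref{lem:graph-continuous}; the differentiability part will follow from showing that the combinatorial type of an optimal curve is constant on each interval between consecutive points of $V_{\Lambda_i}$.

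For the graph characterization I would rerun the proof of Lemma~\ref{lem:graph-continuous} with $G_i$, $F_i$ replaced by $G_i^+$, $F_i^+$, the only new point being that restricting, in the definition of $G_i^+$, the edges incident to vertices on $\Lambda_i$ to those of positive slope corresponds exactly to restricting the admissible curves to those of Lemma~\ref{lem:shift} whose last segment has maximal slope. In one direction, turning a shortest $L$-to-$v$ path in $G_i^+$ into a curve $\gamma$ to $v$ of the same cost (as in Lemma~\ref{lem:graph-continuous}) yields a curve whose last edge is incident to $v$ and hence of positive slope; since by Lemmas~\ref{lem:optpath} and~\ref{lem:shift} a Type~1 segment inside an obstacle has maximal absolute slope, while a segment inside a cut ellipse carries zero cost and may therefore be taken of maximal slope, $\gamma$ is an optimal curve whose last segment has maximal slope, so the weight of the path is at least $F_i^+(v)$. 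In the other direction, decomposing an optimal curve realizing $F_i^+(v)$ into edges of $G_i$ exactly as in Lemma~\ref{lem:graph-continuous} gives a path whose last edge is a maximal-slope edge incident to $v$, hence a path in $G_i^+$, so $w(L,v)\le F_i^+(v)$. The statement for $G_i^-$ and $F_i^-$ is symmetric.

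For the absence of non-differentiable points in the interior of $[v,w]$, fix two consecutive points $v,w$ of $V_{\Lambda_i}$ and a value $\lambda_i\in(v,w)$, and consider an optimal curve realizing $F_i^+(\lambda_i)$. The plan is to show that its combinatorial type is the same for all $\lambda_i\in(v,w)$: the maximal-slope line through $\lambda_i$ meets the same ordered sequence of cut ellipses and obstacle components, the prefix of the curve follows the same transition and boundary lines in the same order, and the curve meets $\Lambda_{i-1}$ in the interior of a single minimizing interval whose endpoints are vertices of $G_i$ on $\Lambda_{i-1}$ (by Lemma~\ref{lem:left-nondiff} together with the fact that the breakpoints of $F_{i-1}$ are vertices of $G_i$). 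Each event at which this type could change --- the maximal-slope line through $\lambda_i$ becoming tangent to some $C_j$, passing through an intersection of $C_j$ with a BSD grid line, passing through an intersection of two boundary, transition, or $F$-induced lines, or $\lambda_i$ reaching some $r_j$ or the intersection with $\Lambda_i$ of the maximal-slope line through a breakpoint of $F_{i-1}$ --- occurs at a value of $\lambda_i$ that lies on a boundary or $F$-induced line, hence at a vertex of $G_i$ on $\Lambda_i$. The Type~(v) sample vertices and the maximal- and minimal-slope lines they spawn are inserted precisely so that on every interval between two consecutive points of $V_{\Lambda_i}$ all of these events have already been accounted for; carefully checking that no structure-changing value of $\lambda_i$ falls in the open interval $(v,w)$ --- in particular the interaction between the maximal-slope last segment, the transition lines, the diameters $d(C_j)$, and the subdivision of $\Lambda_{i-1}$ induced by the breakpoints of $F_{i-1}$ --- is the main obstacle.

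With the combinatorial type fixed on $(v,w)$, $F_i^+(\lambda_i)$ decomposes as the cost along the fixed prefix up to where it leaves the last transition or boundary line (an affine function of that point, which moves affinely in $\lambda_i$ as the intersection of a fixed line with the maximal-slope line through $\lambda_i$), plus the cost of the last maximal-slope segment (again affine in $\lambda_i$, being a horizontal extent between fixed lines and fixed cut-ellipse arcs), plus $F_{i-1}$ evaluated at a point of $\Lambda_{i-1}$ that moves affinely in $\lambda_i$, which is a single quadratic piece since $F_{i-1}$ is piecewise quadratic by induction. Hence $F_i^+$ restricted to $(v,w)$ is a single quadratic, in particular differentiable there, and the same holds for $F_i^-$; piecing these together over all intervals, and noting that $F_i^+$ is continuous on the set where it is finite because it is an infimum of costs of curves over a continuously varying admissible set (and is infinite exactly where no admissible maximal-slope last segment reaches $\Lambda_{i-1}$), gives the stated global structure.
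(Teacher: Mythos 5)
Your overall approach matches the paper's: for the shortest-path characterization you rerun Lemma~\ref{lem:graph-continuous} with $G_i^+$, $F_i^+$, and for the absence of interior non-differentiable points you argue "analogously to Lemma~\ref{lem:left-nondiff}" that structure-changing events on $\Lambda_i$ only occur at vertices of $G_i$. The paper's own proof of this lemma is essentially a two-line delegation to those two lemmas, so you are filling in exactly the content the paper leaves implicit, and the graph part you spell out correctly.

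The filled-in details for the differentiability part, however, have two problems. First, your decomposition asserts that "the cost of the last maximal-slope segment (again affine in $\lambda_i$, being a horizontal extent between fixed lines and fixed cut-ellipse arcs)" — but a horizontal extent whose endpoint lies on a cut-ellipse arc is not affine in $\lambda_i$: substituting the maximal-slope line $y = sx + t$ with $t$ linear in $\lambda_i$ into a quadratic ellipse equation and solving gives $x$ as $\frac{-B(t)\pm\sqrt{\Delta(t)}}{2A}$ with $\Delta$ quadratic in $t$, i.e.\ a square-root expression, not a polynomial. So the claimed "single quadratic on $(v,w)$" is not justified by this step. (The paper's own assertion that $f$ is piecewise quadratic "because the boundaries of the obstacles are piecewise quadratic" is similarly terse, so this is arguably an imprecision you inherit rather than introduce, but your explicit "affine" claim makes it visibly wrong.) Second, your statement that "the Type~(v) sample vertices and the maximal- and minimal-slope lines they spawn are inserted precisely so that on every interval between two consecutive points of $V_{\Lambda_i}$ all of these events have already been accounted for" is confused: the paper explicitly \emph{excludes} Type~(v) vertices from $V_{\Lambda_i}$, so they sit in the interior of the intervals and cannot "account for" events there — they exist solely to supply a third sample for the interpolation. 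You do flag the event enumeration as "the main obstacle", and indeed verifying that the listed events exhaust all combinatorial changes and all land on boundary or $F$-induced lines is precisely what the paper's "analogous to Lemma~\ref{lem:left-nondiff}" skips; your proof does not close that gap either.
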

\begin{proof}
We only argue about $F_i^+(\lambda_i)$, the arguments for $F_i^{-}(\lambda_i)$ are symmetric.
We first argue that $F_i^+(\lambda_i)$ is continuous and piecewise quadratic. As by definition we have that $F_i(\lambda_i):=\min_{\lambda_{i-1}}(F_{i-1}(\lambda_{i-1})+f(\lambda_{i-1},\lambda_i))$, it is enough to show that $f(\lambda_{i-1},\lambda_i)$ is continuous and piecewise quadratic.
This follows from the definition of $f$ as the sum of distances within obstacles of an optimal trajectory, and the fact that the boundaries of the obstacles are piecewise quadratic.
The fact that all non-differentiable points are on vertices of $G_i$ is analogous to the proof of Lemma \ref{lem:left-nondiff}.
Finally, the proof that $w(L,v)=F_i^+(v)$ is analogous to the proof of Lemma \ref{lem:graph-continuous}.
\end{proof}

We now 
describe our algorithm.
For each interval $[v,w]$ between two consecutive points in $V_{\Lambda_i}$ 
we have three values of $F_i^+(\lambda_i)$ and $F_i^{-}(\lambda_i)$: at $v$, at $w$, and at the subdivision vertex of Type (v). 
By Lemma~\ref{lem:F-plus-minus}, since the functions restricted to $[v,w]$ are quadratic functions, we can obtain them via interpolation
and we can compute both $F_i^+(\lambda_i)$ and $F_i^{-}(\lambda_i)$ via a shortest path computation in the graphs $G_i^+$ and $G_i^{-}$.
As both of them have $O(n^2m^2)$ many vertices edges, each of the $O(nm)$ values of $F_i^+(\lambda_i)$ and $F_i^{-}(\lambda_i)$ can be computed in time $O(n^2m^2\log (nm))$ time using Dijkstra's algorithm.
Thus, both $F_i^+(\lambda_i)$ and $F_i^{-}(\lambda_i)$ can be computed in time $O(n^3m^3\log (nm))$.
We further get that both $F_i^+(\lambda_i)$ and $F_i^{-}(\lambda_i)$ have $O(nm)$ many pieces.
As stated before, $F_i(\lambda_i)=\min(F_i^+(\lambda_i),F_i^{-}(\lambda_i))$, thus we can compute $F_i(\lambda_i)$ from $F_i^+(\lambda_i)$ and $F_i^{-}(\lambda_i)$ in additional time $O(nm)$.

\begin{lemma}\label{lem:Fcomputation_continuous}
Given $F_{i-1}(\lambda_{i-1})$, we can compute $F_i(\lambda_i)$ in time $O(n^3m^3\log (nm))$.
\end{lemma}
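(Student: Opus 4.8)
The plan is to reduce the optimization inside the $i$-th vertical slab to a shortest-path computation on the graph $G_i$ (and its oriented versions $G_i^+, G_i^-$) and then reconstruct the piecewise-quadratic function $F_i$ from the resulting distance labels.

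First I would verify the size of $G_i$. By Observation~\ref{obs:linebounds} there are $O(m)$ boundary lines and $O(m^2)$ transition lines, computable in $O(m^3)$ time, and there are $O(im)=O(nm)$ $F$-induced lines since $F_{i-1}$ has $O((i-1)m)$ non-differentiable points. The vertices of Types (i)--(iv) are intersections of these $O(nm)$ lines among themselves, with $\Lambda_{i-1}$ and $\Lambda_i$, with the cut ellipses, or with transition lines, giving $O(n^2m^2)$ vertices; the $O(nm)$ subdivision vertices of Type (v) each add two more slope lines, contributing $O(nm)$ further intersections and hence still $O(n^2m^2)$ vertices overall. Each of the four edge families is then $O(n^2m^2)$ as well, and assigning to an edge its weight --- the total horizontal extent of its sub-segments lying in obstacles, where each edge is cut by only $O(1)$ cut ellipses --- costs $O(1)$ per edge.

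Next I would invoke correctness. By Lemma~\ref{lem:graph-continuous}, for every vertex $v$ of $G_i$ on $\Lambda_i$ the shortest $L$-to-$v$ distance in $G_i$ equals $F_i(v)$, and by Lemma~\ref{lem:F-plus-minus} the analogous statement holds for $G_i^+,G_i^-$ with $F_i^+,F_i^-$. For each interval $[v,w]$ between consecutive points of $V_{\Lambda_i}$ I obtain $F_i^+$ and $F_i^-$ at $v$, at $w$, and at the Type (v) vertex in the interior by shortest-path queries in $G_i^+$ and $G_i^-$; since by Lemma~\ref{lem:F-plus-minus} both functions are quadratic and free of non-differentiable points on $(v,w)$, these three samples determine each of them by interpolation. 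A Dijkstra run on a graph with $O(n^2m^2)$ vertices and edges costs $O(n^2m^2\log(nm))$ because the vertex count is polynomial in $nm$; performing such queries for the $O(nm)$ intervals gives $O(n^3m^3\log(nm))$ in total (a single run from $L$ already labels all target vertices, so this is an overcount, but it suffices). Finally $F_i(\lambda_i)=\min(F_i^+(\lambda_i),F_i^-(\lambda_i))$ is computed interval by interval in $O(1)$ time each, introducing at most one extra breakpoint where the two quadratics cross; hence $F_i$ has $O(nm)$ pieces and the whole slab is processed in $O(n^3m^3\log(nm))$.

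The main obstacle is not the timing bookkeeping but ensuring the discretization is lossless: that restricting to $G_i$-paths does not raise the optimum, and that every non-differentiable point of $F_i$ lands on a vertex of $G_i$. That content is exactly Lemmas~\ref{lem:optpath}, \ref{lem:shift}, \ref{lem:left-nondiff}, \ref{lem:graph-continuous}, and \ref{lem:F-plus-minus}, which I may assume here. Within this lemma the delicate points are the counting that keeps $|G_i|=O(n^2m^2)$ despite $F_{i-1}$ injecting $\Theta(nm)$ induced lines, and the justification of the interpolation step via the ``no interior non-differentiable points'' guarantee of Lemma~\ref{lem:F-plus-minus}.
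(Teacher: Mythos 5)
Your proposal is correct and follows essentially the same approach as the paper: build $G_i$, $G_i^+$, $G_i^-$, use Dijkstra to obtain $F_i^+$ and $F_i^-$ at the three sample points per interval, interpolate the quadratic pieces using Lemma~\ref{lem:F-plus-minus}, and take the minimum. (One minor imprecision: the transition-line edges between $C_j$ and $C_{j'}$ can cross $\Theta(m)$ cut ellipses, so their weights are not $O(1)$ each; they are instead produced in the $O(m^3)$ precomputation of Observation~\ref{obs:linebounds}, which does not affect the overall bound. Your parenthetical that a single Dijkstra run from $L$ already labels all targets is a correct observation the paper does not state explicitly.)
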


As before we get an algorithm to compute $F(\lambda_n)$, albeit with higher time complexity.

\begin{theorem}
Given two polygonal curves $P$ and $Q$ with $n$ and $m$ vertices, respectively, the continuous Barking distance of $Q$ to $P$ can be computed in time $O(n^4m^3\log (nm))$.
\end{theorem}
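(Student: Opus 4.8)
The plan is to observe that the continuous barking distance of $Q$ to $P$ is, by definition of the functions $F_i$, precisely the single value $F_n(|Q|)$. Hence it suffices to compute the sequence $F_0, F_1, \dots, F_n$ iteratively over the vertical slabs of the BSD and then read off $F_n(|Q|)$. The base case is $F_0(\lambda_0)\equiv 0$ (equivalently, the leftmost slab carries no incoming cost). For the inductive step I would simply invoke Lemma~\ref{lem:Fcomputation_continuous}: from $F_{i-1}$ we obtain $F_i$ in time $O(n^3 m^3 \log(nm))$. Performing this for $i=1,\dots,n$ gives total running time $n\cdot O(n^3 m^3 \log(nm)) = O(n^4 m^3 \log(nm))$, and extracting $F_n(|Q|)$ from the final piecewise-quadratic representation costs only $O(\log(nm))$, which is absorbed.

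The point that must be carried through the induction — and is exactly what the preceding lemmas were built to supply — is that $F_{i-1}$ is representable with only $O((i-1)m)=O(nm)$ pieces. This is what allows the $O(nm)$ many $F$-induced lines added in the construction of $G_i$, keeping $|V(G_i)|=|E(G_i)|=O(n^2m^2)$ and hence each of the $O(nm)$ restricted interpolations in Lemma~\ref{lem:Fcomputation_continuous} within $O(n^2m^2\log(nm))$ via Dijkstra's algorithm. The piece bound is in turn guaranteed by Lemma~\ref{lem:F-plus-minus}: $F_i^+$ and $F_i^-$ each have $O(nm)$ quadratic pieces and are differentiable strictly between consecutive vertices of $V_{\Lambda_i}$, so $F_i=\min(F_i^+,F_i^-)$ inherits an $O(nm)$-piece description. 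I would therefore phrase the proof as a single induction on $i$ whose hypothesis bundles correctness of $F_{i-1}$ (Lemmas~\ref{lem:graph-continuous} and~\ref{lem:F-plus-minus}), the $O((i-1)m)$ complexity bound, and the cumulative running time $O((i-1)\,n^3 m^3 \log(nm))$; the step is then verbatim Lemma~\ref{lem:Fcomputation_continuous} together with the piece-count observation above.

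The genuine difficulty here is not in the final accumulation but was already discharged upstream — namely, establishing (Lemmas~\ref{lem:optpath}, \ref{lem:shift}, and the surrounding discussion) that some optimal curve in each slab decomposes into maximum/minimum-slope segments and diameter segments lying on the polynomially many boundary, transition, and $F$-induced lines, so that shortest paths in $G_i$ faithfully encode $F_i$, and that these constructions introduce only $O(n^2m^2)$ vertices and edges. Given those structural lemmas, the theorem is a routine summation of $n$ iterations of identical cost, and the only things left to double-check are the trivial base case and that the $O(\log(nm))$-time final lookup does not dominate.
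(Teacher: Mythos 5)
Your proposal is correct and follows essentially the same route as the paper: compute $F_0,\dots,F_n$ iteratively across the $n$ vertical slabs by invoking Lemma~\ref{lem:Fcomputation_continuous} at each step, maintain the $O(im)$ piece bound to keep $|G_i|$ polynomial, and read off the answer as $F_n(|Q|)$, giving $n\cdot O(n^3m^3\log(nm))=O(n^4m^3\log(nm))$. The paper states this only in one line (``As before we get an algorithm to compute $F(\lambda_n)$, albeit with higher time complexity''), so your more explicit accounting of the base case, the inductive invariant, and the final lookup is a faithful elaboration of the same argument.
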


\section{Lower bound}\label{sec:lower_bounds}

Many lower bounds on time complexity are based on the Strong Exponential Time Hypothesis (SETH). 
SETH asserts that for every $\varepsilon>0$ there is a $k$ such that $k$-SAT cannot be solved in $O((2-\varepsilon)^n$ time.
Given two sets $A,B \subseteq \{0,1\}^d$ of $n$ vectors each, the Orthogonal Vectors (OV) problem asks whether there are two orthogonal vectors $u \in A$ and $v \in B$, that is, $u\cdot v = \sum_{i=1}^d u_iv_i = 0$. 
Assuming $d=\omega(\log n)$, the Orthogonal Vectors Conjecture (OVC) states that for all $\epsilon>0$, solving OV requires $\Omega(n^{2-\epsilon})$ time. 
SETH implies both OVC~\cite{Williams04CS}, meaning that the Orthogonal Vectors Conjecture is weaker and thus more likely to be true.
\begin{restatable}{theorem}{lowerboundthm}
Let $P$ and $Q$ be two disjoint polygonal curves with $n$ vertices. 
Assuming OVC,
solving the barking decision problem where the maximum speed of the dog matches the speed of the human and with constant barking radius $\rho$ requires $\Omega(n^{2-\epsilon})$ time for any $\epsilon>0$.    
\end{restatable}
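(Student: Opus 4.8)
The plan is to reduce Orthogonal Vectors to the barking decision problem via a gadget construction in $\mathbb{R}^2$, following the template of SETH-based lower bounds for \frechet distance (Bringmann~\cite{bringmann2014walking}, Bringmann and Mulzer~\cite{bringmann2016approximability}, Buchin, Ophelders and Speckmann~\cite{buchin2019seth}), but adapted so that the quantity being controlled is \emph{time spent above threshold} rather than \emph{maximum leash length}. Given OV instances $A=\{a_1,\dots,a_n\}$ and $B=\{b_1,\dots,b_n\}$ with $a_i,b_j\in\{0,1\}^d$, I would build one curve $P$ (the hiker's path) encoding the vectors of $A$ as a concatenation of vector-gadgets $P_1,\dots,P_n$, separated by ``synchronization'' segments that force the dog's reparametrization to line up gadget $P_i$ of $P$ with a vector-gadget $Q_j$ of $Q$ (the fence), where $Q$ encodes $B$ as $Q_1,\dots,Q_n$. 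Inside a coordinate-gadget the geometry is set up so that the dog can stay within barking radius $\rho$ of the hiker during the whole coordinate block if and only if $a_i[\ell]\cdot b_j[\ell]=0$; a coordinate where both entries are $1$ forces the pair to be more than $\rho$ apart for a fixed positive amount of time $\Delta>0$, regardless of how the dog moves subject to the speed bound. Thus the total barking time when the dog is forced to traverse $Q_j$ opposite $P_i$ is $0$ iff $a_i\perp b_j$, and is at least $\Delta$ otherwise.

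The second ingredient is a global structure that lets the dog ``choose'' which $Q_j$ to place opposite the bulk of $P$. As in the standard OV-to-\frechet reductions, I would concatenate the $Q_j$ along $Q$ with cheap connector pieces and add outer scaffolding to $P$ of the form (long prefix)$\,\cdot\,P_i\,\cdot\,$(long suffix) so that, when matched against $Q$, an optimal dog traversal must spend essentially all of one hiker-gadget $P_i$ aligned with a single fence-gadget $Q_j$; the connectors and scaffolding are designed to be traversable at $0$ barking cost under the speed bound $s$ equal to the hiker's speed, so they contribute nothing. The reduction then has the property: $\mathbb{D}^s_B(P,Q) = 0$ (equivalently, the barking decision problem answers ``yes, there is a traversal within $\rho$ at all times'') if and only if there is \emph{no} orthogonal pair, and $\mathbb{D}^s_B(P,Q)\ge\Delta$ otherwise. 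Since $|P|,|Q|=O(nd)=O(n\,\mathrm{polylog}\,n)$ vertices, a truly subquadratic barking algorithm would refute OVC (hence SETH). I would also make sure the construction keeps $P$ and $Q$ disjoint and, per the paper's stated feature, simple — this is achieved by placing $P$ and $Q$ in well-separated horizontal strips and only letting them come within distance $\rho$ (but never intersect) inside the active coordinate gadgets, and by routing all connectors monotonically so no self-intersections arise.

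The main obstacle I anticipate is the speed constraint: because the dog has bounded speed $s$ (here matching the hiker), a cheap ``correction'' move that costs nothing in the \frechet setting — instantaneously jumping the dog to the right place on $Q$ — is no longer free, and any repositioning of the dog between gadgets consumes $x$-progress of the hiker during which the pair might be forced apart. I would handle this by scaling the connector geometry so that repositioning can always be done \emph{inside} a region where $\theta_\rho=0$ (i.e.\ within $\rho$) at the allowed speed: concretely, the connector segments of $P$ are made long enough (relative to $s$ and to the vertical extent of $Q$) that the dog can traverse the entire height of $Q$ while remaining within barking radius of the hiker on the connector, so that the choice of starting gadget is genuinely free. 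A related subtlety is ruling out a ``degenerate'' strategy in which the dog deliberately sacrifices a small amount of barking time spread across many gadgets to avoid the $\Delta$ penalty of a single non-orthogonal alignment; this is prevented by making the scaffolding strictly cheaper than $\Delta$ to enter/leave and by forcing, via the synchronization segments, that exactly one $(P_i,Q_j)$ alignment is ``charged''. Verifying this last point — that no clever fractional alignment beats the intended one — is the part of the argument that requires the most care, and is where I would spend the bulk of the technical effort (a case analysis over how a speed-bounded monotone reparametrization can distribute the hiker's arclength across the gadgets of $Q$).
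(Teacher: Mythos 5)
Your high-level strategy --- a gadget reduction from Orthogonal Vectors in the spirit of the known Fr\'echet lower bounds, with vector gadgets, synchronization/connector pieces, and careful accounting for the dog's speed constraint --- is the right template and matches what the paper actually does (vector gadgets at discrete heights, prefix/suffix synchronization, split gadgets with detours, a positioning gadget, and a goal point reachable only after a successful pass).

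However, the central claim of your reduction has the polarity reversed, and this is a real error, not a typo that can be waved away: you state that $\mathbb{D}^s_B(P,Q)=0$ \emph{iff there is no orthogonal pair}, but your own gadget design forces the opposite. You define the coordinate gadgets so that the dog is forced outside radius $\rho$ exactly when $a_i[\ell]=b_j[\ell]=1$, and remains inside otherwise; hence a full alignment of $P_i$ with $Q_j$ can be traversed at zero barking cost \emph{iff} $a_i\perp b_j$. Since the dog is a minimizer and is free to choose its alignment, it only needs \emph{one} orthogonal pair $(i,j)$ to achieve cost $0$ --- orthogonality is the \emph{favorable} case for the dog, not the unfavorable one. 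The correct equivalence (and the one the paper proves) is therefore $\mathbb{D}^s_B(P,Q)=0$ iff $\exists$ orthogonal pair, and $\mathbb{D}^s_B(P,Q)>0$ iff there is none. Beyond this, your sketch is quite schematic on the parts the paper spends most of its effort on: the concrete vector-gadget geometry (the paper uses six horizontal lines at carefully chosen distances so that $0$-$0$/$0$-$1$ pairs are within $\rho$ and $1$-$1$ pairs are not), the positioning gadget that lets the dog reach an arbitrary $Q_j$ before the ``real'' part of $P$ begins, and the \emph{goal} mechanism: the paper adds a final point $p_n$ reachable within $\rho$ only from a specific detour, which is what forces the dog to actually succeed in a vector-gadget traversal rather than simply sitting still. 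Without something playing the role of this goal constraint, the dog could stay near a safe point forever and your claimed dichotomy does not hold; you flag the ``degenerate-strategy'' issue at the end but do not offer a mechanism that resolves it.
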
 
\begin{proof}
Consider an instance of the OV problem consisting of sets $A,B \subseteq \{0,1\}^d$ of $n$ vectors and $d=\omega(\log n)$. 
We construct an instance of the barking decision problem consisting of two polylines $P$ and $Q$ with $O(n)$ vertices and rational coordinates. %
The trajectories for the human and the dog go along these polylines $P$ and $Q$, respectively.

\begin{figure}
    \centering
    \includegraphics[page=1,width=0.95\textwidth]{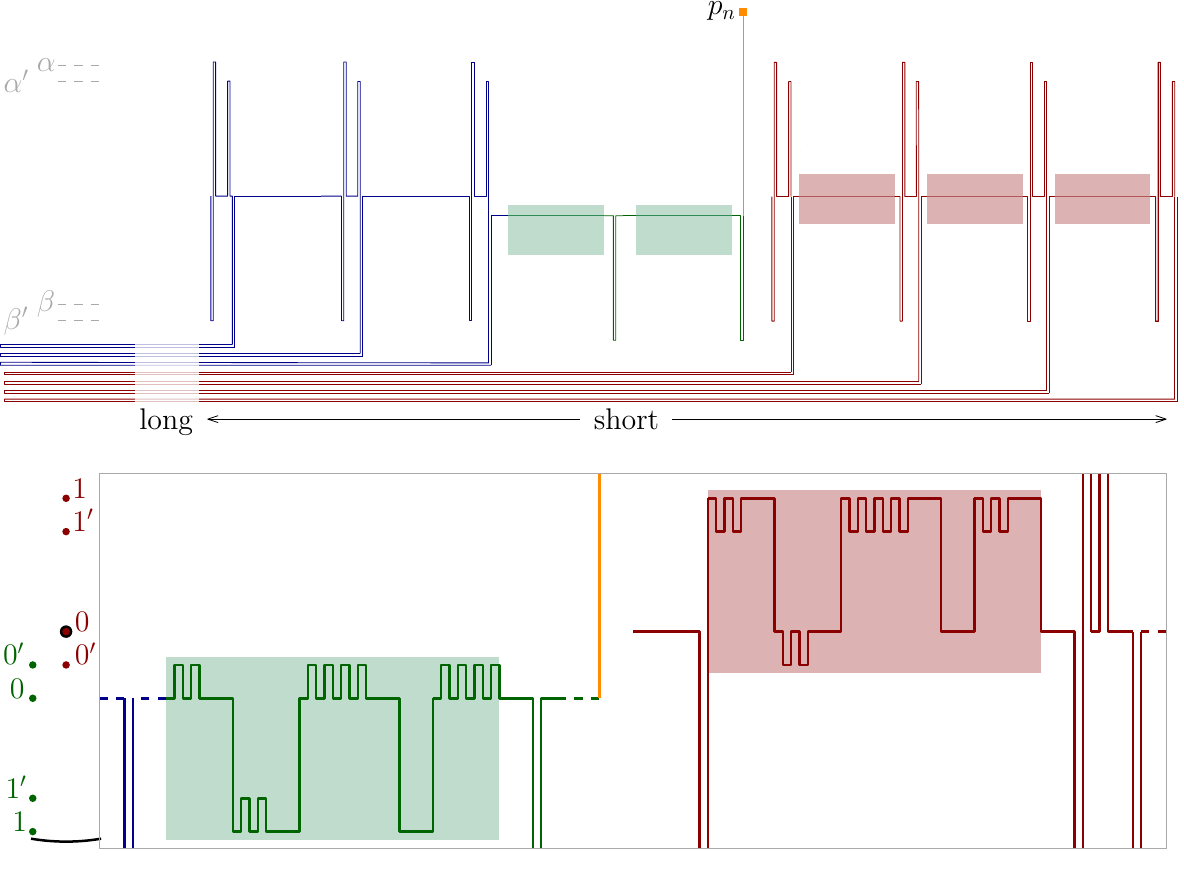}
    \caption{Illustration of the reduction. The bottom gadgets show the polylines corresponding to $00110001000 \in A$ (left) and $01001110100 \in B$ (right).}
    \label{fig:reduction}
\end{figure}

\mypar{Vector gadgets.}
As it is usually the case, the reduction is based on gadgets corresponding to the vectors; refer to Figure~\ref{fig:reduction}. 
To build them, we define a distance $u=\rho/25$.
The polyline $P$ has a gadget for each vector in $A$ and analogously the polyline $Q$ has a gadget for each vector in $B$. 
In these gadgets the horizontal distances are small and thus irrelevant.
We start defining a horizontal line $\kappa_{0'}=\pi_{0'}$ and, symmetrically on the top and bottom, in that order,
lines $\kappa_0$ and $\pi_0$ at distance $4u$,
lines $\kappa_{1'}$ and $\pi_{1'}$ at distance $16u$, and
lines $\kappa_{1}$ and $\pi_{1}$ at distance $20u$. 
\begin{observation}
\label{obs:vectors}
The vertical distance between any two horizontal lines such that at least one of them has subindex $0$ or $0'$ is at most $24u = 24/25 \rho < \rho$. 
In contrast, the distance between two lines with subindices $1$ or $1'$ is at least $32u = 32\rho/25 > \rho$. 
\end{observation}

For vector $v\in A$ (in $B$) we create a subpolyline that for each value $i \in \{0,1\}$ in $v$ has a constant horizontal segment on $\pi_i$ (resp. $\kappa_i$) if the next value is the same, 
and otherwise has an alternation $\pi_i,\pi_{i'},\pi_i,\pi_{i'},\pi_i$ (resp. $\kappa_i,\kappa_{i'},\kappa_i,\kappa_{i'},\kappa_i$) of horizontal segments connected vertically. 
The parts corresponding to different consecutive values are connected with vertical segments. 
Figure~\ref{fig:reduction} (bottom) shows an example.

\mypar{Prefixes and suffixes.}
To force synchronization in the transversal of the vector gadgets, we prepend the vector $1\ldots10$ with $d$ $1$s at the beginning and we append the reverse vector $01\ldots1$ at the end of each vector in $B$. Analogously, we prepend the vector $0\ldots01$ with $d$ $0$s at the beginning and we append the reverse vector $10\ldots0$ at the end of each vector in $A$.
Only in these prefixes and suffixes of $A$ we have such a long stretch (of length at least $4ud$) in which $A$ is within radius $\rho$ of vertices on $\kappa_1$ and $\kappa_1'$.
\begin{observation}
\label{obs:prefix-sufix}
    If always staying within barking radius $\rho$, the dog can only (fully) traverse a prefix or suffix gadget while the human is traversing one. 
    Moreover, two prefix (two suffix) gadgets can only be traversed synchronously. More precisely, when the human is at the last (resp. first) vertex of its gadget in $A$, the dog must be at the part of the gadget corresponding to the digit $0$. 
\end{observation}

\mypar{Split gadgets.}
Between two vector gadgets in $P$, the polyline briefly goes vertically down a distance $\ell$ until a horizontal line $\beta'$. The polyline then goes vertically up until $\pi_0$.
To prevent interference with the vector gadgets, this distance $\ell$ between $\pi_0$ and $\beta'$ should be large enough; $\ell=2\rho= 50u$ suffices. This is the $P$-split gadget. 

Between two vector gadgets in $Q$, the polyline does four detours from $\kappa_0$. 
\begin{enumerate}[(i)]
    \item First, it briefly goes vertically down a distance $\ell$ until a horizontal line $\beta$ and returns vertically to $\kappa_0$. 
    \item Then, polyline then goes vertically up a distance $\ell+10u$ until a horizontal line $\alpha$ and returns vertically. 
    \item After that, the polyline goes vertically up again, this time a distance $\ell$ until reaching the horizontal line $\alpha'$ and then it returns vertically to $\kappa_0$.
    \item The polyline continues down vertically surpassing $\beta'$. 
It then goes horizontally left a distance $d$; it suffices to also make $d=2\rho=50u$. 
From there the polyline traces its way back as it came, first horizontally and then vertically. 
\end{enumerate}
This is the $Q$-split gadget.
This gadget is also included at the start of $Q$, and also at the end (there, including just the two first detours suffices). 

\mypar{Goal.}
At the end of $P$ the polyline has a $P$-split gadget and then it goes vertically up a distance $\ell + 30u$ until point $p_n$. 
\begin{observation}
\label{obs:goal}
The vertical distance from $p_n$ to vertices in $Q$ is only smaller than $\rho$ for vertices on $\alpha$.
\end{observation}

\mypar{Positioning gadget.}
The beginning of polyline $P$ consists of the polyline that would correspond to a vector $B$ which was a set of $m-1$ zero vectors with a full $Q$-split gadget at the end. 
The human traversing this positioning gadget allows the dog to place itself at any $Q$-split gadget (except the last half one) and wait there until the human finishes the positioning gadget. 

\begin{claim}
\label{clm:positioning}
There is a trajectory for the dog such that when the human finishes the positioning gadget, the dog stands at the end of the $Q$-split gadget before a given vector gadget in $Q$ and such that the distance between human and dog is less than $\rho$. 
\end{claim}

At the beginning, the dog moves forward at the same speed as the human, thus staying within barking range. 
When in this way the dog reaches the desired $Q$-split gadget, 
instead of traversing Detour (iv), it returns back to point $q$ between Detours (iii) and (iv), unless the desired $Q$-split gadget is the one before the last vector gadget, in which case the dog continues forward. 
After this, when the human is at $\kappa_0$, so is the dog at point $q$ between Detours (iii) and (iv) of the $Q$-split gadget; when the human goes down to $\beta$, the dog does the same using Detour (iv); 
when the human goes up to $\alpha$ so does the dog using Detour (iii), returning to $q$; 
and when the human goes down and to the left, the dogs does the same, returning again to $q$. 
Only when the human does its last detour of the positioning gadget, the dog fully traverses Detour (iv). \qed

With the construction description completed, notice that the polylines $P$ and $Q$ do not intersect.

\mypar{Correctness.}
We show that the answer to the OV problem is \textit{yes} if and only if the answer to the barking decision problem is \textit{yes}. 
Assume that there are two orthogonal vectors $a,b$ in $A$ and $B$, respectively. 
Note that the vectors are still orthogonal after adding the prefixes and suffixes. 
Claim \ref{clm:positioning} guarantees that the dog can stand at the beginning of the gadget corresponding to $a$ when the human reaches the beginning of the gadget corresponding to $b$. Then, human and dog traverse those gadgets simultaneously. 
When the human traverses the next $P$-split gadget, the dog fully traverses Detour (i) of the next $Q$-split gadget. 
Then the dog stays within barking radius at the point between Detours (i) and (ii) or going down along Detour (i) and returning to the same point while the human traverses the remaining gadgets. 
Finally, when the human goes up to $p_n$ the dog follows Detour (ii) until its top point, thus staying always within barking radius $\rho$.

In the other direction, assume for contradiction that there are no orthogonal vectors in $A$ and $B$ 
and the dog always stays within barking radius $\rho$. 
When the human finishes traversing the positioning gadget, which ends with its Detour (iv), 
the dog must just have traversed Detour (iv) of a $Q$-split gadget $Q_b$, since the (leftmost points of) Detours (iv) (both in $P$ and $Q$) are only within barking radius of themselves (and their close vicinity). 
Moreover, after the human has finished the positioning gadget, 
the dog can never again fully traverse Detours (iii) or (iv) of a $Q$-split gadget. 
Thus, after that, the dog is constrained between the two $Q$-split gadgets $Q_b$ and $Q_{b+1}$. 
Furthermore, it can only reach Detours (iii) and (iv) of $Q_b$, Detours (i) and (ii) of $Q_{b+1}$, and the subcurve in-between. 
Whenever the human traverses a $P$-split gadget, 
the dog must be at a $Q$-split gadget, more precisely, $Q_b$ or $Q_{b+1}$. 
Observations~\ref{obs:vectors} and~\ref{obs:prefix-sufix} imply that the dog can only traverse from $Q_b$ or $Q_{b+1}$ if $B$ has an orthogonal vector in $A$. 
Thus, it cannot reach Detour (ii) of $Q_{b+1}$. 
However, Observation~\ref{obs:goal} implies that only points in the Detour (ii) of a $Q$-split gadget are within barking distance of $q_n$. Thus, we reached a contradiction.

We remark that in the reduction as presented, the dog does not necessarily end at its final position. If we would desire that, we can just add a gadget to the human curve at the end analogous to the positioning gadget. 
As before, $p_n$ can be reached by a trajectory that stays within barking radius if and only if there are two orthogonal vectors in $A$ and $B$. 
The appended positioning gadget in $P$ (same as the one described minus the two first detours) allows the dog to reach the end of $Q$ analogously to the proof of Claim~\ref{clm:positioning}. 
\end{proof}

\section{Conclusion}
We introduced the barking distance to detect outliers in curves and have shown algorithms to compute this measure as well as a conditional lower bound for the computation time. In the discrete and semi-discrete setting, the runtime of our algorithms match the lower bound up to logarithmic factors. For the continuous setting we give an algorithm that is likely not optimal. We believe that using techniques as in Section~\ref{sec:semidiscrete} we can improve the runtime to $O(nm^3\log m)$, but this would still leave a significant gap 
between upper and lower bound.
While we conjecture that it is possible to obtain an $O(nm\log(nm))$ algorithm, it is likely that new ideas are necessary for this.
It would also be interesting to find more efficient algorithms in the continuous setting for restricted types of curves such as time series.
Throughout our paper, we assumed that the barking radius $\rho$ and the speed bound $s$ are fixed. Considering them as variables leads to other interesting algorithmic problems 
where we ask for the minimal speed or barking radius required for the dog such that the human can hear it the entire time. We leave the study of these problems for future work.%

\bibliography{references}

\end{document}